\documentclass[10pt, twocolumn, journal]{IEEEtran}

\usepackage{amsmath, amsthm, amssymb, mathtools, bm}  
\usepackage{graphicx}
\usepackage{stfloats}      
\usepackage{array}
\usepackage{multirow}
\usepackage[font=small,labelfont=bf]{caption}
\usepackage{subcaption}
\usepackage[inline,shortlabels]{enumitem}  
\usepackage{microtype}  
\usepackage[colorlinks=true, allcolors=blue]{hyperref}
\usepackage{orcidlink}
\usepackage{pgfplots}
\pgfplotsset{compat=1.17}

\setlength{\columnsep}{.2in}        
\addtolength{\textwidth}{4mm}       
\addtolength{\hoffset}{-2mm}
\addtolength{\textheight}{4mm}
\addtolength{\voffset}{-2mm}

\setlength{\textfloatsep}{6pt plus 1pt minus 1pt}
\setlength{\floatsep}{4pt plus 1pt minus 1pt}
\setlength{\intextsep}{4pt plus 1pt minus 1pt}

\mathtoolsset{mathic=true}    
\setlength{\abovedisplayskip}{2pt plus 1pt minus 1pt}
\setlength{\belowdisplayskip}{2pt plus 1pt minus 1pt}
\setlength{\abovedisplayshortskip}{1pt}
\setlength{\belowdisplayshortskip}{1pt}
\setlength{\jot}{2pt}         

\usepackage[ruled,vlined]{algorithm2e}
\SetAlFnt{\small}
\SetAlCapFnt{\small}
\SetAlCapNameFnt{\small}
\SetAlgoLined
\DontPrintSemicolon

\newtheorem{theorem}{Theorem}
\newtheorem{lemma}{Lemma}
\newtheorem{proposition}{Proposition}
\theoremstyle{definition}

\newtheorem{corollary}{Corollary}

\DeclareMathOperator{\logit}{logit}

\newcommand{\sig}{\operatorname{sig}}  

\newcommand{\EX}{\mathbb{E}}

\newcommand{\J}{\mathrm{j}}

\setlist{noitemsep,leftmargin=*,topsep=2pt,partopsep=1pt}
\captionsetup[figure]{skip=2pt}
\captionsetup[table]{skip=2pt}

\usepackage{titlesec}
\titlespacing*{\section}{0pt}{0.6ex}{0.4ex}
\titlespacing*{\subsection}{0pt}{0.4ex}{0.2ex}
\titlespacing*{\subsubsection}{0pt}{0.3ex}{0.1ex}

\title{Adaptive Learning for IRS-Assisted Wireless Networks: Securing Opportunistic Communications Against Byzantine Eavesdroppers}

\author{%
  Amirhossein Taherpour~\orcidlink{0000-0003-4647-102X},%
  \thanks{A. Taherpour is with the Department of Electrical Engineering, Columbia University, New York, NY, USA (e-mails: at3532@columbia.edu).}%
  Abbas Taherpour~\orcidlink{0000-0003-0706-5774},~\IEEEmembership{Senior Member,~IEEE},%
  \thanks{A. Taherpour is with the Department of Electrical Engineering, Imam Khomeini International University, Qazvin, Iran (e-mail: taherpour@ikiu.ac.ir).}%
  Tamer Khattab~\orcidlink{0000-0003-2347-9555},~\IEEEmembership{Senior Member,~IEEE}%
  \thanks{T. Khattab is with the Department of Electrical Engineering, Qatar University, Doha, Qatar (e-mail: tkhattab@ieee.org).}%
}

\date{}

\begin{document}

\maketitle
\thispagestyle{empty}

\begin{abstract}
We propose a joint learning framework for Byzantine-resilient spectrum sensing and secure intelligent reflecting surface (IRS)–assisted opportunistic access under channel state information (CSI) uncertainty. The sensing stage performs logit-domain Bayesian updates with trimmed aggregation and attention-weighted consensus, and the base station (BS) fuses network beliefs with a conservative minimum rule, preserving detection accuracy under a bounded number of Byzantine users. Conditioned on the sensing outcome, we pose downlink design as sum mean-squared error (MSE) minimization under transmit-power and signal-leakage constraints and jointly optimize the BS precoder, IRS phase shifts, and user equalizers. With partial (or known) CSI, we develop an augmented-Lagrangian alternating algorithm with projected updates and provide provable sublinear convergence, with accelerated rates under mild local curvature. With unknown CSI, we perform constrained Bayesian optimization (BO) in a geometry-aware low-dimensional latent space using Gaussian process (GP) surrogates; we prove regret bounds for a constrained upper confidence bound (UCB) variant of the BO module, and demonstrate strong empirical performance of the implemented procedure. Simulations across diverse network conditions show higher detection probability at fixed false-alarm rate under adversarial attacks, large reductions in sum MSE for honest users, strong suppression of eavesdropper signal power, and fast convergence. The framework offers a practical path to secure opportunistic communication that adapts to CSI availability while coherently coordinating sensing and transmission through joint learning.
\end{abstract}

\begin{IEEEkeywords}
Bayesian optimization, alternating optimization, intelligent reflecting surfaces (IRS), mean-squared error (MSE) minimization, physical layer security, multiple-input multiple-output (MIMO) systems, channel uncertainty, secure precoding, adaptive equalization.
\end{IEEEkeywords}

\section{Introduction}
\IEEEPARstart{T}{he} rapid proliferation of connected devices, artificial intelligence, and emerging applications such as augmented reality and autonomous systems has significantly increased the demand for faster and more efficient wireless networks~\cite{Chowdhury2020}. Realizing these capabilities requires addressing critical challenges related to spectrum efficiency, coverage, adaptability, and, increasingly, security.

Efficient and secure spectrum utilization is essential in 6G networks, particularly with the adoption of high-frequency millimeter-wave and terahertz bands. While these bands offer abundant bandwidth, they suffer from severe propagation losses and susceptibility to blockages. Traditional static spectrum allocation techniques are inadequate in the dynamic and heterogeneous 6G environment, leading to inefficient resource utilization. To address this, dynamic spectrum sharing (DSS) has emerged as a promising solution, enabling real-time access to spectrum by allowing secondary users (SUs) to opportunistically utilize licensed spectrum without interfering with primary users (PUs)~\cite{Taherpour2024}. However, this openness introduces new vulnerabilities. Malicious SUs may act as Byzantine attackers during spectrum sensing or as eavesdroppers during opportunistic transmission.

To mitigate such threats, secure distributed learning has become a cornerstone of robust DSS frameworks. By enabling collaborative and scalable decision-making, distributed learning adapts to dynamic network conditions while suppressing the influence of compromised or malicious nodes~\cite{Sedighi2013, Liu2020BigData, Xu2018, Jiang2022}. Adaptive learning methods, such as Bayesian learning and Bayesian optimization (BO), not only improve local sensing reliability but also enhance global consensus through probabilistic modeling~\cite{Nedic2017, Nie2017}, even in adversarial environments.

Meanwhile, intelligent reflecting surfaces (IRSs) have emerged as a transformative solution to counter the propagation limitations of high-frequency bands. Consisting of programmable meta-surfaces, an IRS can dynamically reconfigure wireless environments by controlling the phase and amplitude of reflected signals, thereby improving coverage and spectral efficiency without active power consumption~\cite{Wu2024, Shao2022}. IRS-assisted systems have also shown promise in enhancing secrecy~\cite{yu2023learning, Zhang2023}, improving energy efficiency~\cite{huang2020reconfigurable}, and supporting robust aerial platforms~\cite{Wei2023}.

Combining IRS with DSS yields a powerful synergy: while DSS enables adaptive and opportunistic spectrum access, IRS enhances signal quality and reliability through passive beam control. However, this integration must also address security concerns, as compromised SUs can exploit IRS-enhanced links to eavesdrop or disrupt communication. This necessitates secure, learning-based IRS-assisted frameworks that are resilient to malicious interference~\cite{Hameed2023, Saad2021}.

In this work, we investigate a secure IRS-assisted learning-based opportunistic communication framework for distributed wireless networks. Our approach jointly addresses the challenges of spectrum efficiency, environmental uncertainty, and adversarial behavior through Bayesian distributed learning and robust decision mechanisms.

\subsection{Related Works}
Recent research has explored machine learning techniques to enhance spectrum sensing in dynamic wireless environments. In~\cite{BB69}, a K-means clustering algorithm improved robustness in high-noise and mobile scenarios, while~\cite{BB75} proposed a stacked autoencoder that effectively extracted features from orthogonal frequency-division multiplexing (OFDM) signals. These works highlight the potential of unsupervised learning for robust signal detection.

Bayesian methods have been leveraged to improve spectrum efficiency and reduce overhead. In~\cite{BB70}, a Bayesian hierarchical model combined Gaussian processes and Markov chain Monte Carlo (MCMC) to capture spatial fading correlations. Meanwhile,~\cite{BB72} proposed a sparse Bayesian learning algorithm with variational inference for millimeter-wave multiple-input multiple-output (MIMO) channel estimation, significantly reducing pilot overhead.

Deep learning approaches have also addressed temporal variation and domain adaptation. A convolutional neural network in~\cite{BB76} captured spatial and temporal PU activity patterns under noise uncertainty. In~\cite{BB77}, transfer learning with unsupervised domain adaptation mitigated performance degradation under mismatched training-test conditions.

IRS-assisted communication has gained attention for enhancing performance in uncertain or constrained scenarios. In~\cite{BB73}, variational Bayesian inference achieved near-optimal IRS-assisted communication under dynamic conditions. For energy-constrained IoT systems,~\cite{BB74} proposed a scheme that models sporadic user activity and reduces pilot signaling.

Opportunistic communication using IRSs has also been studied to reduce signaling complexity. In~\cite{Yashvanth2023}, near-optimal throughput was achieved without explicit channel state information (CSI), and~\cite{Xu2023, Wang2022} introduced blind beamforming techniques that require only statistical signal power measurements for multi-IRS coordination.

Despite these advances, most prior efforts assume fully trusted users and overlook adversaries capable of corrupting sensing data or exploiting IRS-enhanced transmissions. In contrast, this work develops a secure, learning-based IRS-assisted framework for distributed wireless networks resilient to Byzantine attacks and eavesdropping.

\subsection{Motivation and Contributions}
Securing IRS-assisted opportunistic networks in the presence of Byzantine users and uncertain CSI requires sensing and transmission designs that operate coherently under adversarial, high-dimensional conditions. This work addresses that need with a single learning-based framework that couples resilient distributed sensing with secrecy-aware downlink optimization.

\begin{enumerate}[leftmargin=*,label=\textbf{\arabic*.}]
\item \textbf{Byzantine-resilient distributed sensing.} We design a logit-domain Bayesian update with trimmed aggregation and attention-weighted consensus, fused at the base station (BS) through a conservative min rule. Under standard graph robustness and degree conditions (e.g., $|\mathcal{N}_k|\!\ge\!2K_B{+}1$ for honest neighbors), the scheme preserves high detection probability with bounded adversarial presence and admits convergence/robustness guarantees.

\item \textbf{Unified secure IRS-assisted transmission.} We formulate the downlink design as sum mean-squared error (MSE) minimization for honest users under transmit-power and signal-leakage constraints, jointly optimizing the BS precoder, IRS phases, and receiver equalizers. We analyze the Karush–Kuhn–Tucker (KKT) conditions, provide explicit first-order expressions, and derive closed-form \emph{structures} in the high signal-to-noise ratio (SNR) and single-eavesdropper regimes (e.g., weighted ridge–nulling), together with informative performance bounds.

\item \textbf{Learning across CSI regimes.} With partial (or known) CSI, we propose an augmented-Lagrangian alternating algorithm with projected updates for the unit-modulus IRS and establish sublinear convergence under deterministic (full-CSI) gradients, as well as convergence-in-expectation under standard stochastic-gradient assumptions; accelerated rates hold under mild local curvature. With unknown CSI, we perform constrained BO in a geometry-aware low-dimensional latent space using Gaussian process (GP) surrogates; we provide regret bounds for a constrained upper confidence bound (UCB) BO variant and report strong empirical performance of the implemented procedure.

\item \textbf{Validated system gains.} Extensive simulations demonstrate fast convergence, strong resilience to Byzantine attacks, large reductions in sum MSE for honest users, suppressed leakage toward eavesdroppers, and improved secure wireless power transfer (WPT) efficiency, approaching full-CSI benchmarks and surpassing recent baselines.
\end{enumerate}

The remainder of this paper is organized as follows. Section~\ref{sec:system} introduces the system model, including the IRS-assisted downlink transmission and the distributed spectrum sensing setup. Section~\ref{sec:detection} describes the robust distributed learning and decision making process in the presence of Byzantine users. Section~\ref{sec:problem} formulates the secure IRS-assisted transmission as a constrained optimization problem. Section~\ref{sec:optimization} analyzes the problem and presents closed-form solutions for special cases. Section~\ref{sec:alternating} proposes a learning algorithm based on alternating optimization for scenarios with partially known CSI. Section~\ref{sec:bayesian} develops a Bayesian optimization framework for cases with unknown CSI. Section~\ref{sec:sim} reports simulation results, and Section~\ref{sec:conclusion} concludes the paper.


\section{System Model and Assumptions}
\label{sec:system}
\begin{figure}
    \centering
    \includegraphics[width=0.6\columnwidth, height=0.5\columnwidth]{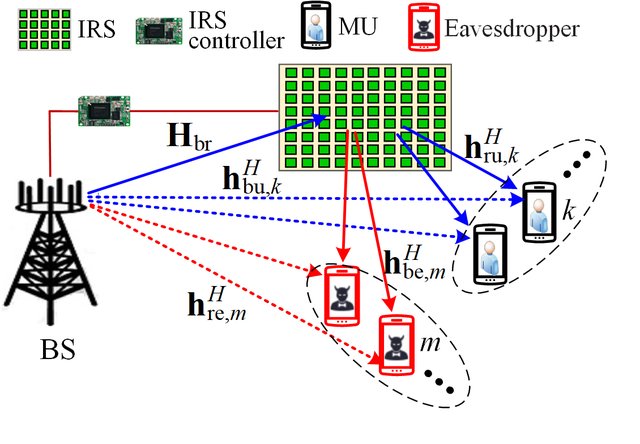}
    \caption{IRS-Assisted Cognitive Radio Network with Byzantine Users}
    \label{fig_system}
\end{figure}

We consider an IRS-assisted cognitive radio network consisting of a BS equipped with $M$ transmit antennas, an IRS with $N$ reconfigurable reflecting elements, and $K$ single-antenna SUs. The SUs include $K_H$ honest users and $K_B$ Byzantine adversaries, such that $K = K_H + K_B$. The secondary network operates under a spectrum-sharing regime with licensed PUs, who have prioritized access to the spectrum.

We define the set of all SUs as $\mathcal{K} = \{1, \ldots, K\}$, the set of Byzantine users as $\mathcal{K}_B \subset \mathcal{K}$ with $|\mathcal{K}_B| = K_B$, and the set of honest users as $\mathcal{K}_H = \mathcal{K} \setminus \mathcal{K}_B$ with $|\mathcal{K}_H| = K_H$. To ensure a strict majority of honest users, we assume $K_B \leq \left\lfloor \frac{K}{2} \right\rfloor - 1$. The set $\mathcal{N}_k \subset \mathcal{K}$ denotes the neighborhood of user $k$, i.e., the set of users that directly exchange information with user $k$ in the network graph.

Each adversarial SU $k \in \mathcal{K}_B$ is assumed to have full knowledge of the network graph $\mathcal{G} = (\mathcal{K}, \mathcal{E})$, as well as the decision-making algorithm and the current global state. These users may report arbitrary decision values $\delta_k(t) \in [0, 1]$ with the intent to mislead the fusion process.

The IRS is characterized by a diagonal phase-shift matrix $\bm{\Phi} = \operatorname{diag}(e^{\jmath\theta_1}, \ldots, e^{\jmath\theta_N})$, where $\theta_n \in [0, 2\pi)$ denotes the phase applied by the $n$-th IRS element.

The system operates in two main phases: spectrum sensing and opportunistic transmission. During the spectrum-sensing phase, all SUs collaborate to detect PU activity using energy detection over $J$ samples. If the spectrum is deemed idle, the system enters the transmission phase. The BS transmits data to honest users while Byzantine users attempt to eavesdrop. The BS employs linear precoding with a matrix $\bm{W} = [\bm{w}_1, \ldots, \bm{w}_{K_H}] \in \mathbb{C}^{M \times K_H}$, where $\bm{w}_k \in \mathbb{C}^{M \times 1}$ is the precoding vector intended for the $k$-th honest user. Each honest SU applies a complex equalization coefficient, and the overall equalizer matrix is given by $\bm{C} = \operatorname{diag}(c_1, \ldots, c_{K_H}) \in \mathbb{C}^{K_H \times K_H}$, where $c_k \in \mathbb{C}$ denotes the equalizer used by the $k$-th honest user.

The wireless channels are modeled as follows: $\bm{H} \in \mathbb{C}^{N \times M}$ denotes the channel matrix from the BS to the IRS, $\bm{h}_k \in \mathbb{C}^{N \times 1}$ denotes the channel vector from the IRS to the $k$-th honest user, and $\bm{g}_e \in \mathbb{C}^{N \times 1}$ denotes the channel vector from the IRS to the $e$-th eavesdropper, where $k = 1, \ldots, K_H$ and $e = 1, \ldots, K_B$. Fig.~\ref{fig_system} illustrates the IRS-assisted cognitive radio network under study, showing both honest and Byzantine secondary users in the system.

The received signal at the $k$-th honest user is given by
\begin{equation}
    y_k = \underbrace{\bm{h}_k^\mathrm{H} \bm{\Phi} \bm{H}}_{\triangleq \bm{h}_{\mathrm{eff},k}^\mathrm{H}} \bm{W} \bm{s} + n_k,
\end{equation}
where $\bm{s} \sim \mathcal{CN}(\bm{0}, \bm{I}_{K_H})$ is the transmitted symbol vector, and $n_k \sim \mathcal{CN}(0, \sigma_k^2)$ is the additive white Gaussian noise (AWGN) at the $k$-th user. The effective channel between the BS and the $k$-th user, incorporating the IRS, is thus defined as
\begin{equation}\label{eq:sss}
    \bm{h}_{\mathrm{eff},k} = \bm{H}^\mathrm{H} \bm{\Phi}^\mathrm{H} \bm{h}_k.
\end{equation}
For the $e$-th eavesdropper, the received signal is
\[
    y_e = \bm{g}_e^\mathrm{H} \bm{\Phi} \bm{H} \bm{W} \bm{s} + n_e,\quad n_e \sim \mathcal{CN}(0,\sigma_e^2),
\]
and the corresponding effective channel is
\[
    \bm{h}_{\mathrm{eff},e} \triangleq \bm{H}^\mathrm{H}\bm{\Phi}^\mathrm{H}\bm{g}_e \in \mathbb{C}^{M\times 1}.
\]
The effective channels of all honest users are collected in the matrix
\[
    \bm{H}_{\mathrm{eff}} \triangleq [\bm{h}_{\mathrm{eff},1},\ldots,\bm{h}_{\mathrm{eff},K_H}] \in \mathbb{C}^{M\times K_H}.
\]

Assuming linear equalization at the receiver with scalar coefficient $c_k$, the MSE at the $k$-th honest user is defined as
\begin{align} \label{eq:MSEk}
    \mathrm{MSE}_k 
    &= \mathbb{E} \left[ \left| c_k y_k - s_k \right|^2 \right]  \nonumber \\
    &= |c_k|^2 \left( \left| \bm{h}_{\mathrm{eff},k}^\mathrm{H} \bm{w}_k \right|^2 + \sum_{j\neq k}^{K_H} \left| \bm{h}_{\mathrm{eff},k}^\mathrm{H} \bm{w}_j \right|^2 + \sigma_k^2 \right) \nonumber \\
    &\quad - 2 \Re \left\{ c_k \bm{h}_{\mathrm{eff},k}^\mathrm{H} \bm{w}_k \right\} + 1,
\end{align}
where the expectation is taken over both the signal and noise distributions.

For the $e$-th eavesdropper, the received signal power is used as a proxy for information leakage and is given by
\begin{align}
    P_e^{(\mathrm{eav})} 
    &= \mathbb{E} \left[ \left| y_e \right|^2 \right] \nonumber \\
    &= \bm{g}_e^\mathrm{H} \bm{\Phi} \bm{H} \bm{W} \bm{W}^\mathrm{H} \bm{H}^\mathrm{H} \bm{\Phi}^\mathrm{H} \bm{g}_e + \sigma_e^2 \nonumber \\
    &= \sum_{k=1}^{K_H} \left| \bm{g}_e^\mathrm{H} \bm{\Phi} \bm{H} \bm{w}_k \right|^2 + \sigma_e^2,
\end{align}
We focus on controlling the \emph{signal} component of the eavesdropper's received power, defined as $P_e^{(\mathrm{sig})} \triangleq \sum_{k=1}^{K_H} \big| \bm{g}_e^\mathrm{H} \bm{\Phi} \bm{H} \bm{w}_k \big|^2$, which excludes the uncontrollable receiver noise. As will be shown in subsequent sections, both the constraint in \eqref{eq:leakage_constraint} and the penalty term in \eqref{eq:objective_function} are formulated in terms of $P_e^{(\mathrm{sig})}$.

\section{Bayesian Learning and Decision with Byzantine Resilience}
\label{sec:detection}

\begin{algorithm}[!t]
\caption{Distributed Byzantine-Resilient Belief Update}
\label{alg:ml_byzantine_attention}
\KwIn{Initial beliefs $\{\pi_k(0)\}_{k \in \mathcal{K}}$; number of rounds $\mathcal{R}$}
\KwOut{Final local belief $\delta_k(\mathcal{R})$ for each SU $k$}

\textbf{Initialization:} \\
Each SU $k$ sets logit $\psi_k(0) = \ln\!\left( \frac{\pi_k(0)}{1 - \pi_k(0)} \right)$ and initializes $\delta_k(0) = \pi_k(0)$.

\For{$t = 1, 2, \dots, \mathcal{R}$}{
    \ForEach{SU $k \in \mathcal{K}$ \textbf{(in parallel)}}{
        Observe local measurement $\boldsymbol{y}_k(t)$; \\
        Compute log-likelihood: $\ell_k(t) = \ln L_k(\boldsymbol{y}_k(t))$; \\
        Update logit: $\psi_k(t) = \psi_k(t{-}1) + \ell_k(t)$; \\
        Compute belief: $\pi_k(t) = \sig(\psi_k(t))$; \\
        
        \uIf{$|\mathcal{N}_k| \geq 2K_B + 1$}{
            Receive $\delta_i(t{-}1)$ from $i \in \mathcal{N}_k$; \\
            Sort $\{\delta_i(t{-}1)\}_{i \in \mathcal{N}_k}$ and remove the $K_B$ largest and $K_B$ smallest values; \\
            Let $\hat{\mathcal{N}}_k$ be the trimmed set; \\
            $\delta_k(t) = \min\left( \pi_k(t), \frac{1}{|\hat{\mathcal{N}}_k|} \sum_{i \in \hat{\mathcal{N}}_k} \delta_i(t{-}1) \right)$;
        }
        \Else{
            $\delta_k(t) = \pi_k(t)$;
        }
        Broadcast $\delta_k(t)$ to neighbors;
    }
}
\end{algorithm}

This section presents a Byzantine-resilient collaborative spectrum sensing framework based on Bayesian learning. Each SU maintains and updates a belief about the PU's activity while mitigating the effect of adversarial users through robust local aggregation.

\subsection{Distributed Learning and Decision}

Let $\pi_k(t) \in (0,1)$ denote the posterior belief of user $k \in \mathcal{K}$ regarding PU presence at time $t$. Upon observing a local measurement $\boldsymbol{y}_k(t)$, the belief is updated according to Bayes' rule in (\ref{eq:belief_ratio_form}):
\begin{figure*}[!b]        
  \hrulefill               
  \vspace{1ex}             
  \begin{equation}
    \pi_k(t)
    = \frac{\pi_k(t{-}1)\,g_k\bigl(\boldsymbol{y}_k(t)\mid\mathcal{H}_1\bigr)}
           {(1-\pi_k(t{-}1))\,g_k\bigl(\boldsymbol{y}_k(t)\mid\mathcal{H}_0\bigr)
            + \pi_k(t{-}1)\,g_k\bigl(\boldsymbol{y}_k(t)\mid\mathcal{H}_1\bigr)}
    = \frac{\pi_k(t{-}1)\,L_k\bigl(\boldsymbol{y}_k(t)\bigr)}
           {1-\pi_k(t{-}1) + \pi_k(t{-}1)\,L_k\bigl(\boldsymbol{y}_k(t)\bigr)},
    \label{eq:belief_ratio_form}
  \end{equation}
\end{figure*}
where $g_k(\cdot \mid \mathcal{H}_0)$ and $g_k(\cdot \mid \mathcal{H}_1)$ are the likelihood functions under hypotheses $\mathcal{H}_0$ (PU absent) and $\mathcal{H}_1$ (PU present), and the likelihood ratio is defined as:
\begin{equation}
    L_k(\boldsymbol{y}_k(t)) \triangleq \frac{g_k(\boldsymbol{y}_k(t) \mid \mathcal{H}_1)}{g_k(\boldsymbol{y}_k(t) \mid \mathcal{H}_0)}. \label{eq:likelihood_ratio}
\end{equation}

An alternative representation of the belief update uses the inverse form:
\begin{equation}
    \pi_k^{-1}(t) = 1 + \left( \pi_k^{-1}(t{-}1) - 1 \right) \cdot L_k^{-1}(\boldsymbol{y}_k(t)), \label{eq:inverse_belief}
\end{equation}
where $(\cdot)^{-1}$ denotes the reciprocal (e.g., $L_k^{-1}(y)=1/L_k(y)$).

To improve numerical stability and enable efficient ML-based implementations, we adopt the logit-domain update. The logit $\psi_k(t)$ denotes the log-odds of PU presence:
\begin{IEEEeqnarray}{rCl}
    \psi_k(t) &=& \psi_k(t{-}1) + \ell_k(t), \label{eq:logit_update}
\end{IEEEeqnarray}
where
\begin{IEEEeqnarray}{rCl}
    \psi_k(t) &\triangleq& \logit(\pi_k(t)) = \ln\!\left(\frac{\pi_k(t)}{1 - \pi_k(t)}\right), \label{eq:logit_def} \\
    \ell_k(t) &\triangleq& \ln L_k(\boldsymbol{y}_k(t)) \label{eq:log_likelihood}
\end{IEEEeqnarray}

The belief can be recovered from the logit via the logistic sigmoid function:
\begin{equation}
    \pi_k(t) = \sig(\psi_k(t)) = \frac{1}{1 + \exp(-\psi_k(t))}. \label{eq:logistic}
\end{equation}

To ensure Byzantine robustness, each SU maintains both the local belief $\pi_k(t)$ and a shared decision variable $\delta_k(t)$ used for robust consensus. The update rule for $\delta_k(t)$ depends on the neighborhood size:
\begin{equation}
  {\small
  \delta_k(t)=
  \begin{cases}
    \pi_k(t), & \text{if }|\mathcal{N}_k|<2K_B+1,\\[0.8ex]
    \displaystyle
      \min\!\Bigl(\pi_k(t),\,
        \tfrac{1}{|\hat{\mathcal{N}}_k|}
        \sum_{i\in\hat{\mathcal{N}}_k}\delta_i(t-1)\Bigr),
      & \text{otherwise.}
  \end{cases}
  }
  \label{eq:shared_update}
\end{equation}
where $\hat{\mathcal{N}}_k$ is the pruned set of neighbors obtained by removing the $K_B$ largest and $K_B$ smallest values from $\{\delta_i(t{-}1)\}_{i \in \mathcal{N}_k}$ (so $|\hat{\mathcal{N}}_k|\ge 1$ when $|\mathcal{N}_k|\ge 2K_B{+}1$). This trimmed mean suppresses extreme values to mitigate the influence of up to $K_B$ Byzantine nodes and the $\min(\cdot,\cdot)$ cap prevents adversaries from inflating the decision above the local posterior.

The distributed learning is summarized in Algorithm~\ref{alg:ml_byzantine_attention}. 

The convergence and robustness properties of the above update under Byzantine attacks are formally analyzed in Appendix~A.

\subsection{Final Collaborative Decision with Attention and Min Rule}
\label{subsec:final_decision}

To reduce overhead and enable real-time decision making, the BS performs a single measurement at the end of $\mathcal{R}$ belief sharing rounds. Here, $i_F$ denotes the frame index with $i_F = 1, 2, \ldots$, representing successive sensing frames. In each frame, sensing is carried out, and if an opportunity is detected, opportunistic transmission is performed.

Let $\ell_{BS}(i_F)$ be the log likelihood based on the BS’s local measurement. The BS maintains its logit-based belief as:
\begin{equation}
    \psi_{BS}(i_F) = \psi_{BS}(i_F{-}1) + \ell_{BS}(i_F),
    \label{eq:bs_logit_update}
\end{equation}
and similar to SUs, recovers the belief via the sigmoid function.

The BS then collects the final beliefs $\{\delta_k(\mathcal{R})\}_{k \in \mathcal{K}}$, trims the $K_B$ smallest and $K_B$ largest values, and defines the set of remaining trusted users as $\tilde{\mathcal{K}}_H \subseteq \mathcal{K}$.

The average of the trusted beliefs is:
\begin{equation}
    \bar{\delta} = \frac{1}{|\tilde{\mathcal{K}}_H|} \sum_{j \in \tilde{\mathcal{K}}_H} \delta_j(\mathcal{R}).
\end{equation}

To emphasize users close to the consensus, attention weights are computed using a softmax-style function:
\begin{equation}
    \alpha_k = \frac{\exp\left(-\frac{(\delta_k(\mathcal{R}) - \bar{\delta})^2}{\tau_a}\right)}{\sum_{j \in \tilde{\mathcal{K}}_H} \exp\left(-\frac{(\delta_j(\mathcal{R}) - \bar{\delta})^2}{\tau_a}\right)}, \quad \forall k \in \tilde{\mathcal{K}}_H,
\end{equation}
where $\tau_a > 0$ is the attention temperature that controls the sharpness of the weighting.

The BS computes the attention-weighted consensus:
\begin{equation}
    \delta_{BS}^{\mathrm{att}}(i_F) = \sum_{k \in \tilde{\mathcal{K}}_H} \alpha_k \, \delta_k(\mathcal{R}).
\end{equation}

To ensure robustness, the BS fuses its own belief with the collaborative input using a conservative min-rule:
\begin{equation}
    \delta_{BS}(i_F) = \min \left\{ \delta_{BS}^{\mathrm{att}}(i_F), \; \pi_{BS}(i_F) \right\}.
    \label{eq:bs_fusion_attention_min}
\end{equation}

The final decision is made by thresholding:
\begin{equation}
    \mathcal{H}^* = 
    \begin{cases}
        \mathcal{H}_1, & \text{if } \delta_{BS}(i_F) \geq \tau, \\
        \mathcal{H}_0, & \text{otherwise},
    \end{cases}
    \label{eq:final_decision_updated}
\end{equation}
where $\tau \in (0, 1)$ is the decision threshold.

\section{Opportunistic Secure Transmission Optimization}
\label{sec:problem}
In Section~\ref{sec:detection}, we designed a Byzantine-resilient, logit-based spectrum-sensing protocol that fuses trimmed local beliefs into a reliable decision on PU activity.  In this section, we leverage that detection outcome to formulate and solve a secure transmission problem: jointly optimizing the BS precoder, IRS phase-shift matrix, and SU equalizers to minimize sum-MSE at honest users while enforcing both a total power budget and a hard cap on signal leakage toward Byzantine eavesdroppers.

We aim to jointly optimize the BS precoding matrix $\boldsymbol{W} \in \mathbb{C}^{M \times K_H}$, the IRS phase shift matrix $\boldsymbol{\Phi} = \operatorname{diag}(e^{\J \theta_1}, \ldots, e^{\J \theta_N})$, and the receiver equalizer coefficients $\boldsymbol{C} = \operatorname{diag}(c_1, \ldots, c_{K_H})$, to minimize a composite objective function that incorporates communication reliability, secrecy enhancement, and resource constraints. The secure transmission optimization problem is formulated as
\begin{alignat}{2}
  \min_{\boldsymbol{W},\boldsymbol{\Phi},\boldsymbol{C}}\ & \mathcal{F}(\boldsymbol{W},\boldsymbol{\Phi},\boldsymbol{C}) \label{eq:secure_opt_obj} \\
  \text{s.t.}\ & \sum_{e=1}^{K_B}\sum_{k=1}^{K_H} \big| \boldsymbol{g}_e^{\!H} \boldsymbol{\Phi} \boldsymbol{H} \boldsymbol{w}_k \big|^2 
    \le \Gamma_{\mathrm{leak}}, \label{eq:leakage_constraint} \\
  & \sum_{k=1}^{K_H} \|\boldsymbol{w}_k\|^2 \le P_{\mathrm{max}}, \label{eq:power_constraint} \\
  & |[\boldsymbol{\Phi}]_{n,n}|=1,\ \forall n=1,\ldots,N, \label{eq:unit_modulus_constraint} \\
  & c_k\in\mathbb{C},\ \forall k=1,\ldots,K_H. \label{eq:ck_constraint}
\end{alignat}

Here, $P^{(\mathrm{sig})}_e \triangleq \sum_{k=1}^{K_H} \big| \boldsymbol{g}_e^\mathrm{H} \boldsymbol{\Phi} \boldsymbol{H} \boldsymbol{w}_k \big|^2$ is the \emph{signal-leakage} power toward eavesdropper $e$. Note that the total received power at the eavesdropper is $P_e^{(\mathrm{eav})} = P_e^{(\mathrm{sig})} + \sigma_e^2$; the controllable part is $P^{(\mathrm{sig})}_e$, which appears in both the constraint and the (optional) penalty below.

The objective function $\mathcal{F}(\boldsymbol{W}, \boldsymbol{\Phi}, \boldsymbol{C})$ quantifies the trade-off between communication reliability, eavesdropping protection, and power consumption, and is expressed as
\begin{equation}
\begin{split}
\mathcal{F}(\boldsymbol{W},\boldsymbol{\Phi},\boldsymbol{C})
&= \sum_{k=1}^{K_H} \mathrm{MSE}_k(\boldsymbol{W},\boldsymbol{\Phi},c_k)
  + \lambda \sum_{e=1}^{K_B} P^{(\mathrm{sig})}_e\\
&\quad + \mu \|\boldsymbol{W}\|_F^2,
\end{split}
\label{eq:objective_function}
\end{equation}

where $\lambda$ and $\mu$ are positive weighting factors that balance secrecy pressure, reliability, and power regularization. The leakage penalty uses the \emph{non-decreasing} function $P^{(\mathrm{sig})}_e$, which correctly discourages leakage; the hard cap \eqref{eq:leakage_constraint} further prevents approaching high-leakage solutions.

\textbf{Remark:} In \eqref{eq:objective_function}, when $\boldsymbol{C} = \boldsymbol{I}$ (i.e., no receiver equalization), the MSE terms simplify to received-power terms, and the formulation reduces to a secure WPT/SWIPT design where the secrecy-aware power shaping is governed by the joint presence of the \emph{soft} leakage penalty $\lambda \sum_e P^{(\mathrm{sig})}_e$ and the \emph{hard} leakage constraint \eqref{eq:leakage_constraint}.

\section{Optimization Problem Analysis and Closed-Form Solutions}
\label{sec:optimization}
We analyze the structure of the general optimization problem and its properties, and derive closed-form solutions for special cases with analytical tractability. These serve as benchmarks and offer intuition on the behavior of the proposed learning algorithms.

We focus on the secure transmission problem in \eqref{eq:secure_opt_obj}, which minimizes the total MSE at honest users subject to secrecy and power constraints, with variables \(\mathcal{X}=(\boldsymbol{W},\boldsymbol{\Phi},\boldsymbol{C})\).

To handle the non-convex constraints, we formulate the Lagrangian as
\begin{equation}
\begin{aligned}
\mathcal{L}(\mathcal{X}, \bm{\lambda})
&= \sum_{k=1}^{K_H} \mathrm{MSE}_k(\mathcal{X})
   + \mu \|\bm{W}\|_F^2
   + \lambda \sum_{e=1}^{K_B} P_e^{(\mathrm{sig})}(\mathcal{X}) \\[0.5ex]
&\quad + \lambda_1\Bigl(
       \sum_{e=1}^{K_B} P_e^{(\mathrm{sig})}(\mathcal{X})
       - \Gamma_{\mathrm{leak}}
   \Bigr) \\[0.5ex]
&\quad + \lambda_2\Bigl(
       \|\bm{W}\|_F^2
       - P_{\mathrm{max}}
   \Bigr)\,,
\end{aligned}
\label{eq:lagrangian}
\end{equation}
where $\lambda,\mu>0$ are weighting factors in the objective and $\lambda_1, \lambda_2 \geq 0$ are dual variables associated with the secrecy-leakage and transmit-power constraints, respectively. Here,
\[
P_e^{(\mathrm{sig})}(\mathcal{X}) \triangleq \sum_{k=1}^{K_H}
\big|\bm{g}_e^{\mathrm{H}} \bm{\Phi} \bm{H} \bm{w}_k\big|^2
\]
denotes the \emph{signal} component of the eavesdropper's received power (the controllable part).

The necessary optimality conditions are given by the KKT system. The stationarity condition with respect to $\bm{W}$ yields
\begin{equation}
\sum_{k=1}^{K_H} \nabla_{\bm{W}} \mathrm{MSE}_k
\;+\; 2(\mu+\lambda_2)\,\bm{W}
\;+\; (\lambda+\lambda_1) \sum_{e=1}^{K_B} \nabla_{\bm{W}} P_e^{(\mathrm{sig})}
= \bm{0}.
\label{eq:kkt_w}
\end{equation}

For the IRS phase variables $\theta_n$, the stationarity condition is obtained by differentiating the Lagrangian with respect to each $\theta_n$, which gives
\begin{equation}
\frac{\partial}{\partial \theta_n} \!\left[ \sum_{k=1}^{K_H} \mathrm{MSE}_k(\mathcal{X}) \right]
\;+\; (\lambda+\lambda_1) \sum_{e=1}^{K_B}
\frac{\partial P_e^{(\mathrm{sig})}(\mathcal{X})}{\partial \theta_n} = 0,
\quad \forall n.
\label{eq:kkt_phi}
\end{equation}

For the equalizer coefficients $c_k$, using Wirtinger calculus with respect to $c_k^{*}
$ gives the first-order condition
\begin{equation}
\begin{split}
\frac{\partial\,\mathrm{MSE}_k}{\partial c_k^*}
&= c_k \Bigl(\|\bm{h}_{\mathrm{eff},k}^{H}\bm{W}\|_2^2 + \sigma_k^2\Bigr)
  - \bm{h}_{\mathrm{eff},k}^{H}\bm{w}_k \\[0.5ex]
&= 0, \quad \forall\,k,
\end{split}
\label{eq:kkt_ck}
\end{equation}
where $\bm{h}_{\mathrm{eff},k} \triangleq \bm{H}^\mathrm{H}\bm{\Phi}^\mathrm{H}\bm{h}_k$ denotes the effective channel between the BS and user $k$, including the IRS path.

\noindent\textit{Explicit form of $\sum_k\nabla_{\bm{W}}\mathrm{MSE}_k$:}
keeping $(\bm{\Phi},\bm{C})$ fixed and letting
$\bm{H}_{\mathrm{eff}}\!\triangleq\![\bm{h}_{\mathrm{eff},1},\ldots,\bm{h}_{\mathrm{eff},K_H}]$,
one obtains
\begin{equation}
\sum_{k=1}^{K_H}\!\nabla_{\bm{W}}\mathrm{MSE}_k
\;=\; 2\,\bm{H}_{\mathrm{eff}}\!\,\mathrm{diag}(|\bm{c}|^2)\,\bm{H}_{\mathrm{eff}}^{\!H}\bm{W}
\;-\; 2\,\bm{H}_{\mathrm{eff}}\bm{C}^{H},
\label{eq:gradW_MSE}
\end{equation}
where $\bm{c}=[c_1,\ldots,c_{K_H}]^\top$.

The complementary slackness conditions are written as
\begin{align}
\lambda_1 \left( \sum_{e=1}^{K_B} P_e^{(\mathrm{sig})}(\mathcal{X}) - \Gamma_{\mathrm{leak}} \right) &= 0,
\label{eq:kkt_leak} \\
\lambda_2 \left( \|\bm{W}\|_F^2 - P_{\mathrm{max}} \right) &= 0.
\label{eq:kkt_power}
\end{align}

The conditions \eqref{eq:kkt_w}--\eqref{eq:kkt_power} together characterize any stationary point of the constrained problem under mild regularity assumptions. The optimization problem is inherently non-convex due to the unit-modulus structure of $\bm{\Phi}$ and the bilinear coupling between $\bm{W}$ and $\bm{\Phi}$ in both the MSE and secrecy-leakage terms. Moreover, the signal-leakage functions $P_e^{(\mathrm{sig})}(\mathcal{X})$ are quadratic in $\bm{W}$ and nonlinear in $\bm{\Phi}$, further complicating the structure.

Despite the non-convexity, the structure of the problem allows for decomposition into more tractable subproblems. For example, fixing $(\bm{\Phi}, \bm{C})$ reduces the optimization over $\bm{W}$ to a regularized quadratic program, while fixing $(\bm{W}, \bm{C})$ yields a unit-modulus constrained problem in $\bm{\Phi}$, which can be handled via manifold optimization or successive convex approximation. Similarly, fixing $(\bm{W}, \bm{\Phi})$ leads to a decoupled update for each $c_k$, with the closed-form solution obtained from \eqref{eq:kkt_ck}:
\begin{equation}
c_k^{\star} = \frac{ \bm{h}_{\mathrm{eff},k}^\mathrm{H} \bm{w}_k }{ \sum_{j=1}^{K_H} \left| \bm{h}_{\mathrm{eff},k}^\mathrm{H} \bm{w}_j \right|^2 + \sigma_k^2 }.
\label{eq:ck_closed_form}
\end{equation}

This alternating structure enables efficient block-wise optimization, where the dual variables $\lambda_1, \lambda_2$ are updated via subgradient methods. In practice, convergence to a stationary point is achieved under mild regularity assumptions, and the resulting solutions exhibit favorable performance with respect to both MSE minimization and secrecy-leakage suppression.

\subsection{Closed-Form Solutions for Special Cases}
\label{sec:closed_form_special_cases}

We now investigate analytically tractable regimes where closed-form solutions can be derived. These cases shed light on the structure of the optimal solutions and enable practical system design insights.

\subsubsection{Case 1: High-SNR Regime ($\sigma_k^2 \to 0$)}

In the high-SNR regime, the thermal noise becomes negligible. With the MMSE equalizer $c_k=c_k^\star$ in \eqref{eq:ck_closed_form} and $\sigma_k^2 \to 0$, the per-user MSE reduces to one minus the fraction of desired power to total received power. The joint design therefore reduces to
\begin{align}
\min_{\bm{W}, \bm{\Phi}} \quad 
& \sum_{k=1}^{K_H} \left( 1 - 
\frac{\left|\bm{h}_{\mathrm{eff},k}^\mathrm{H} \bm{w}_k\right|^2}
{\sum_{i=1}^{K_H} \left|\bm{h}_{\mathrm{eff},k}^\mathrm{H} \bm{w}_i\right|^2} \right) \label{eq:high_snr_obj} \\
\text{s.t.} \quad 
& \sum_{e=1}^{K_B} \big\| \bm{g}_e^\mathrm{H} \bm{\Phi} \bm{H} \bm{W} \big\|_2^2 \leq \Gamma_{\mathrm{leak}}, \nonumber \\
& \|\bm{W}\|_F^2 \leq P_{\mathrm{max}}. \nonumber
\end{align}

\begin{lemma}
\label{lem:highsnr}
Under high-SNR conditions and fixed IRS phase matrix $\bm{\Phi}$, a precoder aligned with the dominant \emph{left} singular subspace of the effective concatenated channel is optimal within the interference-free limit. Let the singular value decomposition (SVD) be
\(
\bm{H}_{\mathrm{eff}} = \bm{U}\bm{\Sigma}\bm{V}^\mathrm{H}
\)
with
\(
\bm{H}_{\mathrm{eff}} \triangleq \bm{H}^\mathrm{H} \bm{\Phi}^\mathrm{H} [\bm{h}_1, \dots, \bm{h}_{K_H}]\in\mathbb{C}^{M\times K_H}.
\)
Then an optimal structure is
\begin{equation}
\bm{W}^* = \bm{U}_{:,1:K_H}\,\bm{\Sigma}',
\end{equation}
where $\bm{U}_{:,1:K_H}$ collects the first $K_H$ left singular vectors and $\bm{\Sigma}'$ is a diagonal power-allocation matrix selected to satisfy the transmit-power and secrecy constraints.
\end{lemma}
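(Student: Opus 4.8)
The plan is to exploit the fact that, in the noiseless high-SNR limit, the MMSE-based objective is a sum of terms each confined to $[0,1]$, so its minimum is attained by driving every summand to its interference-free value. First I would rewrite each summand as $1-\rho_k$ with $\rho_k\triangleq |\bm{h}_{\mathrm{eff},k}^{\mathrm H}\bm{w}_k|^2 / \sum_{i}|\bm{h}_{\mathrm{eff},k}^{\mathrm H}\bm{w}_i|^2 \in[0,1]$, so that $\sum_k\mathrm{MSE}_k = K_H-\sum_k\rho_k \ge 0$, with equality if and only if $\bm{h}_{\mathrm{eff},k}^{\mathrm H}\bm{w}_i=0$ for all $i\neq k$. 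Thus the unconstrained optimum is exactly the zero-interference (zero-forcing) condition: $\bm{H}_{\mathrm{eff}}^{\mathrm H}\bm{W}$ is diagonal.

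Next I would confine the search to the range of $\bm{H}_{\mathrm{eff}}$. Decomposing $\bm{w}_k=\bm{w}_k^{\parallel}+\bm{w}_k^{\perp}$ with $\bm{w}_k^{\perp}$ orthogonal to $\operatorname{span}(\bm{U}_{:,1:K_H})$, the component $\bm{w}_k^{\perp}$ is annihilated by every $\bm{h}_{\mathrm{eff},j}^{\mathrm H}$ and therefore leaves all desired and interference terms unchanged, while only inflating $\|\bm{W}\|_F^2$ and the leakage quadratic. Hence an optimizer may be taken with $\bm{w}_k^{\perp}=0$, so (for full column rank $K_H$) $\bm{W}=\bm{U}_{:,1:K_H}\bm{B}$ for some $\bm{B}\in\mathbb{C}^{K_H\times K_H}$. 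Using the SVD identity $\bm{H}_{\mathrm{eff}}^{\mathrm H}\bm{U}_{:,1:K_H}=\bm{V}\bm{\Sigma}_r$, where $\bm{\Sigma}_r=\mathrm{diag}(\sigma_1,\dots,\sigma_{K_H})$ is the leading block of $\bm{\Sigma}$, the coupling matrix becomes $\bm{H}_{\mathrm{eff}}^{\mathrm H}\bm{W}=\bm{V}\bm{\Sigma}_r\bm{B}$, so the interference-free requirement reads: $\bm{V}\bm{\Sigma}_r\bm{B}$ is diagonal.

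I would then specialize to the interference-free limit named in the statement, i.e., the regime in which the effective channels orthogonalize ($\bm{H}_{\mathrm{eff}}^{\mathrm H}\bm{H}_{\mathrm{eff}}$ diagonal, so $\bm{V}$ reduces to a diagonal phase matrix). There the zero-interference condition collapses to ``$\bm{B}$ diagonal,'' and taking $\bm{B}=\bm{\Sigma}'$ yields $\bm{W}^{\star}=\bm{U}_{:,1:K_H}\bm{\Sigma}'$ with each stream matched to a distinct dominant left singular vector; orthogonality of the $\bm{u}_k$ simultaneously nulls interference and maximizes the per-stream desired gain $\sigma_k|\sigma'_k|$. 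More broadly, this step certifies $\bm{U}_{:,1:K_H}$ as the optimal signaling subspace and $\bm{\Sigma}'$ as the corresponding power-allocation parameterization. Finally I would pin down $\bm{\Sigma}'$ from the constraints: with orthonormal columns, $\|\bm{W}\|_F^2=\sum_k|\sigma'_k|^2\le P_{\mathrm{max}}$ and the leakage reduces to the weighted cap $\sum_k|\sigma'_k|^2\beta_k\le\Gamma_{\mathrm{leak}}$ with $\beta_k\triangleq\sum_e|\bm{g}_e^{\mathrm H}\bm{\Phi}\bm{H}\bm{u}_k|^2$, a feasibility problem in the per-stream powers whose active-constraint solution follows from the complementary-slackness relations \eqref{eq:kkt_leak}--\eqref{eq:kkt_power}.

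I expect the specialization step to be the main obstacle: the exact diagonal structure holds only in the orthogonalized limit, since for general $\bm{V}$ the zero-forcing coefficient matrix is $\bm{B}=\bm{\Sigma}_r^{-1}\bm{V}^{\mathrm H}\bm{D}$, which is not diagonal. The defensible reading of the lemma is therefore structural and asymptotic—what I can prove rigorously is that the dominant left singular subspace is optimal and that a diagonal $\bm{\Sigma}'$ becomes exactly optimal once the IRS phase design steers $\bm{H}_{\mathrm{eff}}$ toward column-orthogonality—so I would state the diagonal-precoder conclusion precisely under that interference-free (orthogonal-channel) qualifier rather than for arbitrary $\bm{\Phi}$.
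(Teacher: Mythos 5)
Your proposal follows the same high-level route as the paper's proof --- restrict the precoder to the dominant left singular subspace of $\bm{H}_{\mathrm{eff}}$, then treat $\bm{\Sigma}'$ as a power allocation over the constraints --- but it is considerably more careful than what the paper actually writes. The paper's argument is a three-sentence sketch: it asserts that maximizing each desired-to-total power ratio ``reduces to placing the beams in the dominant transmit subspace,'' that the SVD ``diagonalizes the corresponding Gram operator,'' and that $\bm{\Sigma}'$ handles the constraints. It never performs your orthogonal-complement decomposition $\bm{w}_k=\bm{w}_k^{\parallel}+\bm{w}_k^{\perp}$ (which is the step that actually certifies the subspace claim, since the perpendicular part only inflates power and leakage while leaving every $|\bm{h}_{\mathrm{eff},j}^{\mathrm H}\bm{w}_i|^2$ unchanged), and it never examines whether $\bm{H}_{\mathrm{eff}}^{\mathrm H}\bm{U}_{:,1:K_H}\bm{\Sigma}'=\bm{V}\bm{\Sigma}_r\bm{\Sigma}'$ is actually diagonal.

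That last point is the substantive contribution of your write-up: you correctly observe that the zero-interference solution within the subspace is $\bm{B}=\bm{\Sigma}_r^{-1}\bm{V}^{\mathrm H}\bm{D}$, which is diagonal only when $\bm{V}$ is (a phase-scaled permutation of) the identity, i.e., when the effective user channels are mutually orthogonal. The paper's stated form $\bm{W}^*=\bm{U}_{:,1:K_H}\bm{\Sigma}'$ with diagonal $\bm{\Sigma}'$ is therefore exactly optimal only in that orthogonalized regime, which is presumably what the lemma's hedge ``within the interference-free limit'' is meant to cover, though the paper never says so explicitly. Your reading --- that the lemma is rigorous as a statement about the optimal signaling \emph{subspace}, with the diagonal parameterization becoming exact under column-orthogonality of $\bm{H}_{\mathrm{eff}}$ --- is the defensible one, and your proof of the subspace claim plus the feasibility reduction $\sum_k|\sigma_k'|^2\le P_{\mathrm{max}}$, $\sum_k|\sigma_k'|^2\beta_k\le\Gamma_{\mathrm{leak}}$ is sound. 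There is no gap in your argument; rather, you have identified and repaired a gap the paper leaves open.
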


\begin{proof}
At high SNR, maximizing each user's desired-to-total power ratio reduces to placing the beams in the dominant transmit subspace of $\bm{H}_{\mathrm{eff}}$, i.e., the span of its left singular vectors. The SVD $\bm{H}_{\mathrm{eff}}=\bm{U}\bm{\Sigma}\bm{V}^\mathrm{H}$ diagonalizes the corresponding Gram operator; choosing $\bm{W}$ in $\mathrm{span}(\bm{U}_{:,1:K_H})$ preserves optimality under power reallocation. The diagonal $\bm{\Sigma}'$ implements power allocation subject to the constraints.
\end{proof}

\subsubsection{Case 2: Single Eavesdropper and Multiple Honest Users ($K_B=1$, $K_H$ arbitrary)}
\label{subsec:single_eav_multi_user}

Consider the scenario with a single eavesdropper ($K_B=1$) and multiple honest users ($K_H \geq 1$). Let $\bm{g} \triangleq \bm{g}_1$ denote the IRS-to-eavesdropper channel and define the effective BS-to-eavesdropper channel as $\bm{h}_{\mathrm{eff},e} \triangleq \bm{H}^\mathrm{H}\bm{\Phi}^\mathrm{H}\bm{g} \in \mathbb{C}^{M\times 1}$. The secrecy constraint simplifies to
\begin{equation}
\big\|\bm{g}^\mathrm{H} \bm{\Phi} \bm{H} \bm{W}\big\|_2^2 \leq \Gamma_{\mathrm{leak}},
\label{eq:single_eav_constraint}
\end{equation}
i.e., the \emph{signal-leakage} power toward the eavesdropper is capped.

\begin{lemma}
\label{thm:single_eav_multi_user}
For fixed IRS phase matrix $\bm{\Phi}$, the optimal precoder structure for $K_B=1$ and arbitrary $K_H$ admits the ridge–nulling form
\begin{equation}
\label{eq:w_star_single_eav_correct}
\begin{split}
\bm{W}^* =
&\Bigl(
  \bm{H}_{\mathrm{eff}}\mathrm{diag}(|\bm{c}|^2)\bm{H}_{\mathrm{eff}}^{H}
  + \gamma\,\bm{I}_M\\
&\quad + (\lambda+\lambda_1)\,\bm{h}_{\mathrm{eff},e}\bm{h}_{\mathrm{eff},e}^{H}
\Bigr)^{-1}
\bm{H}_{\mathrm{eff}}\,\bm{C}^{H}.
\end{split}
\end{equation}
where $\bm{H}_{\mathrm{eff}} = [\bm{h}_{\mathrm{eff},1},\ldots,\bm{h}_{\mathrm{eff},K_H}] \in \mathbb{C}^{M\times K_H}$ with $\bm{h}_{\mathrm{eff},k} \triangleq \bm{H}^\mathrm{H}\bm{\Phi}^\mathrm{H}\bm{h}_k$, the vector $\bm{c}=[c_1,\ldots,c_{K_H}]^\top$ collects the equalizers ($\bm{C}=\mathrm{diag}(c_1,\ldots,c_{K_H})$), $\gamma \triangleq \mu+\lambda_2$, and $\lambda_1 \ge 0$ is chosen to satisfy \eqref{eq:single_eav_constraint} with equality when active.
\end{lemma}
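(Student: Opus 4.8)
The plan is to observe that with $\bm{\Phi}$ and $\bm{C}$ held fixed, the minimization over $\bm{W}$ in \eqref{eq:secure_opt_obj} is a convex quadratically constrained quadratic program: the summed MSE, the leakage penalty, and the Tikhonov term $\mu\|\bm{W}\|_F^2$ are all convex quadratics in $\bm{W}$, while both the leakage cap \eqref{eq:single_eav_constraint} and the power budget \eqref{eq:power_constraint} are convex. Consequently the KKT system \eqref{eq:kkt_w}--\eqref{eq:kkt_power} is not merely necessary but sufficient, and any stationary point of the $\bm{W}$-block is its global minimizer. This reduces the task to solving the stationarity equation \eqref{eq:kkt_w} for $\bm{W}$.

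First I would specialize \eqref{eq:kkt_w} to $K_B=1$ and insert the explicit MSE gradient \eqref{eq:gradW_MSE}. The only genuinely new computation is the gradient of the single-eavesdropper leakage term. Writing $\bm{g}^{\mathrm{H}}\bm{\Phi}\bm{H}=\bm{h}_{\mathrm{eff},e}^{\mathrm{H}}$, the leakage power becomes $P_e^{(\mathrm{sig})}=\mathrm{tr}\bigl(\bm{W}^{\mathrm{H}}\bm{h}_{\mathrm{eff},e}\bm{h}_{\mathrm{eff},e}^{\mathrm{H}}\bm{W}\bigr)$, whose Wirtinger gradient, in the same factor-of-two convention used in \eqref{eq:gradW_MSE}, is $\nabla_{\bm{W}}P_e^{(\mathrm{sig})}=2\,\bm{h}_{\mathrm{eff},e}\bm{h}_{\mathrm{eff},e}^{\mathrm{H}}\bm{W}$. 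Substituting into \eqref{eq:kkt_w} and cancelling the common factor of two yields the linear matrix equation $\bm{A}\bm{W}=\bm{H}_{\mathrm{eff}}\bm{C}^{\mathrm{H}}$ with $\bm{A}\triangleq\bm{H}_{\mathrm{eff}}\mathrm{diag}(|\bm{c}|^2)\bm{H}_{\mathrm{eff}}^{\mathrm{H}}+\gamma\bm{I}_M+(\lambda+\lambda_1)\bm{h}_{\mathrm{eff},e}\bm{h}_{\mathrm{eff},e}^{\mathrm{H}}$ and $\gamma\triangleq\mu+\lambda_2$.

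Next I would establish invertibility of $\bm{A}$. The two data-dependent terms $\bm{H}_{\mathrm{eff}}\mathrm{diag}(|\bm{c}|^2)\bm{H}_{\mathrm{eff}}^{\mathrm{H}}$ and $(\lambda+\lambda_1)\bm{h}_{\mathrm{eff},e}\bm{h}_{\mathrm{eff},e}^{\mathrm{H}}$ are positive semidefinite, while $\gamma\bm{I}_M$ is strictly positive definite because $\mu>0$ and $\lambda_2\ge 0$ force $\gamma>0$. Their sum is therefore Hermitian positive definite, hence invertible, and left-multiplying by $\bm{A}^{-1}$ gives the unique solution \eqref{eq:w_star_single_eav_correct}. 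I would then justify the name by reading the two regularizers structurally: the isotropic ridge $\gamma\bm{I}_M$ (from power regularization and the power dual) and the rank-one eavesdropper term, which suppresses energy along $\bm{h}_{\mathrm{eff},e}$ and approaches exact nulling as $\lambda_1\to\infty$.

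The main obstacle, and the point I would flag explicitly, is that \eqref{eq:w_star_single_eav_correct} is closed-form only \emph{given} the dual variables $\lambda_1,\lambda_2$; these are not free but are pinned by the complementary-slackness conditions \eqref{eq:kkt_leak}--\eqref{eq:kkt_power}. Since $\bm{W}^*(\lambda_1,\lambda_2)$ depends on the duals through a matrix inverse, the resulting leakage and transmit powers are monotone in each multiplier, so feasibility is restored by a one-dimensional (bisection) search on each active dual rather than in closed form. I would therefore state that $\lambda_1$ is the smallest nonnegative value enforcing \eqref{eq:single_eav_constraint} with equality when the cap is active and $\lambda_1=0$ otherwise, and analogously for $\lambda_2$ and \eqref{eq:power_constraint}. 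This is precisely why the lemma asserts an optimal \emph{structure} rather than a fully explicit precoder.
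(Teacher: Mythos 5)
Your proposal is correct and follows essentially the same route as the paper: specialize the Lagrangian stationarity condition \eqref{eq:kkt_w} to $K_B=1$, use the explicit MSE gradient \eqref{eq:gradW_MSE} together with the leakage gradient $2\,\bm{h}_{\mathrm{eff},e}\bm{h}_{\mathrm{eff},e}^{\mathrm{H}}\bm{W}$, and solve the resulting linear system for $\bm{W}$. The additional points you supply --- convexity of the $\bm{W}$-block making KKT sufficient, positive definiteness of the system matrix via $\gamma=\mu+\lambda_2>0$, and the bisection characterization of the active duals --- are correct refinements that the paper's terser proof leaves implicit.
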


\begin{proof}
Starting from the Lagrangian in \eqref{eq:lagrangian} with $K_B=1$ and using the explicit gradient identity in \eqref{eq:gradW_MSE}, the stationarity condition with respect to $\bm{W}$ yields
\[
\begin{split}
\Big(
  \bm{H}_{\mathrm{eff}}\mathrm{diag}(|\bm{c}|^2)\bm{H}_{\mathrm{eff}}^\mathrm{H}
  &+ (\mu+\lambda_2)\bm{I}_M\\
  &+ (\lambda+\lambda_1)\,\bm{h}_{\mathrm{eff},e}\bm{h}_{\mathrm{eff},e}^\mathrm{H}
\Big)\bm{W}
= \bm{H}_{\mathrm{eff}}\,\bm{C}^\mathrm{H}.
\end{split}
\]

The eavesdropper term arises from differentiating both the soft leakage penalty ($\lambda$) and the hard leakage constraint via its multiplier ($\lambda_1$). Solving for $\bm{W}$ gives \eqref{eq:w_star_single_eav_correct}. The multiplier $\lambda_1$ adjusts to enforce \eqref{eq:single_eav_constraint}.
\end{proof}

\begin{corollary}
When the secrecy constraint is inactive ($\lambda_1=0$), \eqref{eq:w_star_single_eav_correct} becomes
\[
\bm{W}^* =
\Big(\bm{H}_{\mathrm{eff}}\mathrm{diag}(|\bm{c}|^2)\bm{H}_{\mathrm{eff}}^\mathrm{H}
  + \gamma\,\bm{I}_M
  + \lambda\,\bm{h}_{\mathrm{eff},e}\bm{h}_{\mathrm{eff},e}^{H}\Big)^{-1}
\bm{H}_{\mathrm{eff}}\bm{C}^\mathrm{H},
\]
so the \emph{soft} penalty $\lambda$ still pushes the solution away from the eavesdropper’s direction. If, in addition, $\lambda=0$, it reduces to a regularized MMSE/ZF form
\(
\big(\bm{H}_{\mathrm{eff}}\mathrm{diag}(|\bm{c}|^2)\bm{H}_{\mathrm{eff}}^\mathrm{H} + \gamma\,\bm{I}_M\big)^{-1}
\bm{H}_{\mathrm{eff}}\bm{C}^\mathrm{H}.
\)
When the secrecy constraint dominates ($\lambda_1 \to \infty$), the solution approaches strict null-steering with respect to $\bm{h}_{\mathrm{eff},e}$ while still fitting $\bm{H}_{\mathrm{eff}}\bm{C}^{H}$ under the power budget.
\end{corollary}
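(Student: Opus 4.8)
The plan is to treat the three regimes as successive specializations of the master stationarity identity \eqref{eq:w_star_single_eav_correct}, with the first two obtained by direct substitution and the third by an asymptotic matrix-inverse argument. Throughout, I would write $\bm{A} \triangleq \bm{H}_{\mathrm{eff}}\,\mathrm{diag}(|\bm{c}|^2)\,\bm{H}_{\mathrm{eff}}^{\mathrm{H}} + \gamma\,\bm{I}_M$ and $\bm{u} \triangleq \bm{h}_{\mathrm{eff},e}$, and note that $\gamma = \mu + \lambda_2 > 0$ (since $\mu > 0$) makes $\bm{A}$ Hermitian positive definite, so that $\bm{A}^{-1}$ exists and $\bm{u}^{\mathrm{H}}\bm{A}^{-1}\bm{u} > 0$ whenever $\bm{u} \neq \bm{0}$. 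This positive-definiteness is the structural fact that drives all three cases.

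For the first claim, substituting $\lambda_1 = 0$ into \eqref{eq:w_star_single_eav_correct} leaves the rank-one term $\lambda\,\bm{u}\bm{u}^{\mathrm{H}}$ intact, giving the stated expression. To justify the reading that the soft penalty ``pushes away from the eavesdropper's direction,'' I would observe that this term inflates the effective Gram operator along $\bm{u}$, so the inverse attenuates any component of $\bm{W}^*$ aligned with $\bm{u}$; a short Sherman--Morrison computation shows that $\bm{u}^{\mathrm{H}}\bm{W}^*$ scales like $(1 + \lambda\,\bm{u}^{\mathrm{H}}\bm{A}^{-1}\bm{u})^{-1}$, which is monotonically decreasing in $\lambda$. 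Setting additionally $\lambda = 0$ removes the rank-one term entirely and collapses the expression to $\bm{A}^{-1}\bm{H}_{\mathrm{eff}}\bm{C}^{\mathrm{H}}$, the ridge-regularized MMSE/ZF precoder, with $\gamma$ acting as the power-driven regularizer.

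The substantive part is the $\lambda_1 \to \infty$ limit, which I would handle with the Sherman--Morrison identity applied to $(\bm{A} + \beta\,\bm{u}\bm{u}^{\mathrm{H}})^{-1}$, where $\beta \triangleq \lambda + \lambda_1$. Left-multiplying the resulting inverse by $\bm{u}^{\mathrm{H}}$ and simplifying yields $\bm{u}^{\mathrm{H}}(\bm{A}+\beta\,\bm{u}\bm{u}^{\mathrm{H}})^{-1} = \bm{u}^{\mathrm{H}}\bm{A}^{-1}/(1+\beta\,\bm{u}^{\mathrm{H}}\bm{A}^{-1}\bm{u})$, so the eavesdropper leakage $\bm{h}_{\mathrm{eff},e}^{\mathrm{H}}\bm{W}^* = \bm{g}^{\mathrm{H}}\bm{\Phi}\bm{H}\bm{W}^*$ vanishes at rate $O(1/\beta)$, establishing strict null-steering in the limit. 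Taking $\beta \to \infty$ in the full expansion gives the limiting operator $\bm{A}^{-1} - \bm{A}^{-1}\bm{u}\bm{u}^{\mathrm{H}}\bm{A}^{-1}/(\bm{u}^{\mathrm{H}}\bm{A}^{-1}\bm{u})$, which I would identify as the $\bm{A}^{-1}$-weighted oblique projector onto $\{\bm{W} : \bm{u}^{\mathrm{H}}\bm{W} = \bm{0}\}$; applied to $\bm{H}_{\mathrm{eff}}\bm{C}^{\mathrm{H}}$ it returns precisely the minimizer of the ridge-MSE objective subject to the hard null constraint, i.e., the best fit to $\bm{H}_{\mathrm{eff}}\bm{C}^{\mathrm{H}}$ consistent with the power budget encoded in $\gamma$.

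The main obstacle is this final limit. Making the null-steering statement precise relies on the positive-definiteness of $\bm{A}$ (guaranteed by $\gamma > 0$) to ensure $\bm{u}^{\mathrm{H}}\bm{A}^{-1}\bm{u} > 0$ and a nondegenerate denominator, and the interpretation of the limiting operator as a constrained minimizer should be verified against the first-order conditions of the equality-constrained quadratic rather than merely asserted. The two substitution claims, by contrast, are immediate once \eqref{eq:w_star_single_eav_correct} is in hand.
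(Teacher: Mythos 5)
Your proposal is correct. Note, however, that the paper offers no proof of this corollary at all: it is stated as an immediate consequence of \eqref{eq:w_star_single_eav_correct}, with the first two regimes read off by substituting $\lambda_1=0$ (and then $\lambda=0$), and the $\lambda_1\to\infty$ null-steering claim left as a qualitative assertion. Your first two steps coincide with that implicit argument, but your treatment of the third claim supplies genuine content the paper omits: the Sherman--Morrison expansion of $(\bm{A}+\beta\,\bm{u}\bm{u}^{\mathrm{H}})^{-1}$ with $\beta=\lambda+\lambda_1$ gives the exact attenuation identity $\bm{u}^{\mathrm{H}}\bm{W}^{*}=\bm{u}^{\mathrm{H}}\bm{A}^{-1}\bm{H}_{\mathrm{eff}}\bm{C}^{\mathrm{H}}\big/\big(1+\beta\,\bm{u}^{\mathrm{H}}\bm{A}^{-1}\bm{u}\big)$, hence monotone attenuation in $\lambda$, an $O(1/\beta)$ leakage decay rate, and---after passing to the limit---the identification of $\bm{A}^{-1}-\bm{A}^{-1}\bm{u}\bm{u}^{\mathrm{H}}\bm{A}^{-1}\big/\big(\bm{u}^{\mathrm{H}}\bm{A}^{-1}\bm{u}\big)$ applied to $\bm{H}_{\mathrm{eff}}\bm{C}^{\mathrm{H}}$ as the KKT solution of the ridge objective under the hard null constraint $\bm{u}^{\mathrm{H}}\bm{W}=\bm{0}$. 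That identification is easily verified, as you suggest it should be: eliminating the multiplier from the stationarity condition $\bm{A}\bm{W}-\bm{H}_{\mathrm{eff}}\bm{C}^{\mathrm{H}}+\bm{u}\bm{\nu}^{\mathrm{H}}=\bm{0}$ using $\bm{u}^{\mathrm{H}}\bm{W}=\bm{0}$ reproduces exactly your limiting operator, and positive definiteness of $\bm{A}$, guaranteed by $\gamma=\mu+\lambda_2\ge\mu>0$, keeps the denominator nondegenerate. What the paper's terseness buys is economy (cases one and two really are pure substitution); what your argument buys is a rigorous, quantitative version of the only claim in the corollary that is not. The single caveat worth stating explicitly is that your limit holds $(\bm{\Phi},\bm{c},\gamma)$ fixed as $\lambda_1\to\infty$, whereas in the full optimization $\lambda_2$ and the equalizers would re-equilibrate; this frozen-parameter reading is, however, precisely the one the corollary itself adopts, so it is an interpretive remark rather than a gap.
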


\subsection{Performance Bounds}
\label{subsec:performance_bounds}

To evaluate the quality and scalability of our solution, we now derive analytical bounds for the optimal objective value and asymptotic MSE performance.

\begin{theorem}[Lower Bound on Optimal Cost]
\label{thm:lower_bound}
Let $p^*$ denote the optimal value of problem~\eqref{eq:secure_opt_obj}. Then
\begin{equation}
p^* \;\ge\; 
\min_{\{p_k\ge 0\}}
\ \sum_{k=1}^{K_H}
\frac{\sigma_k^2}{\|\bm{h}_{\mathrm{eff},k}\|_2^2\, p_k + \sigma_k^2}
\quad\text{s.t.}\quad
\sum_{k=1}^{K_H} p_k \le P_{\mathrm{max}}.
\end{equation}
\end{theorem}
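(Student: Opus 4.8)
The plan is to prove the bound by a chain of relaxations, each of which can only decrease the objective, so that the resulting minimum lower-bounds $p^*$. I would start by optimizing out the equalizers exactly. Since the penalty terms $\lambda\sum_e P_e^{(\mathrm{sig})}$ and $\mu\|\bm{W}\|_F^2$ in \eqref{eq:objective_function} do not depend on $\bm{C}$, minimizing $\mathcal{F}$ over $\bm{C}$ for fixed $(\bm{W},\bm{\Phi})$ amounts to minimizing each $\mathrm{MSE}_k$ over $c_k$, whose optimizer is the MMSE coefficient \eqref{eq:ck_closed_form}. Substituting it gives the familiar form $\mathrm{MSE}_k^{\min}=1-s_k/(s_k+I_k+\sigma_k^2)$, where $s_k\triangleq|\bm{h}_{\mathrm{eff},k}^{\mathrm{H}}\bm{w}_k|^2$ and $I_k\triangleq\sum_{j\neq k}|\bm{h}_{\mathrm{eff},k}^{\mathrm{H}}\bm{w}_j|^2$. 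Because both penalties are nonnegative and the leakage constraint \eqref{eq:leakage_constraint} only shrinks the feasible set, dropping them yields $p^*\ge\min_{\bm{W},\bm{\Phi}}\sum_k \mathrm{MSE}_k^{\min}$ subject to only the power budget \eqref{eq:power_constraint}.

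Next I would apply two monotone relaxations to each per-user term. First, writing $\mathrm{MSE}_k^{\min}=(I_k+\sigma_k^2)/(s_k+I_k+\sigma_k^2)$ and clearing denominators shows $\mathrm{MSE}_k^{\min}-\sigma_k^2/(s_k+\sigma_k^2)=s_k I_k/[(s_k+I_k+\sigma_k^2)(s_k+\sigma_k^2)]\ge0$, so dropping the interference $I_k\ge 0$ gives $\mathrm{MSE}_k^{\min}\ge \sigma_k^2/(s_k+\sigma_k^2)$. Second, by Cauchy–Schwarz $s_k\le\|\bm{h}_{\mathrm{eff},k}\|_2^2\,\|\bm{w}_k\|_2^2$, and since $\sigma_k^2/(x+\sigma_k^2)$ is decreasing in $x$, enlarging the denominator argument can only decrease the term, giving $\sigma_k^2/(s_k+\sigma_k^2)\ge\sigma_k^2/(\|\bm{h}_{\mathrm{eff},k}\|_2^2\|\bm{w}_k\|_2^2+\sigma_k^2)$. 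Reparametrizing $p_k\triangleq\|\bm{w}_k\|_2^2\ge0$ turns \eqref{eq:power_constraint} into $\sum_k p_k\le P_{\mathrm{max}}$, and minimizing the resulting separable sum over the power simplex reproduces exactly the claimed right-hand side.

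The one point that needs care is the IRS phase matrix, since $\bm{h}_{\mathrm{eff},k}=\bm{H}^{\mathrm{H}}\bm{\Phi}^{\mathrm{H}}\bm{h}_k$—and hence $\|\bm{h}_{\mathrm{eff},k}\|_2^2$—depends on $\bm{\Phi}$. I would resolve this by reading the bound at the optimizer: evaluating the chain at a minimizing triple $(\bm{W}^\star,\bm{\Phi}^\star,\bm{C}^\star)$ gives $p^*\ge\min_{\{p_k\}}\sum_k\sigma_k^2/(\|\bm{h}_{\mathrm{eff},k}\|_2^2 p_k+\sigma_k^2)$ with the effective-channel norms induced by $\bm{\Phi}^\star$, which is the intended reading of the statement; alternatively one obtains a fully $\bm{\Phi}$-free bound by replacing each norm with its maximum over the unit-modulus manifold, at the cost of a looser constant. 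Beyond this, the only real obstacle is bookkeeping: each relaxation must be verified to move the objective in the correct (decreasing) direction, which for the Cauchy–Schwarz step relies on the monotonicity of $x\mapsto\sigma_k^2/(x+\sigma_k^2)$ rather than on the raw inequality alone.
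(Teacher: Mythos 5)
Your proposal is correct and follows essentially the same route as the paper: relax away the nonnegative penalties and the leakage cap, lower-bound each user's MMSE by the interference-free single-user expression $\sigma_k^2/(\|\bm{h}_{\mathrm{eff},k}\|_2^2\|\bm{w}_k\|_2^2+\sigma_k^2)$, identify $p_k=\|\bm{w}_k\|_2^2$, and minimize over the power simplex. Your version is in fact more complete than the paper's, which simply asserts the per-user bound that you derive explicitly (drop $I_k$, then Cauchy--Schwarz plus monotonicity) and which does not address the $\bm{\Phi}$-dependence of $\|\bm{h}_{\mathrm{eff},k}\|_2^2$ that you correctly flag and resolve.
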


\begin{proof}
Consider the relaxation that allocates nonnegative powers $\{p_k\}$ to user directions $\bm{h}_{\mathrm{eff},k}$ independently, with a shared budget $\sum_k p_k\le P_{\mathrm{max}}$. With linear MMSE receivers, a per-user lower bound is
\(
\mathrm{MSE}_k \ge \frac{\sigma_k^2}{\|\bm{h}_{\mathrm{eff},k}\|_2^2 p_k + \sigma_k^2}.
\)
Summing these bounds and minimizing over all feasible $\{p_k\}$ yields a valid lower bound on the \emph{sum} MSE, since any feasible $(\bm{W},\bm{\Phi},\bm{C})$ induces some power allocation that satisfies the budget. The leakage penalty is nonnegative and the leakage cap does not reduce this universal lower bound.
\end{proof}

\begin{theorem}[High-SNR Asymptotics]
\label{thm:high_snr}
When $P_{\mathrm{max}} \to \infty$, the optimal MSE for user $k$ behaves as
\begin{equation}
\mathrm{MSE}_k^* = \frac{1}{\mathrm{SINR}_k^*} + \mathcal{O} \left( \frac{1}{(\mathrm{SINR}_k^*)^2} \right),
\end{equation}
where $\mathrm{SINR}_k^*$ is the optimal post-equalization signal-to-interference-plus-noise ratio (SINR),
\[
\mathrm{SINR}_k
= \frac{\big|\bm{h}_{\mathrm{eff},k}^\mathrm{H}\bm{w}_k\big|^2}
       {\sum_{j\neq k}\big|\bm{h}_{\mathrm{eff},k}^\mathrm{H}\bm{w}_j\big|^2 + \sigma_k^2},
\qquad
\mathrm{SINR}_k^* \triangleq \max_{\bm{W},\bm{\Phi}} \mathrm{SINR}_k.
\]
\end{theorem}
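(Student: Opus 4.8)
The plan is to reduce the optimal per-user MSE to a closed form through the MMSE equalizer, invoke the classical MMSE--SINR identity, and then expand in the large-SINR limit that $P_{\mathrm{max}}\to\infty$ induces.

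First I would substitute the MMSE equalizer $c_k^\star$ from \eqref{eq:ck_closed_form} into the per-user MSE \eqref{eq:MSEk}. Writing $a_k \triangleq \bm{h}_{\mathrm{eff},k}^\mathrm{H}\bm{w}_k$ and $b_k \triangleq \sum_{j=1}^{K_H}|\bm{h}_{\mathrm{eff},k}^\mathrm{H}\bm{w}_j|^2 + \sigma_k^2$, direct substitution collapses \eqref{eq:MSEk} to $\mathrm{MSE}_k = 1 - |a_k|^2/b_k$. Separating the desired-signal term from interference-plus-noise, i.e. $b_k = |a_k|^2 + (I_k+\sigma_k^2)$ with $I_k \triangleq \sum_{j\neq k}|\bm{h}_{\mathrm{eff},k}^\mathrm{H}\bm{w}_j|^2$, yields the MMSE--SINR identity
\[
\mathrm{MSE}_k = \frac{I_k+\sigma_k^2}{|a_k|^2 + I_k+\sigma_k^2} = \frac{1}{1+\mathrm{SINR}_k}.
\]
Since minimizing $\mathrm{MSE}_k$ over $c_k$ already produces $c_k^\star$, and the objective \eqref{eq:secure_opt_obj} is monotone decreasing in each $\mathrm{SINR}_k$, the sum-MSE-optimal design drives every user toward its maximal attainable SINR in the high-power regime, giving $\mathrm{MSE}_k^* = 1/(1+\mathrm{SINR}_k^*)$.

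Second, I would establish that $P_{\mathrm{max}}\to\infty$ forces $\mathrm{SINR}_k^*\to\infty$. Under generic full-rank effective channels with $M\ge K_H$ and an eavesdropper direction not collinear with the signal subspace, the ridge--nulling precoder of Lemma~\ref{thm:single_eav_multi_user} (or, more generally, a zero-forcing structure) can simultaneously suppress interference $I_k\to 0$ and scale the desired power $|a_k|^2$ with the budget, so that $\mathrm{SINR}_k^* = \Theta(P_{\mathrm{max}}) \to \infty$. The leakage cap \eqref{eq:leakage_constraint} restricts only the component along $\bm{h}_{\mathrm{eff},e}$ and does not obstruct unbounded SINR growth in the complementary subspace. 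With $\mathrm{SINR}_k^*\to\infty$, I then expand the identity via a geometric series:
\[
\mathrm{MSE}_k^* = \frac{1}{1+\mathrm{SINR}_k^*}
= \frac{1}{\mathrm{SINR}_k^*}\Bigl(1 - \tfrac{1}{\mathrm{SINR}_k^*} + \cdots\Bigr)
= \frac{1}{\mathrm{SINR}_k^*} + \mathcal{O}\!\Bigl(\tfrac{1}{(\mathrm{SINR}_k^*)^2}\Bigr),
\]
which is precisely the claimed asymptotic.

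The main obstacle I anticipate is the second step: the MMSE--SINR identity and the final expansion are routine algebra, but rigorously justifying $\mathrm{SINR}_k^*\to\infty$ as $P_{\mathrm{max}}\to\infty$, and in particular that the \emph{sum}-MSE optimum aligns with each user's individually maximal SINR, requires a degrees-of-freedom argument showing that the spatial dimension together with the (slack or inactive) power and leakage constraints permit simultaneous interference nulling and unbounded signal amplification across all users. This is where genericity/full-rank assumptions on $\{\bm{h}_{\mathrm{eff},k}\}$ and the relation $M\ge K_H$ enter; the careful version must treat the coupling among users in the sum objective rather than optimizing each SINR in isolation.
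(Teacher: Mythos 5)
Your proof follows essentially the same route as the paper's: substitute the MMSE equalizer $c_k^\star$ to obtain the identity $\mathrm{MSE}_k = 1/(1+\mathrm{SINR}_k)$, then expand $1/(1+\mathrm{SINR}_k^*)$ for large SINR as $P_{\mathrm{max}}\to\infty$. The additional care you take in justifying $\mathrm{SINR}_k^*\to\infty$ (degrees-of-freedom/nulling argument) and in flagging that the sum-MSE optimum need not coincide with each user's individually maximal SINR goes beyond the paper's own two-line argument, which simply asserts that the signal/interference terms dominate the noise and invokes the Taylor expansion.
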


\begin{proof}
With $c_k^\star$ in \eqref{eq:ck_closed_form}, the per-user MSE is
\[
\mathrm{MSE}_k = 1 - \frac{\left|\bm{h}_{\mathrm{eff},k}^\mathrm{H}\bm{w}_k\right|^2}{\sum_{j=1}^{K_H}\left|\bm{h}_{\mathrm{eff},k}^\mathrm{H}\bm{w}_j\right|^2 + \sigma_k^2}
= \frac{1}{1+\mathrm{SINR}_k}.
\]
As $P_{\mathrm{max}}\to\infty$, the effective signal/interference terms dominate the noise, and the Taylor expansion of $1/(1+\mathrm{SINR}_k^*)$ gives the stated asymptotic series.
\end{proof}

\section{Partial CSI: Alternating Optimization Learning}
\label{sec:alternating}

The non-convex problem in \eqref{eq:secure_opt_obj} arises in the context of secure IRS-aided communication design under the full-CSI viewpoint adopted in Section~\ref{sec:problem}. In many realistic scenarios, however, perfect or instantaneous CSI is unavailable due to estimation errors, feedback delays, or hardware constraints. To address these challenges, we propose a self-supervised, gradient-based alternating optimization approach with adaptive step sizes that \emph{does not require explicit full CSI}. Instead, it leverages partially known CSI or observed data (e.g., pilot-based measurements, received-symbol statistics, or leakage proxies) to iteratively improve the system parameters. Consistent with Section~\ref{sec:problem}, the IRS unit-modulus constraint $|[\bm{\Phi}]_{n,n}|=1$ is treated as a \emph{hard} constraint enforced via projection (we do not include a soft penalty on $|[\bm{\Phi}]_{n,n}|-1$ in the objective).

\subsection{Proposed Alternating Learning Algorithm}

At iteration $t$, the transmit precoder $\bm{W}$ and the receive equalizer $\bm{C}$ are updated while keeping the IRS phase matrix $\bm{\Phi}^{(t)}$ fixed; subsequently, the IRS phases are updated via a projected step to satisfy the unit-modulus constraints. This decomposes the design into the block subproblems
\begin{align}
    (\bm{W}^{(t+1)}, \bm{C}^{(t+1)}) &= \arg\min_{\bm{W}, \bm{C}} \ \mathcal{F}(\bm{W}, \bm{\Phi}^{(t)}, \bm{C}), \label{eq:updateW_C}\\
    \bm{\Phi}^{(t+1)} &= \arg\min_{\bm{\Phi}} \ \mathcal{F}(\bm{W}^{(t+1)}, \bm{\Phi}, \bm{C}^{(t+1)}), \label{eq:updatePhi}
\end{align}
where $\mathcal{F}$ is the objective in \eqref{eq:objective_function} (with \eqref{eq:leakage_constraint}–\eqref{eq:power_constraint} enforced via dual/penalty terms and projection for the unit-modulus constraint).

To incorporate constraints with first-order updates, we employ an augmented Lagrangian
\begin{equation}
\begin{aligned}
\mathcal{L}_{\mathrm{aug}}(\bm{W}, \bm{\Phi}, \bm{C}, \bm{\lambda})
&= \mathcal{F}(\bm{W}, \bm{\Phi}, \bm{C}) \\[0.5ex]
&\quad + \sum_{i=1}^{2}\Bigl(\lambda_i\,g_i(\bm{W},\bm{\Phi})
   + \tfrac{\rho}{2}\,g_i(\bm{W},\bm{\Phi})^2\Bigr)\,.
\end{aligned}
\label{eq:augmented_lagrangian}
\end{equation}
The soft secrecy pressure $+\lambda\sum_e P_e^{(\mathrm{sig})}$ in $\mathcal{F}$ acts in tandem with the hard cap $g_1(\bm{W},\bm{\Phi})\le 0$, which is also penalized in the augmented Lagrangian.

with constraint functions corresponding to \eqref{eq:leakage_constraint}–\eqref{eq:power_constraint}:
\begin{align}
g_1(\bm{W},\bm{\Phi}) &\triangleq \sum_{e=1}^{K_B}\sum_{k=1}^{K_H} \left| \bm{g}_e^\mathrm{H}\bm{\Phi}\bm{H}\bm{w}_k \right|^2 - \Gamma_{\mathrm{leak}}, \\
g_2(\bm{W},\bm{\Phi}) &\triangleq \|\bm{W}\|_F^2 - P_{\mathrm{max}},
\end{align}
dual variables $\bm{\lambda}=[\lambda_1,\lambda_2]^\top \succeq \bm{0}$, and penalty parameter $\rho>0$. In the partial-CSI setting, the gradients (or subgradients) of $\mathcal{L}_{\mathrm{aug}}$ are \emph{estimated} from observed/estimated feedback (e.g., sample MSEs at honest users and leakage power proxies measured at cooperative monitors or inferred from statistics). Under full CSI, the same template uses \emph{exact} gradients.

We then carry out block-wise (projected) gradient steps with a diminishing step-size schedule:
\begin{align}
(\bm{W}^{(t+1)}, \bm{C}^{(t+1)}) &= \arg\min_{\bm{W}, \bm{C}} \ \mathcal{L}_{\mathrm{aug}}(\bm{W}, \bm{\Phi}^{(t)}, \bm{C}, \bm{\lambda}^{(t)}), \\
\bm{\Phi}^{(t+1)} &= \arg\min_{\bm{\Phi}} \ \mathcal{L}_{\mathrm{aug}}(\bm{W}^{(t+1)}, \bm{\Phi}, \bm{C}^{(t+1)}, \bm{\lambda}^{(t)}),
\end{align}
where each ``$\arg\min$'' is realized by one or a few (stochastic) gradient steps. The projection for $\bm{\Phi}$ \emph{zeros off-diagonal entries and normalizes the diagonal to unit modulus}:
\begin{align}
\mathcal{P}_{\mathcal{U}}(\bm{\Phi})
&\triangleq 
\operatorname{diag}\!\bigl(e^{\J\,\arg([\bm{\Phi}]_{1,1})},\ldots,
e^{\J\,\arg([\bm{\Phi}]_{N,N})}\bigr),
\label{eq:projU}\\
\mathcal{U}
&=
\bigl\{\bm{\Phi}:\;|[\bm{\Phi}]_{n,n}|=1,\;\forall n\bigr\}.
\label{eq:setU}
\end{align}
\textit{Remark.} Equivalently, one may parametrize $\bm{\Phi}$ directly by its phase vector $\boldsymbol{\theta}\!\in\!\mathbb{R}^N$ with $[\bm{\Phi}]_{n,n}=e^{\J \theta_n}$ and perform gradient steps on $\boldsymbol{\theta}$, which avoids creating off-diagonal artifacts.

We sometimes abbreviate $\|\mathcal{X}\|_F^2 \triangleq \|\bm{W}\|_F^2 + \|\bm{\Phi}\|_F^2 + \|\bm{C}\|_F^2$ for compactness; note that $\|\bm{\Phi}\|_F^2$ is \emph{constant} under \eqref{eq:setU} and is not used for step-size or stopping rules.

The detailed implementation is summarized below, where the step sizes follow a harmonic decay $\gamma_t=\gamma_0/(1+\alpha t)$ (satisfying Robbins--Monro conditions), and the dual variables are updated by projected subgradient ascent.

\begin{algorithm}[t]
\caption{Alternating Optimization for IRS-Assisted Secure Communication}
\label{alg:ml_enhanced_ssl}
\SetAlFnt{\small}
\SetAlCapFnt{\small}
\SetAlCapNameFnt{\small}
\DontPrintSemicolon
\LinesNumbered
\SetAlgoNlRelativeSize{-1}
\setlength{\abovecaptionskip}{4pt}
\setlength{\belowcaptionskip}{2pt}

\KwIn{Initial $\mathcal{X}^{(0)}=\{\bm{W}^{(0)},\bm{\Phi}^{(0)},\bm{C}^{(0)}\}$; duals $\bm{\lambda}^{(0)}=[\lambda_1^{(0)},\lambda_2^{(0)}]^\top\!\in\mathbb{R}^2_+$; learning params $\gamma_0,\alpha>0$; augmented Lagrangian coeff.\ $\rho>0$; tolerances $\epsilon>0$, $T_{\max}$}
\KwOut{Primal-dual $(\mathcal{X}^*,\bm{\lambda}^*)$}

$t\!\gets\!0$, $\mathcal{L}_{\mathrm{aug}}^{(0)}\!\gets\!\mathcal{L}_{\mathrm{aug}}(\mathcal{X}^{(0)},\bm{\lambda}^{(0)})$\\
\Repeat{$t\ge1$ \textbf{and} $\|\mathcal{X}^{(t)}-\mathcal{X}^{(t-1)}\|_F\le\epsilon$ \textbf{or} $t\ge T_{\max}$}{
  $\gamma_t\!\gets\!\gamma_0/(1+\alpha t)$\tcp*{Robbins--Monro}

  \ForEach{$(\bm{v},\mathcal{P})\!\in\!\{(\bm{W},\emptyset),(\bm{C},\emptyset),(\bm{\Phi},\mathcal{P}_{\mathcal{U}})\}$}{
    $\bm{v}^{(t+1)}\!\gets\!\bm{v}^{(t)} - \gamma_t\,\nabla_{\bm{v}}\mathcal{L}_{\mathrm{aug}}(\mathcal{X}^{(t)},\bm{\lambda}^{(t)})$\tcp*{stochastic estimate under partial CSI}
    \If{$\mathcal{P}\neq\emptyset$}{ $\bm{v}^{(t+1)}\!\gets\!\mathcal{P}(\bm{v}^{(t+1)})$\tcp*{project to unit-modulus for $\bm{\Phi}$} }
  }

  \[
  \bm{g}^{(t+1)} =
  \begin{bmatrix}
    \displaystyle \sum_{e=1}^{K_B}\sum_{k=1}^{K_H}\big|\bm{g}_e^{H}\bm{\Phi}^{(t+1)}\bm{H}\bm{w}_k^{(t+1)}\big|^2 - \Gamma_{\text{leak}}\\
    \|\bm{W}^{(t+1)}\|_F^2 - P_{\text{max}}
  \end{bmatrix}
  \]

  $\bm{\lambda}^{(t+1)}\!\gets\!\max\!\left(\bm{0},\,\bm{\lambda}^{(t)}+\rho\,\bm{g}^{(t+1)}\right)$\tcp*{projected ascent}

  $t\!\gets\!t+1$
}

\Return{$\mathcal{X}^{(t)},\bm{\lambda}^{(t)}$}
\end{algorithm}

\subsection{Convergence Analysis}
\label{sec:convergence}

We now establish the convergence properties of the alternating learning algorithm in Algorithm~\ref{alg:ml_enhanced_ssl}. Let $\mathcal{X}^{(t)}=(\bm{W}^{(t)},\bm{\Phi}^{(t)},\bm{C}^{(t)})$ denote the iterates at iteration $t$, and let $\mathcal{L}_{\mathrm{aug}}(\mathcal{X},\bm{\lambda})$ be the augmented Lagrangian defined in Section~\ref{sec:alternating}. Throughout, the IRS unit-modulus constraint is enforced by projection $\mathcal{P}_{\mathcal{U}}$, and the dual variables $\bm{\lambda}$ are updated via projected ascent. The step sizes follow a diminishing schedule $\{\gamma_t\}$ satisfying Robbins--Monro conditions (i.e., $\sum_t \gamma_t=\infty$ and $\sum_t \gamma_t^2<\infty$). We distinguish two regimes:
(i) \emph{Deterministic/full-CSI}: exact block gradients are used; and
(ii) \emph{Stochastic/partial-CSI}: block gradients are replaced by unbiased estimators with bounded second moments.

For constrained blocks, we measure primal stationarity via the \emph{projected gradient mapping}
\[
\mathcal{G}_{\eta}(\mathcal{X},\bm{\lambda})
\;\triangleq\; \tfrac{1}{\eta}\Big(\mathcal{X}
-\Pi_{\mathcal{F}}\!\big(\mathcal{X}-\eta\,\nabla_{\mathcal{X}}\mathcal{L}_{\mathrm{aug}}(\mathcal{X},\bm{\lambda})\big)\Big),
\]
where $\Pi_{\mathcal{F}}$ denotes the Cartesian projection onto the product set collecting the per-block constraints (including $\mathcal{U}$ for $\bm{\Phi}$).

\begin{theorem}[Algorithm Convergence]
\label{thm:convergence}
Assume compact feasible sets for the projected blocks and a diminishing step-size sequence $\{\gamma_t\}$ satisfying the Robbins--Monro conditions. Then the sequences $\{\mathcal{X}^{(t)}\}$ and $\{\bm{\lambda}^{(t)}\}$ generated by Algorithm~\ref{alg:ml_enhanced_ssl} satisfy:

\emph{(a) Deterministic/full-CSI case.} Suppose $\nabla \mathcal{L}_{\mathrm{aug}}$ is blockwise Lipschitz in a neighborhood of the limit set. Then, for sufficiently large $t$:
\[
\mathcal{L}_{\mathrm{aug}}(\mathcal{X}^{(t+1)}, \bm{\lambda}^{(t)}) 
\le \mathcal{L}_{\mathrm{aug}}(\mathcal{X}^{(t)}, \bm{\lambda}^{(t)}) 
- \eta_t \, \|\mathcal{X}^{(t+1)}-\mathcal{X}^{(t)}\|_F^2,
\]
for some $\eta_t>0$ depending on local smoothness/curvature constants (monotone descent up to projection). Moreover,
\[
\begin{split}
\lim_{t\to\infty}\big\|\mathcal{G}_{\gamma_t}(\mathcal{X}^{(t)},\boldsymbol{\lambda}^{(t)})\big\|_F
&= 0,\\
\lim_{t\to\infty}\big\|\nabla_{\boldsymbol{\lambda}}
\mathcal{L}_{\mathrm{aug}}(\mathcal{X}^{(t)}, \boldsymbol{\lambda}^{(t)})\big\|_2
&= 0.
\end{split}
\]

and every accumulation point $(\mathcal{X}^\star,\bm{\lambda}^\star)$ satisfies the KKT conditions of the constrained problem.

\emph{(b) Stochastic/partial-CSI case.} Suppose the gradient estimators are unbiased with bounded variance, i.e., 
\[
\begin{split}
\EX[\widehat{\nabla}\mathcal{L}_{\mathrm{aug}}\,|\,\mathcal{X}^{(t)},\bm{\lambda}^{(t)}]
&=\nabla\mathcal{L}_{\mathrm{aug}},\\
\EX\big[\|\widehat{\nabla}\mathcal{L}_{\mathrm{aug}}-\nabla\mathcal{L}_{\mathrm{aug}}\|_F^2
\,\big|\,\mathcal{X}^{(t)},\bm{\lambda}^{(t)}\big]
&\le \sigma^2.
\end{split}
\]

Then
\[
\begin{aligned}
\EX\!\big[\mathcal{L}_{\mathrm{aug}}(&\mathcal{X}^{(t+1)}, \bm{\lambda}^{(t)}) 
\,\big|\, \mathcal{X}^{(t)},\bm{\lambda}^{(t)}\big]
\le \mathcal{L}_{\mathrm{aug}}(\mathcal{X}^{(t)}, \bm{\lambda}^{(t)}) \\
&\quad - c\,\gamma_t\,\EX\!\big[\|\mathcal{G}_{\gamma_t}(\mathcal{X}^{(t)},\bm{\lambda}^{(t)})\|_F^2\big]
+ O(\gamma_t^2),
\end{aligned}
\]
for some \(c>0\), and (by Robbins--Siegmund) almost surely
\[
\begin{split}
\liminf_{t\to\infty}\,
\EX\!\big[\|\mathcal{G}_{\gamma_t}(\mathcal{X}^{(t)},\boldsymbol{\lambda}^{(t)})\|_F^2\big]
&=0,\\
\lim_{t\to\infty}
\EX\!\big[\|\nabla_{\boldsymbol{\lambda}}
\mathcal{L}_{\mathrm{aug}}(\mathcal{X}^{(t)}, \boldsymbol{\lambda}^{(t)})\|_2\big]
&=0.
\end{split}
\]
Consequently, the sequence is asymptotically stationary in expectation, and any limit point is a stationary/KKT point with probability one under standard boundedness assumptions.
\end{theorem}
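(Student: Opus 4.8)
The plan is to derive both parts from a single sufficient-decrease inequality for one sweep of block-projected (stochastic) gradient steps, and then convert that decrease into stationarity using summability of the step sizes. I would first reparametrize the IRS block through its phase vector $\bm{\theta}\in\mathbb{R}^N$ with $[\bm{\Phi}]_{n,n}=e^{\J\theta_n}$, as in the Remark following \eqref{eq:projU}; this replaces the non-convex unit-modulus set $\mathcal{U}$ in \eqref{eq:setU} by an unconstrained smooth block, so the only genuinely constrained variable is $\bm{W}$ (through the power-feasible ball), while $\bm{C}$ is unconstrained. On the compact feasible sets, $\mathcal{L}_{\mathrm{aug}}$ in \eqref{eq:augmented_lagrangian} is $C^2$ with blockwise-Lipschitz gradients; I let $L$ be a uniform blockwise Lipschitz constant on a neighborhood of the (bounded) iterate trajectory.

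The key steps, in order, are as follows. \emph{(i) Per-block descent:} for each block $\bm{v}\in\{\bm{W},\bm{C},\bm{\Phi}\}$, the projected step with $\gamma_t\le 1/L$ satisfies the standard bound $\mathcal{L}_{\mathrm{aug}}(\bm{v}^{(t+1)})\le \mathcal{L}_{\mathrm{aug}}(\bm{v}^{(t)}) - (\gamma_t/2)\|\mathcal{G}_{\gamma_t}^{(\bm{v})}\|_F^2$ in terms of the per-block gradient mapping. \emph{(ii) Chaining the Gauss--Seidel sweep:} summing the three per-block inequalities over one outer iteration (holding $\bm{\lambda}^{(t)}$ fixed), and using that each subsequent block is evaluated at already-updated earlier blocks, I obtain the stated monotone descent $\mathcal{L}_{\mathrm{aug}}(\mathcal{X}^{(t+1)},\bm{\lambda}^{(t)})\le \mathcal{L}_{\mathrm{aug}}(\mathcal{X}^{(t)},\bm{\lambda}^{(t)})-\eta_t\|\mathcal{X}^{(t+1)}-\mathcal{X}^{(t)}\|_F^2$, and, since $\mathcal{X}^{(t+1)}-\mathcal{X}^{(t)}=-\gamma_t\,\mathcal{G}_{\gamma_t}(\mathcal{X}^{(t)},\bm{\lambda}^{(t)})$ up to the block-triangular coupling, a decrease proportional to $\gamma_t\|\mathcal{G}_{\gamma_t}\|_F^2$. \emph{(iii) Summability:} because $\mathcal{L}_{\mathrm{aug}}$ is bounded below on the compact set (the penalty terms are nonnegative), telescoping gives $\sum_t \gamma_t\|\mathcal{G}_{\gamma_t}(\mathcal{X}^{(t)},\bm{\lambda}^{(t)})\|_F^2<\infty$; with $\sum_t\gamma_t=\infty$ this forces $\liminf_t\|\mathcal{G}_{\gamma_t}\|_F=0$, and uniform continuity of $\mathcal{G}$ on the compact trajectory together with $\|\mathcal{X}^{(t+1)}-\mathcal{X}^{(t)}\|_F\to 0$ upgrades this to the full limit in part~(a). \emph{(iv) Dual side and KKT:} since $\nabla_{\bm{\lambda}}\mathcal{L}_{\mathrm{aug}}=\bm{g}$ and the projected ascent keeps $\bm{\lambda}^{(t)}$ bounded, I invoke the method-of-multipliers argument to drive the constraint residual $\bm{g}^{(t)}\to\bm{0}$, with the $\max$-projection enforcing $\lambda_i\ge 0$ and complementary slackness \eqref{eq:kkt_leak}--\eqref{eq:kkt_power}; combining vanishing $\mathcal{G}$ (primal stationarity \eqref{eq:kkt_w}--\eqref{eq:kkt_ck}) with $\bm{g}\to\bm{0}$ and dual feasibility yields the full KKT system at every accumulation point.

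For part~(b) I repeat (i)--(ii) in conditional expectation. With unbiased estimators of bounded variance $\sigma^2$, the variance contributes an additive $L\gamma_t^2\sigma^2/2$ term, giving $\EX[\mathcal{L}_{\mathrm{aug}}(\mathcal{X}^{(t+1)},\bm{\lambda}^{(t)})\mid\mathcal{X}^{(t)},\bm{\lambda}^{(t)}]\le \mathcal{L}_{\mathrm{aug}}(\mathcal{X}^{(t)},\bm{\lambda}^{(t)})-c\,\gamma_t\,\EX[\|\mathcal{G}_{\gamma_t}\|_F^2]+O(\gamma_t^2)$. Setting $V_t=\mathcal{L}_{\mathrm{aug}}(\mathcal{X}^{(t)},\bm{\lambda}^{(t)})-\underline{\mathcal{L}}\ge 0$, this is precisely the Robbins--Siegmund supermartingale form with summable perturbation ($\sum_t\gamma_t^2<\infty$); the theorem then yields almost-sure convergence of $V_t$ and $\sum_t\gamma_t\|\mathcal{G}_{\gamma_t}\|_F^2<\infty$ a.s., whence $\liminf_t\EX[\|\mathcal{G}_{\gamma_t}\|_F^2]=0$ and, by the same dual argument, $\EX[\|\nabla_{\bm{\lambda}}\mathcal{L}_{\mathrm{aug}}\|_2]\to 0$; boundedness promotes the accumulation points to KKT points with probability one.

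The main obstacle I anticipate lies in steps (ii)--(iv): reconciling the \emph{Gauss--Seidel} block structure and the non-convex unit-modulus geometry with a \emph{single joint} stationarity measure $\mathcal{G}_{\gamma_t}$. The block-triangular coupling means the aggregate step is not the joint projected gradient of $\mathcal{L}_{\mathrm{aug}}$ evaluated at a common point, so the identity $\mathcal{X}^{(t+1)}-\mathcal{X}^{(t)}=-\gamma_t\mathcal{G}_{\gamma_t}$ holds only up to cross-block Lipschitz error terms that must be absorbed into $\eta_t$ using $\gamma_t\to0$; and the \emph{fixed} dual step $\rho$ against the \emph{diminishing} primal $\gamma_t$ makes the primal--dual coupling a two-timescale interaction whose feasibility conclusion $\bm{g}\to\bm{0}$ is the most delicate link. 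I would control both by exploiting the compactness/boundedness hypotheses and the smoothness of the phase reparametrization to keep all Lipschitz constants uniform along the trajectory.
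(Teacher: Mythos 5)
Your proposal follows essentially the same route as the paper's Appendix~B: a blockwise descent lemma under Lipschitz gradients and Robbins--Monro step sizes, telescoping the resulting sufficient decrease to force the step lengths (hence the gradient mapping) to vanish, compactness to extract KKT accumulation points, and projected dual ascent with the quadratic penalty driving the constraint residuals to zero, with Robbins--Siegmund handling the stochastic case. If anything, you are more explicit than the paper about the two genuinely delicate links --- relating the Gauss--Seidel sweep to the single joint projected gradient mapping $\mathcal{G}_{\gamma_t}$, and the two-timescale interaction between the fixed dual step $\rho$ and the diminishing primal steps $\gamma_t$ --- both of which the paper's own proof passes over without comment.
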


\begin{proof}
See Appendix~B.
\end{proof}

The above guarantees ensure that the alternating projected-gradient scheme with augmented Lagrangian handling of \eqref{eq:leakage_constraint}--\eqref{eq:power_constraint} converges to stationary points while maintaining practical descent behavior in the presence of non-convex couplings and unit-modulus projections. In the partial-CSI regime, monotone descent holds in \emph{expectation} (up to $O(\gamma_t^2)$ noise), and almost-sure stationarity follows from standard stochastic approximation arguments.

\begin{corollary}[Sublinear Rate]
\label{thm:convergence_rate}
\emph{(Deterministic/full-CSI).} Suppose the $\bm{W}$-subproblem is $\mu_W$-strongly convex (locally) and the $\bm{\Phi}$-subproblem has $L_{\bm{\Phi}}$-Lipschitz continuous gradients in a neighborhood of the limit set. Then the augmented Lagrangian satisfies
\begin{equation}
\label{eq:rate_convergence}
\mathcal{L}_{\mathrm{aug}}(\mathcal{X}^{(t)}, \bm{\lambda}^{(t)}) - \mathcal{L}_{\mathrm{aug}}^* \le \frac{C}{t},
\end{equation}
where
\[
C \;=\; \max\!\left\{ \frac{L_{\bm{\Phi}}^2}{2\mu_W}, \ \frac{L_{\bm{\Phi}}}{2} \right\} \,\|\mathcal{X}^{(0)}-\mathcal{X}^*\|_F^2,
\]
and $\mathcal{L}_{\mathrm{aug}}^*$ denotes the limiting value of the augmented Lagrangian.
\end{corollary}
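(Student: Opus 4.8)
The plan is to combine the one-step descent inequality from Theorem~\ref{thm:convergence}(a) with a local convexity-type inequality valid in the neighborhood of the limit set, and then telescope. Throughout I work at a fixed (or asymptotically stationary) dual iterate $\bm{\lambda}^{(t)}$, treating the rate as a statement about primal descent of $\mathcal{L}_{\mathrm{aug}}(\cdot,\bm{\lambda}^{(t)})$; the dual residual already vanishes by Theorem~\ref{thm:convergence}(a), so freezing $\bm{\lambda}$ perturbs the bound only at higher order. Write $\Delta_t \triangleq \mathcal{L}_{\mathrm{aug}}(\mathcal{X}^{(t)},\bm{\lambda}^{(t)}) - \mathcal{L}_{\mathrm{aug}}^*$ and $R \triangleq \|\mathcal{X}^{(0)}-\mathcal{X}^*\|_F$, and restrict attention to the neighborhood of the limit set in which, by Theorem~\ref{thm:convergence}(a), the iterates eventually lie and $\mathcal{L}_{\mathrm{aug}}$ is monotonically decreasing.

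First I would treat the two blocks according to their curvature. For the $\bm{\Phi}$-block, which is only assumed $L_{\bm{\Phi}}$-smooth, I would invoke the descent lemma through the projected gradient mapping $\mathcal{G}_{\gamma_t}$ of Section~\ref{sec:convergence}: with the effective step locally stabilized near $1/L_{\bm{\Phi}}$ one obtains $\mathcal{L}_{\mathrm{aug}}(\mathcal{X}^{(t+1)}) \le \mathcal{L}_{\mathrm{aug}}(\mathcal{X}^{(t)}) - \tfrac{1}{2L_{\bm{\Phi}}}\|\mathcal{G}_{\gamma_t}\|_F^2$, the standard smooth-convex mechanism that produces the $L_{\bm{\Phi}}/2$ contribution to $C$. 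For the $\bm{W}$-block, local $\mu_W$-strong convexity lets me solve the regularized quadratic subproblem essentially in closed form (the ridge--nulling structure of Lemma~\ref{thm:single_eav_multi_user}); the key observation is that the minimizer $\bm{W}^\star(\bm{\Phi})$ is Lipschitz in $\bm{\Phi}$ with modulus proportional to $L_{\bm{\Phi}}/\mu_W$, so the coupling of a smooth $\bm{\Phi}$-update into the strongly convex block contributes the $L_{\bm{\Phi}}^2/(2\mu_W)$ term through a sensitivity (Lipschitz-of-argmin) estimate. The overall rate is governed by the slower of the two mechanisms, which is why $C$ takes the maximum.

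Next I would convert the per-step decrease into a $1/t$ rate. Locally the projected-gradient stationarity measure satisfies a restricted-secant/error-bound inequality $\Delta_t \le \langle \mathcal{G}_{\gamma_t},\, \mathcal{X}^{(t)}-\mathcal{X}^*\rangle \le \|\mathcal{G}_{\gamma_t}\|_F\,\|\mathcal{X}^{(t)}-\mathcal{X}^*\|_F$, combined with the non-expansiveness $\|\mathcal{X}^{(t+1)}-\mathcal{X}^*\|_F \le \|\mathcal{X}^{(t)}-\mathcal{X}^*\|_F$ of the locally well-behaved projection. Substituting the block decrease into $\Delta_t$ and telescoping across iterations yields, after the usual smooth-convex bookkeeping, $\Delta_t \le \tfrac{1}{t}\max\{L_{\bm{\Phi}}^2/(2\mu_W),\, L_{\bm{\Phi}}/2\}\,R^2$, which is exactly the claimed $C/t$ bound.

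The hard part will be justifying the convexity-type inequalities on the non-convex unit-modulus manifold $\mathcal{U}$. Globally the projection $\mathcal{P}_{\mathcal{U}}$ is neither linear nor non-expansive and $\mathcal{L}_{\mathrm{aug}}$ is non-convex in $\bm{\Phi}$, so both the secant bound and projection non-expansiveness are available only locally. I would secure them by restricting to the neighborhood of the limit set guaranteed by Theorem~\ref{thm:convergence}(a), where $\mathcal{U}$ is well-approximated by its tangent space, the stated local curvature constants $(\mu_W,L_{\bm{\Phi}})$ hold, and the Riemannian projected gradient agrees with its Euclidean counterpart up to higher-order terms. A secondary technicality is reconciling the diminishing schedule $\gamma_t=\gamma_0/(1+\alpha t)$ with the fixed-step $1/L_{\bm{\Phi}}$ analysis; I would either adopt a step-size-weighted telescoping sum or argue that, once inside the local region, $\gamma_t$ stabilizes within the admissible smooth-descent range, absorbing the discrepancy into the $O(1/t)$ constant.
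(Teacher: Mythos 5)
Your proposal is correct in outline and shares the paper's skeleton---per-block curvature assumptions ($\mu_W$-strong convexity for $\bm{W}$, $L_{\bm{\Phi}}$-smoothness for $\bm{\Phi}$), a descent-lemma step, and telescoping to $O(1/t)$---but it assembles the constant $C$ by a different mechanism. The paper works directly with the block functions $f_1(\bm{W})$, $f_2(\bm{\Phi})$, $f_3(\bm{C})$, invoking the Polyak--{\L}ojasiewicz-type consequence of strong convexity, $f_1(\bm{W}^{(t+1)})-f_1(\bm{W}^*)\le \tfrac{1}{2\mu_W}\|\nabla f_1(\bm{W}^{(t+1)})\|_F^2$, together with the quadratic upper bound $f_2(\bm{\Phi}^{(t+1)})-f_2(\bm{\Phi}^*)\le \tfrac{L_{\bm{\Phi}}}{2}\|\bm{\Phi}^{(t+1)}-\bm{\Phi}^*\|_F^2$, and then telescopes; the cross-term $L_{\bm{\Phi}}^2/(2\mu_W)$ is left implicit in how $\|\nabla f_1\|$ is bounded across blocks. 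You instead make that cross-coupling explicit through a Lipschitz-of-argmin sensitivity estimate for $\bm{W}^\star(\bm{\Phi})$ and route the $1/t$ rate through the projected-gradient-mapping secant inequality plus local non-expansiveness of the projection. Your version buys a cleaner account of where each term in $\max\{L_{\bm{\Phi}}^2/(2\mu_W),\,L_{\bm{\Phi}}/2\}$ originates and an honest treatment of two points the paper glosses over: the failure of global convexity/non-expansiveness on the unit-modulus set $\mathcal{U}$ (which you correctly localize to the neighborhood guaranteed by Theorem~\ref{thm:convergence}(a)) and the mismatch between the Robbins--Monro schedule $\gamma_t$ and the fixed-step $1/L_{\bm{\Phi}}$ analysis. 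Two minor omissions relative to the paper: you never mention the $\bm{C}$-block, which the paper disposes of by noting it is minimized exactly in closed form (cf.\ \eqref{eq:ck_closed_form}) and hence cannot increase the objective---you should add one sentence to that effect---and your freezing of $\bm{\lambda}^{(t)}$ should be tied to the vanishing dual residual as you indicate, since the telescoped quantity $\mathcal{L}_{\mathrm{aug}}(\cdot,\bm{\lambda}^{(t)})$ changes with $t$ through the dual argument as well.
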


\begin{proof}
Define the block functions at iteration $t$:
\begin{align}
f_1(\bm{W}) &= \mathcal{L}_{\mathrm{aug}}(\bm{W}, \bm{\Phi}^{(t)}, \bm{C}^{(t)}, \bm{\lambda}^{(t)}), \\
f_2(\bm{\Phi}) &= \mathcal{L}_{\mathrm{aug}}(\bm{W}^{(t+1)}, \bm{\Phi}, \bm{C}^{(t)}, \bm{\lambda}^{(t)}), \\
f_3(\bm{C}) &= \mathcal{L}_{\mathrm{aug}}(\bm{W}^{(t+1)}, \bm{\Phi}^{(t+1)}, \bm{C}, \bm{\lambda}^{(t)}).
\end{align}
By $\mu_W$-strong convexity of $f_1$ (locally) and Lipschitz continuity of $\nabla f_2$ with constant $L_{\bm{\Phi}}$, a standard descent lemma plus block coordinate arguments yield
\[
f_1(\bm{W}^{(t+1)}) - f_1(\bm{W}^*) \;\le\; \frac{1}{2\mu_W}\,\|\nabla f_1(\bm{W}^{(t+1)})\|_F^2,
\]
and
\[
f_2(\bm{\Phi}^{(t+1)}) - f_2(\bm{\Phi}^*) \;\le\; \frac{L_{\bm{\Phi}}}{2}\,\|\bm{\Phi}^{(t+1)}-\bm{\Phi}^*\|_F^2.
\]
The $\bm{C}$-block is minimized exactly (closed form per the MMSE update, cf.\ Section~\ref{sec:optimization}), so it does not increase the objective. Summing and telescoping the resulting inequalities over $t$, and using the diminishing $\{\gamma_t\}$ schedule, gives \eqref{eq:rate_convergence}.
\end{proof}

\begin{corollary}[Accelerated Sublinear Rate]
\emph{(Deterministic/full-CSI).} Under the conditions of Corollary~\ref{thm:convergence_rate}, if, in addition, the $\bm{C}$-subproblem is $\mu_C$-strongly convex (locally) and one employs Nesterov-style extrapolation on each block, then
\begin{equation}
\mathcal{L}_{\mathrm{aug}}(\mathcal{X}^{(t)}, \bm{\lambda}^{(t)}) - \mathcal{L}_{\mathrm{aug}}^* \;\le\; \frac{C'}{t^2},
\end{equation}
where
\[
C' \;=\; \max\{\mu_W^{-1},\, L_{\bm{\Phi}}/\mu_W,\, \mu_C^{-1}\}\; L^2 \, \|\mathcal{X}^{(0)}-\mathcal{X}^*\|_F^2,
\]
and $L$ is a global Lipschitz constant for $\nabla \mathcal{L}_{\mathrm{aug}}$ over the considered neighborhood.
\end{corollary}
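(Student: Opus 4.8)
The plan is to reduce the accelerated-rate claim to the classical Nesterov analysis for smooth, locally strongly convex minimization, applied block-by-block in a neighborhood of the limit set on which the dual iterates are effectively frozen. First I would invoke Theorem~\ref{thm:convergence}(a): since every accumulation point is a KKT point and $\bm{\lambda}^{(t)}$ converges, I may restrict attention to a neighborhood where $\mathcal{L}_{\mathrm{aug}}(\cdot,\bm{\lambda}^\star)$ obeys the stated curvature and smoothness bounds and treat $\bm{\lambda}$ as constant at $\bm{\lambda}^\star$. To eliminate the nonconvex unit-modulus constraint I would adopt the phase parametrization of the Remark, writing $[\bm{\Phi}]_{n,n}=e^{\J\theta_n}$ and optimizing over $\boldsymbol{\theta}\in\mathbb{R}^N$; locally this turns the $\bm{\Phi}$-block into an unconstrained $L_{\bm{\Phi}}$-smooth problem, so all three blocks become unconstrained and, under the hypotheses, locally convex.

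Next I would attach to each block the Nesterov extrapolation $\bm{y}^{(t)}=\bm{x}^{(t)}+\beta_t(\bm{x}^{(t)}-\bm{x}^{(t-1)})$ with the standard schedule $\beta_t=(t-1)/(t+2)$ and a gradient step of size $1/L$ from $\bm{y}^{(t)}$. The core step is to build a Lyapunov potential
\[
V_t \;=\; A_t\big(\mathcal{L}_{\mathrm{aug}}(\mathcal{X}^{(t)},\bm{\lambda}^\star)-\mathcal{L}_{\mathrm{aug}}^*\big)+\tfrac12\,\|\bm{z}^{(t)}-\mathcal{X}^*\|_F^2,
\]
with accumulated weights $A_t=\Theta(t^2)$ and an auxiliary sequence $\bm{z}^{(t)}$. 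Combining the descent lemma for each $L$-smooth block with the local strong-convexity lower bounds (constants $\mu_W,\mu_C$ for the $\bm{W}$- and $\bm{C}$-blocks, with the exact MMSE minimization of the $\bm{C}$-block from~\eqref{eq:ck_closed_form} contributing an additional nonincrease), I would establish $V_t\le V_{t-1}$. Telescoping then yields
\[
\mathcal{L}_{\mathrm{aug}}(\mathcal{X}^{(t)},\bm{\lambda}^{(t)})-\mathcal{L}_{\mathrm{aug}}^* \;\le\; \frac{V_0}{A_t}\;=\;O\!\Big(\frac{\|\mathcal{X}^{(0)}-\mathcal{X}^*\|_F^2}{t^2}\Big),
\]
and tracking the constants through the per-block smoothness/curvature ratios produces the stated $C'=\max\{\mu_W^{-1},\,L_{\bm{\Phi}}/\mu_W,\,\mu_C^{-1}\}\,L^2\,\|\mathcal{X}^{(0)}-\mathcal{X}^*\|_F^2$.

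The main obstacle is the interaction between block-coordinate cycling and acceleration: the classical Nesterov guarantee is for a single smooth convex objective, whereas here the three blocks are updated sequentially and coupled bilinearly through the MSE and leakage terms. I would control the Gauss--Seidel cross-block error by bounding it with the global Lipschitz constant $L$ of $\nabla\mathcal{L}_{\mathrm{aug}}$ (this is precisely where the $L^2$ factor in $C'$ originates), absorbing the one-step staleness of the other blocks into a summable $O(\gamma_t^2)$-type remainder that does not degrade the $1/t^2$ leading order. A secondary technical point is justifying convexity of the phase-parametrized $\bm{\Phi}$-block on the relevant neighborhood; since the unit-modulus torus is only locally diffeomorphic to $\mathbb{R}^N$, the rate is inherently \emph{local}, valid once the iterates enter the basin of the limit point, consistent with the local-curvature hypotheses already assumed in Corollary~\ref{thm:convergence_rate}.
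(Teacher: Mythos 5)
Your proposal follows essentially the same route as the paper's proof: Nesterov-style extrapolation with the $(t-1)/(t+2)$ momentum schedule applied block-wise, local strong convexity of the $\bm{W}$- and $\bm{C}$-blocks together with $L$-Lipschitz gradients, and an appeal to the standard accelerated block-coordinate estimate-sequence (Lyapunov) argument to obtain the $O(1/t^2)$ rate; the only cosmetic difference is that you handle the unit-modulus set via the phase parametrization while the paper uses the projection $\mathcal{P}_{\mathcal{U}}$, which the paper's own remark treats as equivalent. If anything, your write-up is more candid than the paper's about the genuine crux --- that classical acceleration is proved for a single smooth convex objective and its extension to Gauss--Seidel cycling over bilinearly coupled blocks requires controlling the staleness error, which both you and the paper ultimately defer to ``standard accelerated block-coordinate analysis'' rather than verifying in detail.
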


\begin{proof}
Introduce the quadratic surrogates per block with local curvature parameters $\mu_i$:
\[
Q_i(\bm{x}, \bm{y}) \;=\; \mathcal{L}_{\mathrm{aug}}(\bm{y}) \;+\; \langle \nabla \mathcal{L}_{\mathrm{aug}}(\bm{y}), \bm{x}-\bm{y} \rangle \;+\; \frac{\mu_i}{2}\,\|\bm{x}-\bm{y}\|_F^2.
\]
For block $i\in\{ \bm{W},\bm{\Phi},\bm{C}\}$, use the accelerated updates
\begin{align}
\bm{y}_i^{(k)} &= \bm{x}_i^{(k)} + \frac{k-1}{k+2}\big(\bm{x}_i^{(k)}-\bm{x}_i^{(k-1)}\big), \\
\bm{x}_i^{(k+1)} &= \arg\min_{\bm{x}} Q_i(\bm{x}, \bm{y}_i^{(k)}),
\end{align}
together with projection $\mathcal{P}_{\mathcal{U}}$ for the $\bm{\Phi}$-block. Standard accelerated block-coordinate analysis (with Lipschitz gradients and block strong convexity) yields the $O(1/t^2)$ rate.
\end{proof}

\section{Unknown CSI: Bayesian Optimization and Learning}
\label{sec:bayesian}

The secure transmission framework developed in the previous sections relied on known or reliably estimated CSI, leveraging either closed-form designs or alternating optimization schemes. However, in practical IRS-assisted communication systems, accurate CSI acquisition is often hindered by estimation overhead, hardware limitations, and user mobility.

To address this challenge, we adopt a Bayesian optimization framework that explicitly accounts for channel uncertainty. Rather than assuming access to deterministic or perfectly estimated CSI, we model the composite channel as a random variable:
\begin{equation}
\mathcal{H} \;=\; \Big\{ \bm{H}, \{ \bm{h}_k \}_{k=1}^{K_H}, \{ \bm{g}_e \}_{e=1}^{K_B} \Big\},
\end{equation}
where the joint distribution $p(\mathcal{H})$ is unknown but assumed stationary over the design horizon. This stochastic modeling transforms the original design into one that optimizes the \emph{expected} performance over possible channel realizations.

The composite channel $\mathcal{H}$ consists of $\bm{H} \in \mathbb{C}^{N \times M}$, $\bm{h}_k \in \mathbb{C}^{N \times 1}$, and $\bm{g}_e \in \mathbb{C}^{N \times 1}$, representing the BS--IRS channel matrix, the IRS--user-$k$ channel vector, and the IRS--eavesdropper-$e$ channel vector, respectively. Although the exact form of $p(\mathcal{H})$ is unknown, robust designs can be obtained by querying system performance under sampled realizations, without relying on explicit CSI estimation.

Let $\mathcal{X}$ denote the full design space
\begin{equation}
\begin{split}
  \mathcal{X} \equiv \big\{(\bm{\Phi},\bm{W},\bm{C}) :\;&
    |[\bm{\Phi}]_{n,n}| = 1,\\
  &\bm{W} \in \mathbb{C}^{M \times K_H},\\
  &\bm{C} = \mathrm{diag}(c_1,\ldots,c_{K_H})
  \big\}.
\end{split}
\end{equation}

To reduce computational complexity, we embed this structured space into a lower-dimensional Euclidean latent domain $\mathcal{Z} \subset \mathbb{R}^d$ via an encoder $\psi:\mathcal{X}\to\mathcal{Z}$ and deploy designs via a decoder $\varphi:\mathcal{Z}\to\mathcal{X}$. The decoder accounts for the fact that general random projections are not invertible.

\begin{theorem}[Finite-set near-isometry via random projection]
There exists a feature map $\phi:\mathcal{X}\to\mathbb{R}^{D}$ defined by
\begin{equation}
\begin{split}
\phi(\bm{\Phi},\bm{W},\bm{C})
&=
\Big[
\underbrace{\mathrm{Re}\,\mathrm{diag}(\bm{\Phi})\,;\ 
\mathrm{Im}\,\mathrm{diag}(\bm{\Phi})}_{\text{phase embedding }(2N)}\\
&\quad;\ 
\mathrm{vec}(\bm{W})\ ;\ 
\mathrm{Re}\,\mathrm{diag}(\bm{C})\ ;\ 
\mathrm{Im}\,\mathrm{diag}(\bm{C})
\Big],
\end{split}
\end{equation}

such that for any finite subset $\mathcal{S}\subset\mathcal{X}$ of cardinality $T$ and any $\epsilon\in(0,1)$ there exists a random linear map $\bm{R}:\mathbb{R}^{D}\to\mathbb{R}^{d}$ with
\[
d = \mathcal{O}\!\big(\epsilon^{-2}\log T\big)
\]
satisfying, with high probability, for all $\bm{x}_1,\bm{x}_2\in\mathcal{S}$,
\begin{equation}
\begin{split}
(1-\epsilon)\,\|\phi(\bm{x}_1) - \phi(\bm{x}_2)\|_2
&\le
\|\bm{R}\phi(\bm{x}_1) - \bm{R}\phi(\bm{x}_2)\|_2\\
&\le
(1+\epsilon)\,\|\phi(\bm{x}_1) - \phi(\bm{x}_2)\|_2.
\end{split}
\end{equation}

In particular, one may set $\psi=\bm{R}\circ\phi$. A decoder $\varphi$ (e.g., learned or structured) provides a reconstruction $\varphi(\bm{z})\approx\bm{x}$ for deployment. \hfill\mbox{}
\end{theorem}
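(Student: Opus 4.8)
The plan is to recognize this statement as an instance of the Johnson--Lindenstrauss (JL) lemma applied to the finite image $\phi(\mathcal{S})\subset\mathbb{R}^{D}$, and to verify that nothing in the structured complex domain $\mathcal{X}$ obstructs the standard argument. First I would confirm that $\phi$ is a well-defined map into a real Euclidean space of dimension $D = 2N + 2MK_H + 2K_H$: the unit-modulus IRS entries are encoded by their real and imaginary parts (a point on the unit circle), the complex precoder $\bm{W}$ contributes $2MK_H$ real coordinates through $\mathrm{vec}(\cdot)$ with real and imaginary parts stacked, and the diagonal equalizer contributes $2K_H$. Because $\phi$ is fixed and deterministic, the claim reduces entirely to preserving the pairwise Euclidean distances among the $T$ image points $\{\phi(\bm{x})\}_{\bm{x}\in\mathcal{S}}$, of which there are $\binom{T}{2} < T^2$.

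Next I would introduce the random linear map $\bm{R}\in\mathbb{R}^{d\times D}$ with i.i.d.\ zero-mean entries scaled so that $\EX\|\bm{R}\bm{u}\|_2^2 = \|\bm{u}\|_2^2$ for every fixed $\bm{u}\in\mathbb{R}^{D}$ (e.g.\ Gaussian entries $\mathcal{N}(0,1/d)$, or sub-Gaussian/Rademacher entries $\pm 1/\sqrt{d}$). The technical core is the distributional JL / concentration property: for each fixed $\bm{u}$,
\begin{equation}
\Pr\!\Big[\,\big|\,\|\bm{R}\bm{u}\|_2^2 - \|\bm{u}\|_2^2\,\big| > \epsilon\,\|\bm{u}\|_2^2\,\Big]
\;\le\; 2\exp(-c\,d\,\epsilon^2),
\end{equation}
for a universal constant $c>0$. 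For Gaussian $\bm{R}$ this follows from $\chi^2$ tail bounds on $d\,\|\bm{R}\bm{u}\|_2^2/\|\bm{u}\|_2^2$; for general sub-Gaussian entries it follows from a Hanson--Wright-type concentration inequality. I would apply this bound to each of the $\binom{T}{2}$ difference vectors $\bm{u}=\phi(\bm{x}_1)-\phi(\bm{x}_2)$.

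A union bound then caps the total failure probability by $\binom{T}{2}\cdot 2\exp(-c\,d\,\epsilon^2) \le T^2\exp(-c\,d\,\epsilon^2)$. Choosing $d \ge C\,\epsilon^{-2}\log(T/\delta)$ for a suitable constant $C$ drives this below any target $\delta\in(0,1)$, which yields the advertised order $d=\mathcal{O}(\epsilon^{-2}\log T)$. On the complementary high-probability event the squared-norm bounds hold simultaneously for all pairs; taking square roots and using $\sqrt{1-\epsilon}\ge 1-\epsilon$ and $\sqrt{1+\epsilon}\le 1+\epsilon$ for $\epsilon\in(0,1)$ (after a harmless constant rescaling of $\epsilon$ absorbed into the $\mathcal{O}$) converts them into the stated two-sided distance inequality. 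Setting $\psi=\bm{R}\circ\phi$ completes the construction.

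I expect the only genuine subtlety --- rather than a true obstacle --- to be the concentration estimate in the second step, since everything else is bookkeeping: the embedding merely flattens the complex, unit-modulus structure into $\mathbb{R}^{D}$, after which the domain geometry plays no further role and the classical JL machinery applies verbatim. A secondary point worth stating explicitly is that the guarantee concerns a \emph{fixed finite} set $\mathcal{S}$ (e.g.\ the queried design points), not a uniform isometry over all of $\mathcal{X}$; the non-invertibility of $\bm{R}$ is precisely why a separate decoder $\varphi$ is required for deployment.
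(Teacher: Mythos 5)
Your proposal is correct and follows essentially the same route as the paper: the paper's proof simply invokes the Johnson--Lindenstrauss lemma on the finite image $\phi(\mathcal{S})$ after noting that the real/imaginary phase embedding preserves Euclidean distances, whereas you additionally unpack the standard JL argument (sub-Gaussian concentration for each of the $\binom{T}{2}$ difference vectors plus a union bound and the choice $d=\mathcal{O}(\epsilon^{-2}\log T)$). Your explicit dimension count $D=2N+2MK_H+2K_H$ and your remarks on the finite-set scope and the need for a decoder are consistent with, and slightly more detailed than, what the paper states.
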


\begin{proof}
Apply the Johnson--Lindenstrauss (JL) lemma to the finite set $\phi(\mathcal{S})$~\cite{johnson1984extensions}. The phase embedding uses the real/imaginary parts of $\mathrm{diag}(\bm{\Phi})$ (points on the unit circle), which preserves Euclidean distances in $\mathbb{R}^{2N}$. Composition with a JL projection yields the stated near-isometry; see also random-embedding BO guarantees~\cite{wang2016bayesian}.
\end{proof}

The reduced optimization problem becomes
\begin{equation}
\label{eq:bo_stochastic_prog}
\begin{split}
\min_{\bm{z}\in\mathcal{Z}}\; & 
\mathbb{E}_{\mathcal{H}}\Big[
\sum_{k=1}^{K_H} \mathrm{MSE}_k\big(\varphi(\bm{z})\big) \\
&\qquad\qquad
+ \lambda \sum_{e=1}^{K_B} P_e^{(\mathrm{sig})}\big(\varphi(\bm{z})\big) 
+ \mu \big\|\bm{W}\big(\varphi(\bm{z})\big)\big\|_F^2 
\Big] \\[0.5ex]
\text{s.t.}\quad 
& \mathbb{E}_{\mathcal{H}}\big[ \big\|\bm{W}\big(\varphi(\bm{z})\big)\big\|_F^2\big] \le P_{\max},\\
& \mathbb{E}_{\mathcal{H}}\big[ \sum_{e=1}^{K_B} P_e^{(\mathrm{sig})}\big(\varphi(\bm{z})\big)\big] \le \Gamma_{\mathrm{leak}}.
\end{split}
\end{equation}
Here, $\varphi(\bm{z})=(\bm{\Phi}(\bm{z}),\bm{W}(\bm{z}),\bm{C}(\bm{z}))$ is the decoded design used for on-device evaluation; the equalizer matrix $\bm{C}$ is part of the decision and implicitly appears in each evaluation.

To respect the product structure of $\mathcal{X}$, we employ a separable kernel over blocks on the \emph{decoded} inputs:
\begin{equation}
\begin{split}
k_{\mathcal{X}}\!\bigl((\bm{\Phi},\bm{W},\bm{C}),(\bm{\Phi}',\bm{W}',\bm{C}')\bigr)
&= k_{\mathrm{torus}}(\bm{\Phi},\bm{\Phi}')\\
&\quad\cdot\,k_W(\bm{W},\bm{W}')\\
&\quad\cdot\,k_C(\bm{C},\bm{C}').
\end{split}
\end{equation}
with $k_{\mathrm{torus}}$ a periodic kernel on diagonal phases (e.g., via wrapped distances), and $k_W$, $k_C$ standard kernels on Euclidean blocks (e.g., automatic relevance determination radial basis function (ARD-RBF) kernel on vectorization). In practice we use the latent-space kernel
\[
k_{\mathcal{Z}}(\bm{z},\bm{z}')
\;=\;
k_{\mathcal{X}}\!\big(\varphi(\bm{z}),\,\varphi(\bm{z}')\big),
\]
which is geometry-aware through the decoder. The GP surrogate is trained on pairs $\big(\bm{z}_i,\,y_i\big)$ while evaluating kernels via the decoded designs.

\subsection{Gaussian Process Surrogate Modeling}
\label{subsec:gp_modeling}

To reduce the cost of evaluating the black-box objective under uncertain channel conditions, we employ GP regression as a surrogate model in the latent space $\mathcal{Z} \subset \mathbb{R}^d$. At iteration $t$, let $\mathcal{D}_t = \{(\bm{z}_i, y_i)\}_{i=1}^t$ denote the dataset of latent design points and their corresponding noisy evaluations.

The true objective is estimated via Monte Carlo averaging over channel realizations:
\begin{equation}
\label{eq:objective_mc}
\begin{split}
f(\bm{z}) \;=\; 
& \frac{1}{N_{\mathrm{MC}}} \sum_{n=1}^{N_{\mathrm{MC}}} \Bigg(
\sum_{k=1}^{K_H} \mathrm{MSE}_k^{(n)}\!\big(\varphi(\bm{z})\big) 
+ \lambda \sum_{e=1}^{K_B} P_{e,\mathrm{sig}}^{(n)}\!\big(\varphi(\bm{z})\big) \\[-0.25ex]
&\qquad\qquad\qquad\qquad
+ \mu \left\| \bm{W}^{(n)}\!\big(\varphi(\bm{z})\big) \right\|_F^2
\Bigg),
\end{split}
\end{equation}
where $\lambda>0$ balances MSE and secrecy pressure, and the superscript $(n)$ denotes the $n$-th Monte Carlo draw of $\mathcal{H}$. (Equivalently, $P_{e,\mathrm{sig}}^{(n)}(\cdot)$ is the Monte-Carlo instance of $P_e^{(\mathrm{sig})}(\cdot)$.)

We place a zero-mean GP prior on $f$ with covariance $k_{\mathcal{Z}}$:
\begin{equation}
f \sim \mathcal{GP}\big(0,\, k_{\mathcal{Z}}(\cdot,\cdot)\big).
\end{equation}
Given noisy observations $y_i = f(\bm{z}_i) + \varepsilon_i$ with $\varepsilon_i \sim \mathcal{N}(0,\sigma_n^2)$ (capturing measurement and Monte-Carlo noise), the GP posterior at a test point $\bm{z}$ admits closed-form expressions.

\begin{proposition}
\label{prop:gp_prediction}
Let $\bm{K}_t \in \mathbb{R}^{t \times t}$ be the kernel matrix with $[\bm{K}_t]_{ij} = k_{\mathcal{Z}}(\bm{z}_i,\bm{z}_j)$, and let $\bm{k}_t(\bm{z}) \in \mathbb{R}^t$ with $[\bm{k}_t(\bm{z})]_i = k_{\mathcal{Z}}(\bm{z},\bm{z}_i)$. Then
\begin{equation}
f(\bm{z}) \mid \mathcal{D}_t \sim \mathcal{N}\!\left( \mu_t(\bm{z}),\, \sigma_t^2(\bm{z}) \right),
\end{equation}
with predictive mean and variance
\begin{align}
\mu_t(\bm{z}) &= \bm{k}_t^\top(\bm{z}) \big( \bm{K}_t + \sigma_n^2 \bm{I} \big)^{-1} \bm{y}_t, \label{eq:gp_mean} \\
\sigma_t^2(\bm{z}) &= k_{\mathcal{Z}}(\bm{z},\bm{z}) - \bm{k}_t^\top(\bm{z}) \big( \bm{K}_t + \sigma_n^2 \bm{I} \big)^{-1} \bm{k}_t(\bm{z}). \label{eq:gp_var}
\end{align}
\end{proposition}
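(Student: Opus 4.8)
The plan is to exploit the defining property of a Gaussian process, namely that any finite collection of function values is jointly Gaussian, and to combine this with the additive Gaussian noise so that the posterior follows from the standard conditioning identity for multivariate normals. Concretely, I would first stack the training outputs $\bm{y}_t = [y_1,\ldots,y_t]^\top$ together with the unobserved test value $f(\bm{z})$ into a single $(t{+}1)$-dimensional random vector and argue that it is zero-mean jointly Gaussian. This is immediate from the prior $f\sim\mathcal{GP}(0,k_{\mathcal{Z}})$, which renders $[f(\bm{z}_1),\ldots,f(\bm{z}_t),f(\bm{z})]^\top$ Gaussian with zero mean, together with the independent measurement noise $\varepsilon_i\sim\mathcal{N}(0,\sigma_n^2)$ added only to the training coordinates.

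Next I would write out the joint covariance in block form. The training block has covariance $\mathrm{Cov}(\bm{y}_t)=\bm{K}_t+\sigma_n^2\bm{I}$, since $[\bm{K}_t]_{ij}=k_{\mathcal{Z}}(\bm{z}_i,\bm{z}_j)$ and the independent noise contributes $\sigma_n^2\bm{I}$ on the diagonal. The cross-covariance between the test value and the observations is $\mathrm{Cov}(f(\bm{z}),\bm{y}_t)=\bm{k}_t(\bm{z})$, because the noise is independent of $f(\bm{z})$, so only the kernel terms $k_{\mathcal{Z}}(\bm{z},\bm{z}_i)$ survive; and the prior variance of the test value is $\mathrm{Var}(f(\bm{z}))=k_{\mathcal{Z}}(\bm{z},\bm{z})$.

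I would then invoke the Gaussian conditioning lemma: for a jointly Gaussian pair with the above blocks, $f(\bm{z})\mid\bm{y}_t$ is Gaussian with mean $\bm{k}_t^\top(\bm{z})(\bm{K}_t+\sigma_n^2\bm{I})^{-1}\bm{y}_t$ and variance $k_{\mathcal{Z}}(\bm{z},\bm{z})-\bm{k}_t^\top(\bm{z})(\bm{K}_t+\sigma_n^2\bm{I})^{-1}\bm{k}_t(\bm{z})$. Matching these expressions to $\mu_t(\bm{z})$ and $\sigma_t^2(\bm{z})$ in \eqref{eq:gp_mean}--\eqref{eq:gp_var} completes the argument, noting that conditioning on the dataset $\mathcal{D}_t$ is precisely conditioning on $\bm{y}_t$ at the fixed inputs $\{\bm{z}_i\}$.

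There is no genuine obstacle here, since this is the textbook posterior for GP regression; the only points requiring care are (i) that the regularized matrix $\bm{K}_t+\sigma_n^2\bm{I}$ is invertible, which holds for any $\sigma_n^2>0$ because $\bm{K}_t$ is positive semidefinite, and (ii) that the noise is independent across observations and independent of $f(\bm{z})$, so that it enters only through the diagonal of the training block and not the cross-covariance. Both are built into the modeling assumptions of Section~\ref{subsec:gp_modeling}, so the derivation reduces to a direct application of the conditioning identity.
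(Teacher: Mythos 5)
Your proposal is correct and follows exactly the route the paper takes: the paper's proof is a one-line appeal to "conditioning properties of multivariate Gaussian distributions," and your write-up simply fills in the standard details (joint Gaussianity of the stacked vector, the block covariance with $\bm{K}_t+\sigma_n^2\bm{I}$ on the training block and $\bm{k}_t(\bm{z})$ as the cross-covariance, then the Gaussian conditioning lemma). No discrepancy to report.
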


\begin{proof}
This follows from conditioning properties of multivariate Gaussian distributions. The predictive variance decreases as informative data are acquired and vanishes whenever $\bm{z}$ lies in the span induced by $\{\bm{z}_i\}_{i=1}^t$ under the kernel.
\end{proof}

This GP surrogate enables sequential Bayesian optimization via acquisition functions. The overall constrained Bayesian optimization procedure is summarized in Algorithm~\ref{alg:constrained_bo}.

\begin{algorithm}[t]
\caption{Constrained Bayesian Optimization under Channel Uncertainty}
\label{alg:constrained_bo}
\SetAlFnt{\small}
\SetAlCapFnt{\small}
\SetAlCapNameFnt{\small}
\DontPrintSemicolon
\setlength{\abovecaptionskip}{4pt}
\setlength{\belowcaptionskip}{2pt}

\KwIn{Latent space $\mathcal{Z}\!\subset\!\mathbb{R}^d$, GP kernel $k_{\mathcal{Z}}$, budget $P_{\max}$, leakage $\Gamma_{\mathrm{leak}}$, MC samples $N_{\mathrm{MC}}$}
\KwOut{Optimal design $(\bm{\Phi}^*,\bm{W}^*,\bm{C}^*)=\varphi(\bm{z}^*)$}

\textbf{Init:} $\mathcal{D}_0\!\gets\!\emptyset$. Sample $\{\bm{z}_i\}_{i=1}^{N_{\mathrm{init}}}\subset\mathcal{Z}$ via Latin hypercube.\\
\For{$i=1$ \KwTo $N_{\mathrm{init}}$}{
  $(\bm{\Phi}_i,\bm{W}_i,\bm{C}_i)\!\gets\!\varphi(\bm{z}_i)$; \\
  Evaluate
  \[
  \begin{split}
    f_i &= \frac{1}{N_{\mathrm{MC}}}\sum_{n=1}^{N_{\mathrm{MC}}}\Biggl(
      \sum_{k=1}^{K_H} \mathrm{MSE}_k^{(n)}(\varphi(\bm{z}_i))
      + \lambda \sum_{e=1}^{K_B} P_{e,\mathrm{sig}}^{(n)}(\varphi(\bm{z}_i))\\
    &\qquad\qquad\qquad\qquad
      + \mu \big\| \bm{W}^{(n)}(\varphi(\bm{z}_i)) \big\|_F^2
    \Biggr),\\
    g_{1,i} &= \frac{1}{N_{\mathrm{MC}}} \sum_{n=1}^{N_{\mathrm{MC}}} \|\bm{W}^{(n)}(\varphi(\bm{z}_i))\|_F^2 - P_{\max},\\
    g_{2,i} &= \frac{1}{N_{\mathrm{MC}}} \sum_{n=1}^{N_{\mathrm{MC}}} \sum_{e=1}^{K_B} P_{e,\mathrm{sig}}^{(n)}(\varphi(\bm{z}_i)) - \Gamma_{\mathrm{leak}}.
  \end{split}
  \]
  Update $\mathcal{D}_0 \gets \mathcal{D}_0 \cup \{(\bm{z}_i,f_i,g_{1,i},g_{2,i})\}$;
}

$t\!\gets\!1$\\
\While{not converged and $t<T$}{
  Train GPs for $f,g_1,g_2$:
  \[
    \bm{\theta}_j^* = \arg\max_{\bm{\theta}} \log p(\mathcal{D}_{t-1}^{(j)}\mid \bm{\theta}),\quad j\in\{f,g_1,g_2\}
  \]
  Select
  \[
    \bm{z}_t = \arg\max_{\bm{z}\in\mathcal{Z}} \mathrm{EI}_t(\bm{z}) \cdot \prod_{j=1}^2 \mathbb{P}(g_j(\bm{z})\le 0)
  \]
  $(\bm{\Phi}_t,\bm{W}_t,\bm{C}_t)\!\gets\!\varphi(\bm{z}_t)$; evaluate $f_t,g_{1,t},g_{2,t}$ as above.\\
  $\mathcal{D}_t \gets \mathcal{D}_{t-1} \cup \{(\bm{z}_t,f_t,g_{1,t},g_{2,t})\}$;\\
  $t\!\gets\!t+1$
}

\textbf{Post-process:} Define feasible set
\[
\mathcal{F}_T = \{ \bm{z}\in\mathcal{D}_T \mid g_1(\bm{z})\le 0,\ g_2(\bm{z})\le 0\}.
\]
\textbf{Output:} $\bm{z}^* = \arg\min_{\bm{z}\in\mathcal{F}_T} f(\bm{z})$, and return $\varphi(\bm{z}^*)$.
\end{algorithm}

\subsection{Convergence Guarantee}
\label{sec:bayesian_convergence}

This section establishes the convergence behavior of our constrained Bayesian optimization algorithm (Algorithm~\ref{alg:constrained_bo}) applied to the stochastic formulation in \eqref{eq:bo_stochastic_prog}. The procedure operates in a latent design space $\mathcal{Z}$, where each candidate $\bm{z} \in \mathcal{Z}$ is mapped to a full system configuration via a decoder $\varphi: \mathcal{Z} \rightarrow \mathcal{X}$. The objective is to iteratively minimize a black-box performance metric under probabilistic constraints.

\begin{theorem}[Convergence Guarantee]
\label{thm:bo_convergence}
Let $f: \mathcal{Z} \to \mathbb{R}$ be the latent objective defined as $f(\bm{z}) := F(\varphi(\bm{z}))$, where $f \sim \mathcal{GP}(0, k_{\mathcal{Z}})$ and $\sup_{\bm{z}} k_{\mathcal{Z}}(\bm{z}, \bm{z}) \leq 1$. Given noisy evaluations $y_t = f(\bm{z}_t) + \epsilon_t$ with $\epsilon_t \sim \mathcal{N}(0, \sigma^2)$ and a query sequence $\{\bm{z}_t\}_{t=1}^T$ generated by the constrained BO algorithm, define the cumulative regret
\begin{equation}
R_T = \sum_{t=1}^T \big[ f(\bm{z}_t) - f(\bm{z}^*) \big],
\end{equation}
where $\bm{z}^* = \arg\min_{\bm{z} \in \mathcal{F}_\delta} f(\bm{z})$ and $\mathcal{F}_\delta = \{ \bm{z} \mid \mathbb{P}(g_j(\bm{z}) \leq 0) \geq 1-\delta,~\forall j \}$.

Then, with probability at least $1 - \delta$, the regret is bounded as
\begin{equation}
R_T \;\leq\; \sqrt{C_1\, T\, \beta_T\, \gamma_T} \;+\; C_2 \sqrt{T \log(1/\delta)},
\end{equation}
where $\gamma_T$ is the maximum information gain,
\begin{equation}
\gamma_T \;=\; \max_{\{\bm{z}_t\}} 
\frac{1}{2} \log \det \!\left( \bm{I}_T + \frac{1}{\sigma^2} \bm{K}_T \right),
\end{equation}
and $\beta_T = 2 \|f\|_{\mathcal{H}_{k_{\mathcal{Z}}}}^2 + 300\, \gamma_T \log^3(T/\delta)$ with reproducing kernel Hilbert space (RKHS) norm $\|f\|_{\mathcal{H}_{k_{\mathcal{Z}}}}$. Constants $C_1, C_2 > 0$ depend on $k_{\mathcal{Z}}$ and $\sigma^2$. Thus, the algorithm achieves no-regret:
\begin{equation}
\lim_{T \to \infty} \frac{R_T}{T} = 0 \quad \text{a.s.}
\end{equation}
\end{theorem}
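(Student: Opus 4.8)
The plan is to adapt the GP-UCB regret analysis of Srinivas et al. to the constrained setting, so that the leading term reproduces the unconstrained bound while the additive $C_2\sqrt{T\log(1/\delta)}$ summand absorbs the cost of probabilistic feasibility. The backbone is a uniform confidence bound. First I would invoke a self-normalized (martingale) concentration argument in the RKHS/agnostic regime with $\|f\|_{\mathcal{H}_{k_{\mathcal{Z}}}}<\infty$ to show that, with probability at least $1-\delta/3$, the event
\[
|f(\bm{z})-\mu_{t-1}(\bm{z})|\le \beta_t^{1/2}\,\sigma_{t-1}(\bm{z})
\]
holds simultaneously for all $t\ge 1$ and all $\bm{z}\in\mathcal{Z}$, with $\beta_t = 2\|f\|_{\mathcal{H}_{k_{\mathcal{Z}}}}^2 + 300\,\gamma_t\log^3(t/\delta)$ exactly as stated, where $\mu_{t-1},\sigma_{t-1}$ are the GP posterior quantities of Proposition~\ref{prop:gp_prediction}. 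The same construction, applied to the independent GP surrogates for $g_1,g_2$, yields analogous bands on a further $\delta/3$-probability event, so a union bound across the objective and constraint events (and the martingale event introduced below) preserves overall confidence $1-\delta$.

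Conditioned on these events, the per-step regret is controlled by the lower-confidence-bound (LCB) selection rule of the UCB variant. Since $\bm{z}^\star\in\mathcal{F}_\delta$ is (probably) feasible and the rule minimizes the objective LCB over the estimated feasible region, I would chain $f(\bm{z}_t)\le \mu_{t-1}(\bm{z}_t)+\beta_t^{1/2}\sigma_{t-1}(\bm{z}_t)$ with the optimality of $\bm{z}_t$ and the lower bound $f(\bm{z}^\star)\ge \mu_{t-1}(\bm{z}^\star)-\beta_t^{1/2}\sigma_{t-1}(\bm{z}^\star)$ to obtain $f(\bm{z}_t)-f(\bm{z}^\star)\le 2\beta_t^{1/2}\sigma_{t-1}(\bm{z}_t)$. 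Summing, using monotonicity of $\beta_t$, and applying Cauchy--Schwarz gives
\[
\sum_{t=1}^T 2\beta_t^{1/2}\sigma_{t-1}(\bm{z}_t)
\le 2\beta_T^{1/2}\sqrt{T\sum_{t=1}^T\sigma_{t-1}^2(\bm{z}_t)} .
\]
The final ingredient is the standard information-gain inequality $\sum_{t=1}^T\sigma_{t-1}^2(\bm{z}_t)\le \frac{2}{\log(1+\sigma^{-2})}\,\gamma_T$, obtained by bounding each $\sigma_{t-1}^2$ (using $\sup_{\bm{z}}k_{\mathcal{Z}}(\bm{z},\bm{z})\le 1$) by a logarithmic term and telescoping the log-determinant; this produces the leading term $\sqrt{C_1\,T\,\beta_T\,\gamma_T}$ with $C_1 = 8/\log(1+\sigma^{-2})$.

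The genuinely new part — and the step I expect to be the main obstacle — is accounting for constraint feasibility, which generates the $C_2\sqrt{T\log(1/\delta)}$ summand. Because the comparator $\bm{z}^\star$ is defined over the relaxed set $\mathcal{F}_\delta=\{\bm{z}:\mathbb{P}(g_j(\bm{z})\le0)\ge1-\delta\}$ rather than the true feasible set, and constraints are observed only through noisy GP posteriors, I would show that the gap between the estimated and true feasible regions contributes a bounded per-step slack $s_t$ whose cumulative effect concentrates: treating $\{s_t-\EX[s_t\mid\mathcal{D}_{t-1}]\}$ as a bounded martingale-difference sequence and applying Azuma--Hoeffding yields $\sum_t s_t \le C_2\sqrt{T\log(1/\delta)}$ on the remaining $\delta/3$ event. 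The subtlety is to verify that feasibility decisions made under the constraint-GP confidence bands never exclude $\bm{z}^\star$ (keeping the comparator admissible) while the selected $\bm{z}_t$ stays feasible up to the controlled slack; this two-sided coupling is precisely what forces the $\delta$-relaxation in the definition of $\mathcal{F}_\delta$. Combining the two blocks on the intersection of the high-probability events gives the stated bound, and the no-regret claim follows by substituting kernel-specific rates — e.g. $\gamma_T=O(\mathrm{polylog}\,T)$ for the squared-exponential (ARD-RBF) kernel and $\gamma_T=O(T^{d/(2\nu+d)}\log T)$ for Matérn — into $\beta_T=O(\gamma_T\log^3 T)$, so that $\sqrt{T\beta_T\gamma_T}=o(T)$ and $\sqrt{T\log(1/\delta)}=o(T)$, whence $R_T/T\to0$ almost surely.
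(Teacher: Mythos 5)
Your proposal is correct in its core and follows the same GP--UCB skeleton as the paper's Appendix~C for the leading term: a uniform confidence band $|f(\bm{z})-\mu_{t-1}(\bm{z})|\le\beta_t^{1/2}\sigma_{t-1}(\bm{z})$, the chaining $r_t\le 2\beta_t^{1/2}\sigma_{t-1}(\bm{z}_t)$, Cauchy--Schwarz, and the information-gain bound on $\sum_t\sigma_{t-1}^2(\bm{z}_t)$. The differences are instructive. First, the paper obtains the intermediate inequality $\Delta_t\le\sqrt{\beta_t}\bigl(\sigma_{t-1}(\bm{z}^*)-\sigma_{t-1}(\bm{z}_t)\bigr)$ by appealing to a ``standard control'' for the EIC acquisition actually used in Algorithm~2, whereas you derive it from optimality of an explicit LCB/UCB selection rule; your derivation is the cleaner and more rigorous one for the ``constrained UCB variant'' the theorem nominally covers, and it makes explicit the hypothesis (which the paper leaves implicit) that the acquisition step never excludes $\bm{z}^*$ from the estimated feasible region. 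Second, and more substantively, the paper's proof only ever produces the term $\sqrt{C_1 T\beta_T\gamma_T}$ and then handles feasibility separately via an exponential decay of violation probability plus Borel--Cantelli; it never actually derives the additive $C_2\sqrt{T\log(1/\delta)}$ term appearing in the theorem statement. Your Azuma--Hoeffding argument on the cumulative feasibility slack is precisely the missing bookkeeping that accounts for that term, so on this point your route is not only different but more complete --- provided you can indeed establish that the per-step slack $s_t$ is bounded and that its conditional means are controlled, which you correctly flag as the main obstacle. Your explicit constant $C_1=8/\log(1+\sigma^{-2})$ and your citation of the self-normalized RKHS concentration bound (rather than the paper's geometric-GP reference) also match the provenance of the stated $\beta_T=2\|f\|_{\mathcal{H}_{k_{\mathcal{Z}}}}^2+300\,\gamma_T\log^3(T/\delta)$ more faithfully.
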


\begin{proof}
See Appendix C.
\end{proof}

This result shows the latent-space Bayesian optimization framework attains sublinear regret compared to the best high-confidence feasible solution. As the surrogates for $g_j$ improve, the empirical feasibility estimates $\mathbb{P}(g_j(\boldsymbol{z})\le0)$ become more accurate, steering iterates into the high-probability feasible region while also improving the objective.

\section{Simulation Results}  
\label{sec:sim}  

In this section, we present numerical results validating the proposed secure learning–based opportunistic transmission framework.

\subsection{Simulation Setup}

We assume a BS with \(M=8\) antennas and an IRS with \(N=64\) elements. PU activity follows a binary Markov model with \(P_{01}=0.2\), \(P_{10}=0.3\) and a uniform prior. All channels are i.i.d.\ Rayleigh fading, \(\mathcal{CN}(0,1)\), and SU receiver noise is circularly symmetric complex Gaussian with unit variance. IRS phase shifts are initialized as \(\theta_i\sim\mathcal{U}(0,2\pi)\) with unit amplitude.

An alternating optimization updates the BS beamformer, IRS phases, and SU filters block-wise under known or partial CSI; when CSI is unknown, a trust-region Bayesian optimization subroutine—with a squared-exponential kernel and GP-UCB acquisition (\(\beta_{t+1}=0.4\log(2t+2)\))—guides decisions without gradients. Results are averaged over \(10^4\) Monte Carlo trials, with \(T_{\max}=100\) and convergence declared once the objective improves by less than \(10^{-3}\) for three consecutive iterations.

\begin{figure}[!t]
    \centering
    \begin{subfigure}[b]{0.48\columnwidth}
        \centering
        \includegraphics[width=\linewidth]{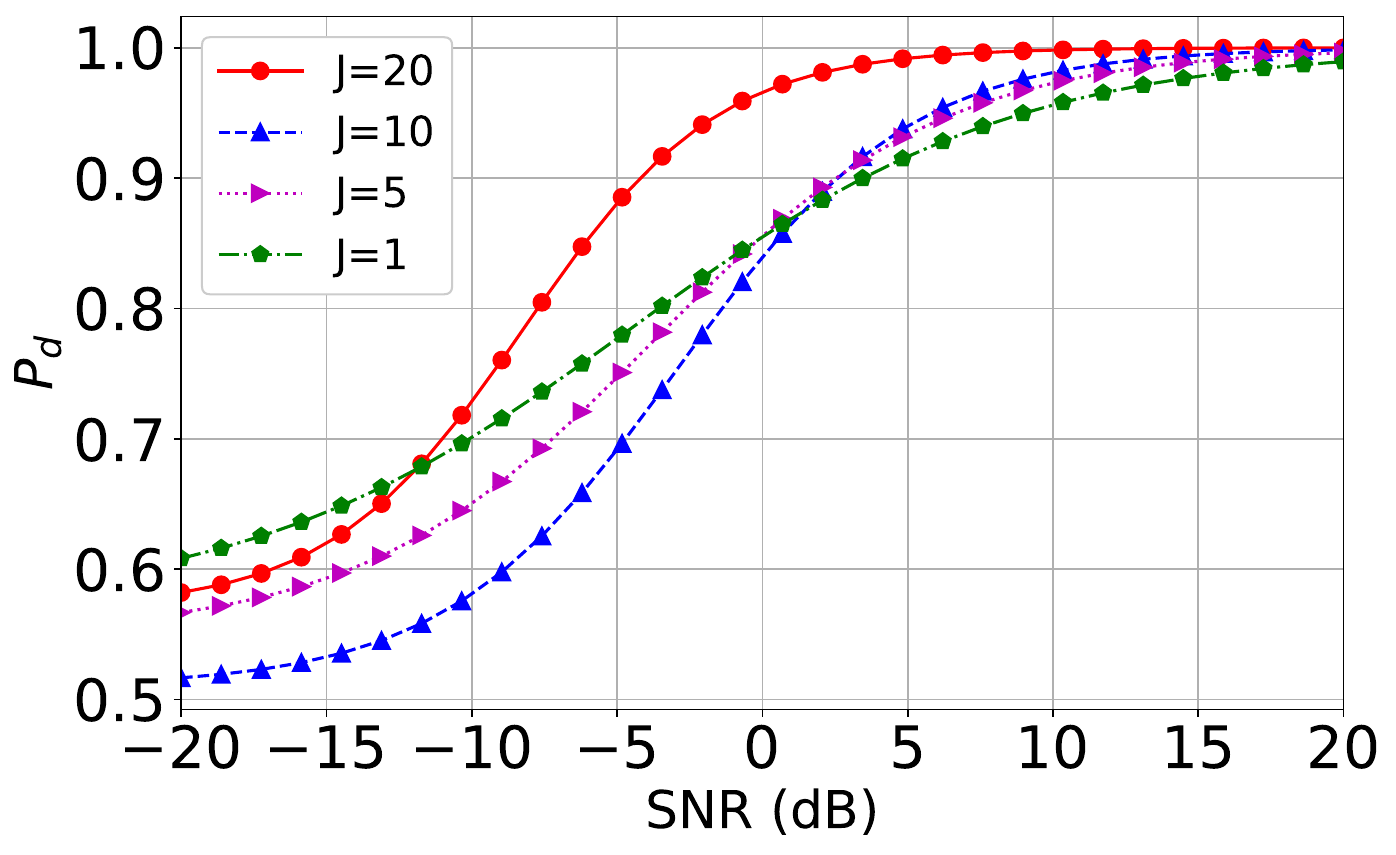}
        \caption{$K_H=8$}
        \label{fig:pdsnr_K8}
    \end{subfigure}
    \hfill
    \begin{subfigure}[b]{0.48\columnwidth}
        \centering
        \includegraphics[width=\linewidth]{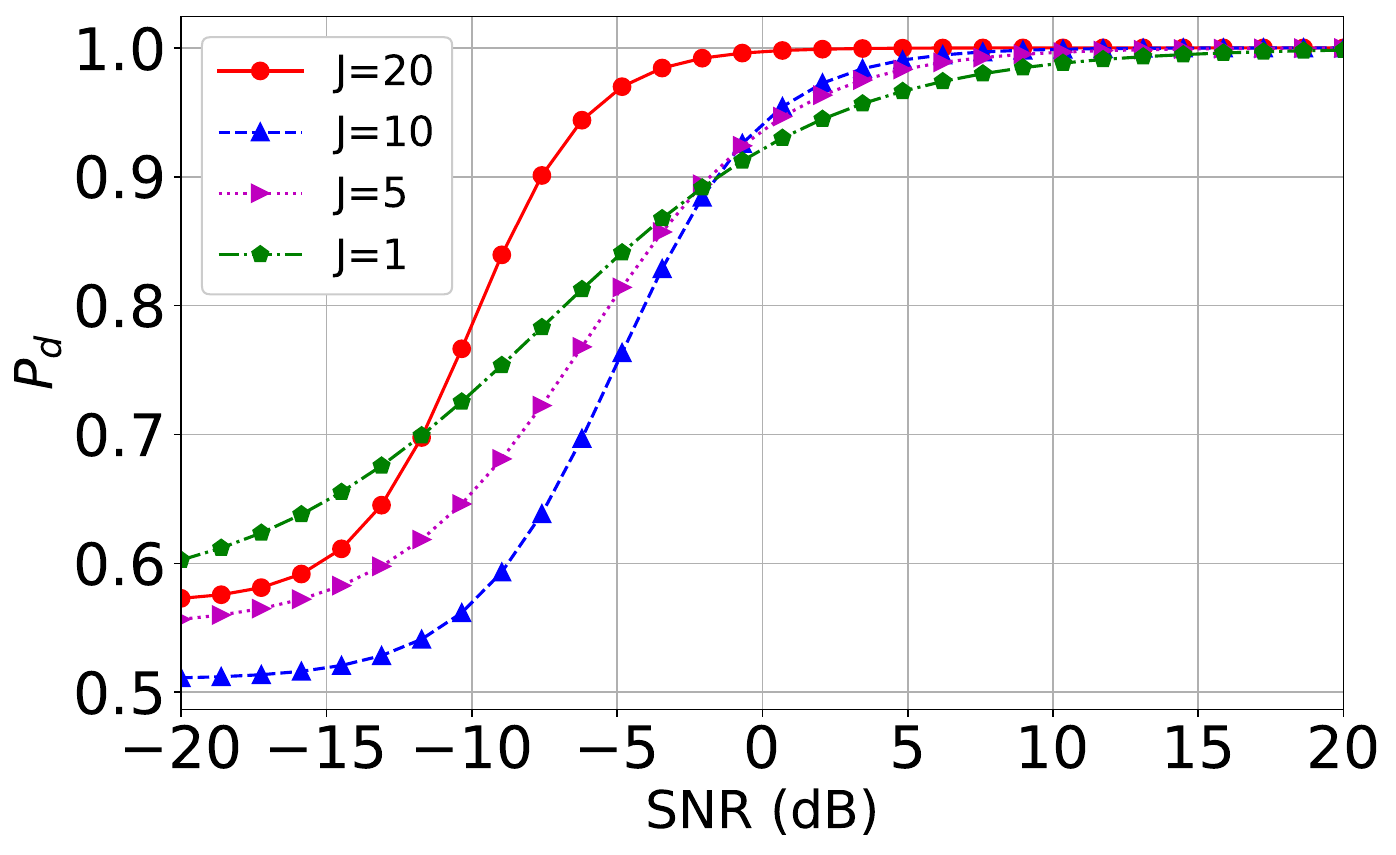}
        \caption{$K_H=16$}
        \label{fig:pdsnr_K16}
    \end{subfigure}
    \caption{Probability of detection $P_d$ versus SNR for different numbers of samples $J$ under a fixed false alarm rate $P_{\rm fa}=0.01$ and $\mathcal{R}=25$. (a) $K_H=8$, (b) $K_H=16$, with $K_B=3$ in both cases.}
    \label{fig:pdsnr}
\end{figure}
\subsection{Results and Discussions}
Fig.~\ref{fig:pdsnr} plots the detection probability \(P_d\) versus SNR for various sample sizes \(J\) at fixed \(P_{\rm fa}=0.01\) and maximum decision rounds \(\mathcal{R}=25\), under two configurations: (a) \(K_H=8\) and (b) \(K_H=16\), each with \(K_B=3\) Byzantine users.

The proposed algorithm remains robust even in low-SNR regimes, and increasing \(J\) significantly improves \(P_d\) by reducing estimation uncertainty. Although adding honest users (\(K_H\)) enhances performance in moderate-SNR settings, its marginal benefit diminishes when \(J\) is large due to accuracy saturation; nevertheless, a larger \(K_H\) accelerates convergence (see Fig.~\ref{fig:pditer}).

These results highlight a trade-off between communication (or energy) cost and statistical accuracy: strong detection performance is achieved even with moderate \(J\), making the method well-suited to resource-constrained sensing or transmission scenarios.

\begin{figure}[!t]
    \centering
    \includegraphics[scale=0.20]{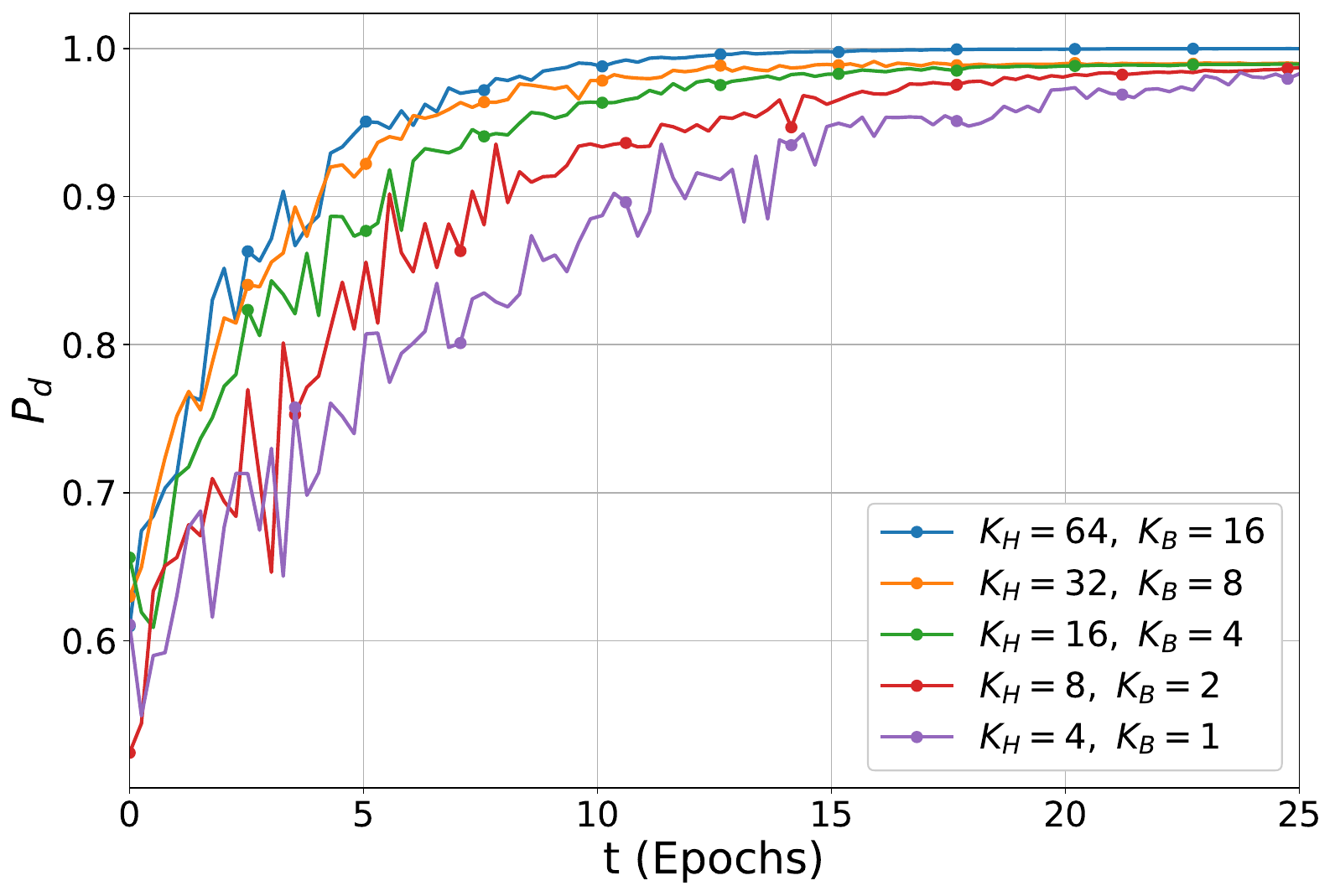}
    \caption{Convergence behavior of the proposed algorithm versus iteration index for $J=10$, $P_{\rm fa}=0.01$, and maximum iteration $\mathcal{R}=25$, under different user settings with $K_B/K_H = 1/4$.}
    \label{fig:pditer}
\end{figure}

Fig.~\ref{fig:pditer} plots the detection probability \(P_d\) versus iteration index for \(J=10\), \(P_{\rm fa}=0.01\), \(\mathcal{R}=25\), and a Byzantine-to-honest user ratio \(K_B/K_H=1/4\). The algorithm converges rapidly—stabilizing \(P_d\) within a few iterations—making it well-suited for real-time or delay-sensitive applications.

This fast convergence arises from distributed learning, whereby users collaboratively refine local decisions through iterative updates. As \(K_H\) increases, the final \(P_d\) improves but with diminishing returns, although additional honest users consistently speed up convergence.

These results demonstrate that the proposed method is computationally efficient, scalable, and maintains robust detection performance even in adversarial settings.

\begin{figure}[!t]
    \centering
    \subfloat[Alternating Learning]{\includegraphics[width=0.48\columnwidth]{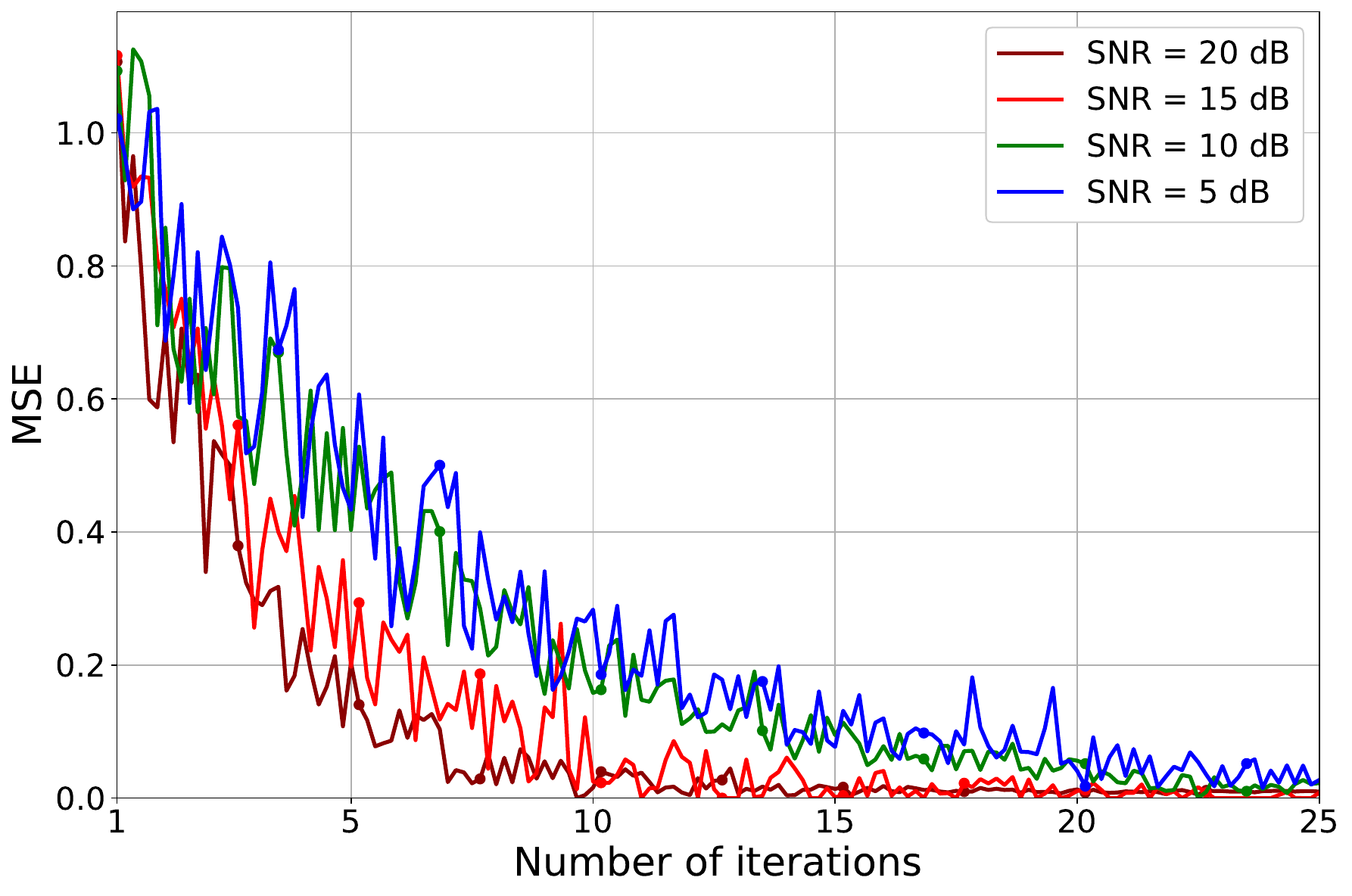}%
    \label{fig:mseiter_alt}} 
    \hfill
    \subfloat[Bayesian Learning]{\includegraphics[width=0.48\columnwidth]{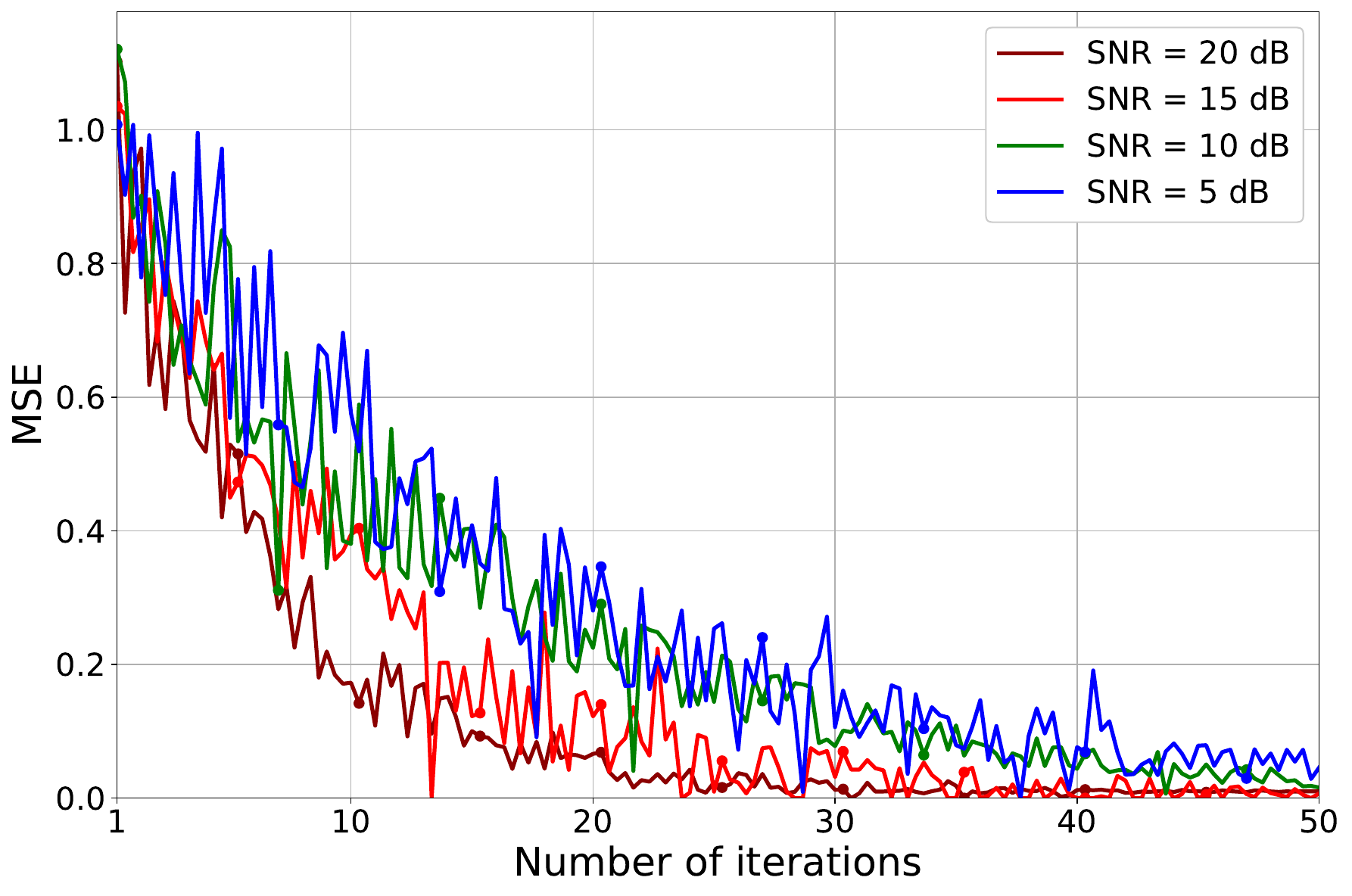}%
    \label{fig:mseiter_bayes}} 
    \caption{Convergence rate of MSE for the two proposed optimization approaches: (a) Alternating learning with partial CSI, and (b) Bayesian learning without CSI knowledge.}
    \label{fig:mseiter}
\end{figure}

Fig.~\ref{fig:mseiter} compares the convergence of the two optimization schemes: (a) alternating learning and (b) Bayesian learning. The alternating algorithm converges markedly faster—leveraging partial CSI to perform direct gradient updates on \(\{\mathbf{W},\boldsymbol{\Phi},\mathbf{C}\}\)—while the CSI‐free Bayesian approach relies on an acquisition function and thus needs more iterations. Nonetheless, Bayesian learning reaches competitive performance given sufficient iterations, confirming alternating learning as the choice when partial CSI is available and Bayesian learning as a robust alternative otherwise.

\begin{figure}[!t]
    \centering
    \includegraphics[scale=0.22]{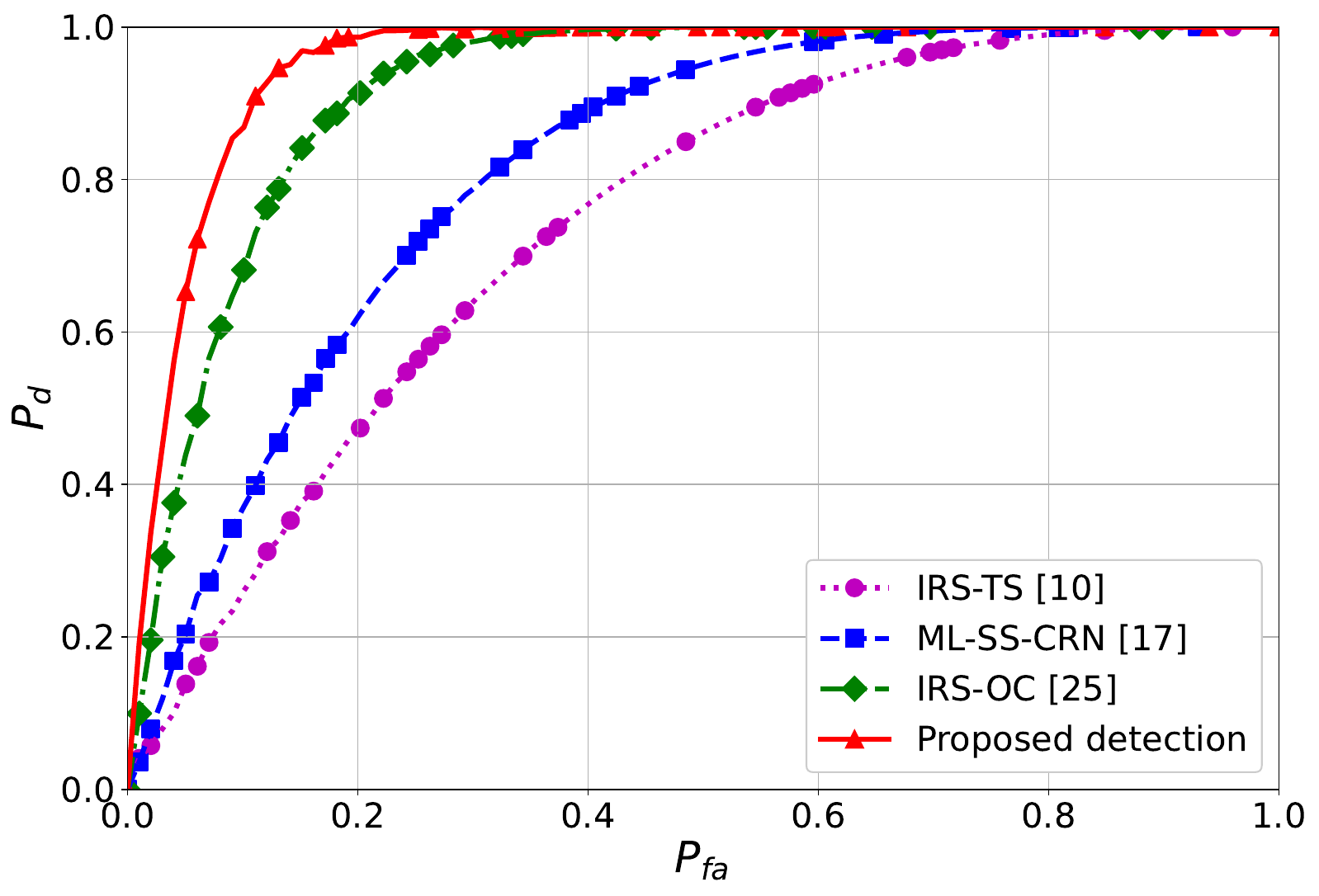}
    \caption{Receiver operating characteristic (ROC) curves comparing the proposed detection learning algorithm with benchmark methods: IRS-TS~\cite{Shao2022}, ML-SS-CRN~\cite{BB69}, and IRS-OC~\cite{Yashvanth2023}. Evaluation parameters: $\text{SNR}=5\,\text{dB}$, $J=10$ observations per user, $K_H=8$ honest users, $K_B=3$ adversarial nodes, and $\mathcal{R}=25$ resource units.}
    \label{fig:roc}
\end{figure}

Fig.~\ref{fig:roc} plots receiver operating characteristic (ROC) curves at \(\mathrm{SNR}=5\)\,dB, \(J=10\), \(K_H=8\), \(K_B=3\), \(\mathcal{R}=25\), comparing our distributed detection learning algorithm with IRS-TS~\cite{Shao2022}, ML-SS-CRN~\cite{BB69}, and IRS-OC~\cite{Yashvanth2023}. Our method consistently achieves higher \(P_d\) for all \(P_{\rm fa}\).

This gain stems from Bayesian belief updates that handle uncertainty, structured exploration for optimal sensing, and resilience to adversaries, enabling robust performance with limited feedback and in dynamic networks. When fixing \(K_HJ=80\) (e.g.\ \(K_H=16,J=5\)), the gap narrows but our approach retains an edge under mobility and topology changes. The ROC curves further demonstrate tunable operating points, allowing sensitivity–specificity trade-offs for practical deployment.

\begin{figure}[!t]
    \centering
    \includegraphics[scale=.22]{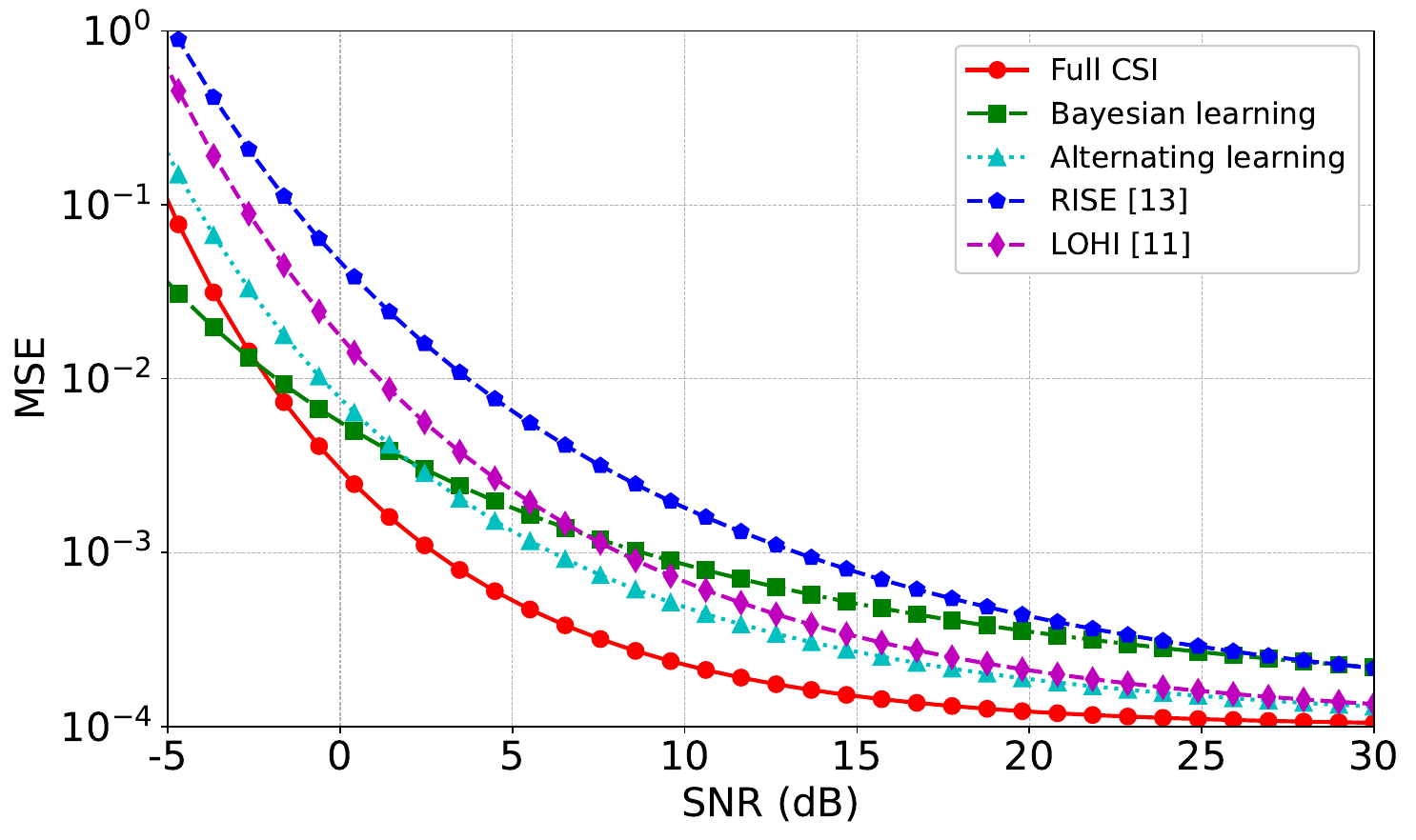}
    \caption{Sum MSE versus SNR for the proposed full CSI, alternating learning, and Bayesian learning schemes, compared with LOHI~\cite{yu2023learning} and RISE~\cite{huang2020reconfigurable} for $N=16$.}
    \label{fig:msesnr}
\end{figure}

Fig.~\ref{fig:msesnr} plots the sum MSE versus SNR for our alternating and Bayesian learning schemes with \(N=16\) IRS elements, alongside the full‐CSI lower bound and two baselines, LOHI~\cite{yu2023learning} and RISE~\cite{huang2020reconfigurable}. Both proposed methods deliver competitive MSE—particularly in the 5–10 dB range—while requiring far less CSI overhead. Although the full‐CSI case attains the minimum MSE, it is impractical in real deployments. Our schemes outperform LOHI and exceed RISE in efficiency without relying on large training datasets or complex action spaces, demonstrating their effectiveness in dynamic wireless environments with limited channel knowledge.

\begin{figure}[!t]
    \centering
    \includegraphics[scale=0.22]{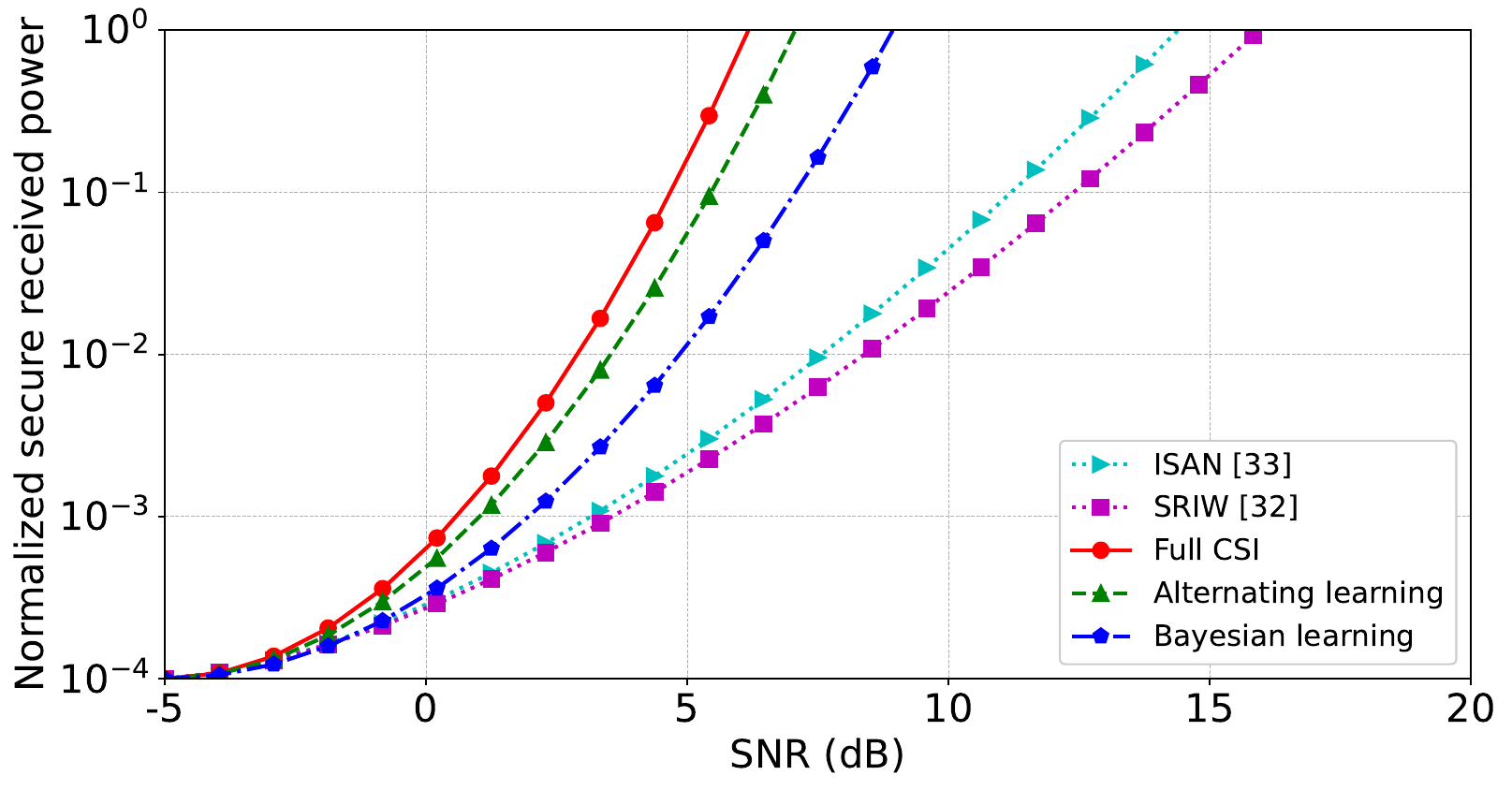}
    \caption{Secure WPT performance comparison in terms of the normalized secure harvested power ratio $\zeta$ versus SNR for $N=32$ IRS elements. Higher $\zeta$ indicates better secure power transfer efficiency. The proposed methods (Alternating learning and Bayesian learning) are compared against the full CSI benchmark and baseline schemes SRIW~\cite{Zhang2022} and ISAN~\cite{Xu2023}.}
    \label{fig:msenrefcomp}
\end{figure}

Fig.~\ref{fig:msenrefcomp} compares secure WPT performance via $\zeta = \frac{P_{\text{harv}}}{P_{\text{tx}}P_{\text{leak}}}$, where larger \(\zeta\) indicates more power harvested by honest users relative to eavesdroppers. The full‐CSI benchmark achieves the highest \(\zeta\), with Alternating learning closely approaching this optimum. Bayesian learning, despite lacking CSI, converges to near‐optimal \(\zeta\) at high SNR. In contrast, SRIW~\cite{Zhang2022} and ISAN~\cite{Xu2023} exhibit lower \(\zeta\) due to non‐adaptive designs, underscoring the advantage of our dynamic optimization framework.

\section{Conclusion}
\label{sec:conclusion}
This paper introduced a robust and secure opportunistic spectrum access framework using a Bayesian approach. Our distributed learning algorithm efficiently senses available spectrum in dynamic and hostile environments, feeding into a novel IRS-assisted downlink transmission scheme. Notably, the framework operates with partial or without CSI by leveraging alternating and Bayesian optimization to minimize MSE in high-dimensional settings. It addresses high-dimensional optimization challenges while supporting versatile applications such as multi-user MIMO and wireless power transfer, and ensuring fairness among users. Simulation results validate our solutions, providing a secure, adaptive, and efficient foundation for scalable, resilient, and intelligent 6G networks.

{\footnotesize
\bibliographystyle{IEEEtran}

}

\appendices

{\small
\setlength{\parskip}{0pt}
\setlength{\itemsep}{0pt}           
\setlength{\abovedisplayskip}{2pt plus 1pt minus 1pt}
\setlength{\belowdisplayskip}{2pt plus 1pt minus 1pt}
\setlength{\abovedisplayshortskip}{1pt}
\setlength{\belowdisplayshortskip}{1pt}

\section*{Appendix A: Proof of Distributed Algorithm Convergence and Accuracy}
\label{appendix:detection}

Consider a network comprising $K_H$ honest users indexed by $\mathcal{K}_H$, and $K_B$ Byzantine users indexed by $\mathcal{K}_B$. For each honest user $k \in \mathcal{K}_H$, define the log-belief
\begin{equation}
    \psi_k(t) = \ln\!\left( \frac{\pi_k(t)}{1 - \pi_k(t)} \right),
\end{equation}
which evolves according to
\begin{equation}
    \psi_k(t) = \psi_k(t{-}1) + \ell_k(t),
\end{equation}
where $\ell_k(t) = \ln L_k(\boldsymbol{y}_k(t))$ is the log-likelihood ratio of the local observation $\boldsymbol{y}_k(t)$. Byzantine users may arbitrarily manipulate their own decision values.

Assume the log-likelihood ratios $\{\ell_k(t)\}_t$ for honest users are i.i.d. with
\begin{equation}
    \mathbb{E}[\ell_k(t)] = D_{\mathrm{KL}}, 
    \qquad 
    \mathrm{Var}[\ell_k(t)] = \sigma_\ell^2 < \infty,
\end{equation}
where $D_{\mathrm{KL}} \equiv D_{\mathrm{KL}}\!\big(g_k(\cdot\mid\mathcal{H}_1)\,\|\,g_k(\cdot\mid\mathcal{H}_0)\big)$ is the Kullback–Leibler divergence between the true and null models. By a Chernoff-type concentration bound, for any $\epsilon>0$ and $k \in \mathcal{K}_H$,
\begin{equation}
    \mathbb{P}\!\left( \left| \frac{1}{t} \sum_{s=1}^{t} \ell_k(s) - D_{\mathrm{KL}} \right| \geq \epsilon \right) 
    \leq 2 \exp\!\left( -\frac{t \epsilon^2}{2 \sigma_\ell^2} \right).
\end{equation}

Each user updates its decision using the Byzantine-resilient trimmed-mean rule of Section~\ref{sec:detection}. Let $\mathcal{N}_k$ denote the neighborhood of user $k$ and assume $|\mathcal{N}_k| \ge 2K_B{+}1$. Form the trimmed set $\hat{\mathcal{N}}_k(t{-}1)$ by removing the $K_B$ largest and $K_B$ smallest values from $\{\delta_i(t{-}1)\}_{i\in\mathcal{N}_k}$. The update uses neighbors’ previous-round values:
\begin{equation}
    \delta_k(t) 
    = \min\!\left\{ \pi_k(t),\ \frac{1}{|\hat{\mathcal{N}}_k(t{-}1)|} \sum_{i \in \hat{\mathcal{N}}_k(t{-}1)} \delta_i(t{-}1) \right\}.
\end{equation}
If $|\mathcal{N}_k| < 2K_B{+}1$, we set $\delta_k(t)=\pi_k(t)$. Note that $|\hat{\mathcal{N}}_k(t{-}1)| \ge |\mathcal{N}_k| - 2K_B \ge 1$.

Let $\mathcal{G}_H$ be the subgraph induced by honest users, and define
\begin{align}
  \gamma 
  &= \min_{k\in\mathcal{K}_H}\bigl|\mathcal{N}_k\cap\mathcal{K}_H\bigr|,\\
  \pi_{k,\infty} 
  &= \lim_{t\to\infty}\pi_k(t)
    \quad\bigl(\text{the belief in the absence of adversaries}\bigr),\\
  e_k(t) 
  &= \bigl|\delta_k(t)-\pi_{k,\infty}\bigr|,\\
  e(t) 
  &= \max_{k\in\mathcal{K}_H}e_k(t).
\end{align}

\emph{Deviation of the trimmed average.} Let $H_k \triangleq \mathcal{N}_k\cap\mathcal{K}_H$ denote the honest neighbors of $k$, with $|H_k|\ge \gamma$. After trimming $K_B$ largest and $K_B$ smallest entries from $\{\delta_i(t{-}1)\}_{i\in\mathcal{N}_k}$, the trimmed set $\hat{\mathcal{N}}_k(t{-}1)$ satisfies $|\hat{\mathcal{N}}_k(t{-}1)| \ge |H_k| - 2K_B$. Even if some adversarial values remain in $\hat{\mathcal{N}}_k(t{-}1)$, their contribution is bounded because $\delta_i(\cdot)\in[0,1]$. Comparing the trimmed mean to the honest-neighbor mean yields
\begin{align}
\left|
\frac{1}{|\hat{\mathcal{N}}_k|}\!\!\sum_{i\in\hat{\mathcal{N}}_k}\!\delta_i(t{-}1)
-\frac{1}{|H_k|}\!\sum_{i\in H_k}\!\delta_i(t{-}1)
\right|
\;\le\; \frac{2K_B}{|H_k|}
\;\le\; \frac{2K_B}{\gamma}.
\label{eq:trim_dev}
\end{align}
The inequality follows from at most $2K_B$ honest values being removed by trimming and the bounded range of $\delta_i$.

Using \eqref{eq:trim_dev} together with concentration for the log-likelihood sums (which control $\pi_k$ via the logit), a union bound over $K_H$ honest users gives, for any $\epsilon > \frac{2K_B}{\gamma}$,
\begin{equation}
    \mathbb{P}\!\left( e(t) \ge \epsilon \right)
    \;\le\; 2 K_H \exp\!\left( - \frac{t \big(\epsilon - \tfrac{2K_B}{\gamma}\big)^2}{8 \sigma_\ell^2} \right).
\end{equation}
Equivalently, for any $\delta\in(0,1)$, with probability at least $1-\delta$,
\begin{equation}
    \max_{k\in\mathcal{K}_H} |\delta_k(t) - \pi_{k,\infty}| 
    \;\le\; \frac{2K_B}{\gamma} \;+\; \sqrt{\frac{8\sigma_\ell^2}{t}\,\log\!\frac{2K_H}{\delta}}.
\end{equation}

Under $\mathcal{H}_1$, $\psi_k(t)=\sum_{s=1}^t \ell_k(s)$ grows linearly with rate $D_{\mathrm{KL}}>0$, so $\pi_k(t)=\sig(\psi_k(t))\to 1$ exponentially. Since $\delta_k(t)\le \pi_k(t)$ by construction and the trimmed mean among neighbors concentrates near the honest mean, we obtain
\begin{equation}
    \limsup_{t\to\infty} \frac{1}{t}\,\ln\!\big(1-\delta_k(t)\big) \;\le\; -\,D_{\mathrm{KL}},
    \qquad \text{a.s.}
\end{equation}

Let $L_H$ be the (combinatorial) Laplacian of $\mathcal{G}_H$, and $\lambda_2(L_H)$ its algebraic connectivity. Standard mixing arguments for linear consensus with bounded disturbances (here, the trimmed-mean bias) yield
\begin{equation}
\begin{split}
\max_{k,j\in\mathcal{K}_H}\bigl|\delta_k(t) - \delta_j(t)\bigr|
&\;\le\;
C\,e^{-t\,\lambda_2(L_H)/2}
\\[-0.25ex]
&\quad+\;
\frac{2K_B}{\gamma}\,\frac{1}{1 - e^{-\lambda_2(L_H)/2}}\,,
\end{split}
\end{equation}
for some $C>0$ depending on the initial dispersion. In particular, when $K_B=0$ the additive term vanishes and exponential consensus holds; for $K_B>0$, the dispersion approaches an $O\!\big(\tfrac{K_B}{\gamma}\big)$ neighborhood of consensus, which shrinks as the honest degree increases.
\vspace{-2em}
\begin{flushright}
    $\blacksquare$
\end{flushright}

\section*{Appendix B: Proof of Convergence of Alternating Algorithm}
\label{app:convergence_proof}

We establish the convergence of the proposed alternating optimization algorithm. For brevity, denote $\mathcal{L}^{(t)} \triangleq \mathcal{L}(\mathcal{X}^{(t)}, \bm{\lambda}^{(t)})$, where $\mathcal{X}^{(t)} = (\bm{W}^{(t)}, \bm{\Phi}^{(t)}, \bm{C}^{(t)})$ are the primal variables. Let $\mathcal{F}_2 \equiv \mathcal{U}=\{\bm{\Phi}:\ |[\bm{\Phi}]_{n,n}|=1,\ \forall n\}$ denote the unit-modulus torus, and let $\mathcal{F}_1$ collect any simple convex bounds imposed on $(\bm{W},\bm{C})$ (the projection onto $\mathcal{F}_1$ can be the identity if no such bounds are used; power and leakage constraints are handled via the augmented Lagrangian terms and dual updates).

\subsection*{Monotonic Descent Property}

Each alternating step (exact or inexact via a projected gradient step) minimizes the augmented Lagrangian over one block while fixing the others:
\begin{IEEEeqnarray}{rCl}
&(\bm{W}^{(t+1)}, \bm{C}^{(t+1)}) \in \arg\min_{\bm{W},\,\bm{C} \in \mathcal{F}_1} \mathcal{L}(\bm{W}, \bm{\Phi}^{(t)}, \bm{C}, \bm{\lambda}^{(t)}), \IEEEyesnumber \label{eq:proof_wc_update} \\
&\bm{\Phi}^{(t+1)} \in \arg\min_{\bm{\Phi} \in \mathcal{F}_2} \mathcal{L}(\bm{W}^{(t+1)}, \bm{\Phi}, \bm{C}^{(t+1)}, \bm{\lambda}^{(t)}). \IEEEyesnumber \label{eq:proof_phi_update}
\end{IEEEeqnarray}
In implementation we use projected gradient steps with diminishing step sizes $\alpha_t,\beta_t>0$ satisfying Robbins--Monro conditions:
\begin{align}
\bm{W}^{(t+1)} &= \Pi_{\mathcal{F}_1} \!\left( \bm{W}^{(t)} - \alpha_t \nabla_{\bm{W}} \mathcal{L}(\bm{W}^{(t)}, \bm{\Phi}^{(t)}, \bm{C}^{(t)}, \bm{\lambda}^{(t)}) \right), \nonumber\\
\bm{C}^{(t+1)} &= \Pi_{\mathcal{F}_1} \!\left( \bm{C}^{(t)} - \alpha_t \nabla_{\bm{C}} \mathcal{L}(\bm{W}^{(t)}, \bm{\Phi}^{(t)}, \bm{C}^{(t)}, \bm{\lambda}^{(t)}) \right), \\
\bm{\Phi}^{(t+1)} &= \Pi_{\mathcal{F}_2} \!\left( \bm{\Phi}^{(t)} - \beta_t \nabla_{\bm{\Phi}} \mathcal{L}(\bm{W}^{(t+1)}, \bm{\Phi}^{(t)}, \bm{C}^{(t+1)}, \bm{\lambda}^{(t)}) \right).  \nonumber
\end{align}
Assuming Lipschitz-continuous gradients in each block and sufficiently small step sizes, the standard descent lemma yields
\begin{equation}
\begin{split}
\mathcal{L}^{(t+1)} 
&\le \mathcal{L}^{(t)}
- \tfrac{\eta_t}{2}\Bigl(
    \|\bm{W}^{(t+1)} - \bm{W}^{(t)}\|_F^2 
    + \|\bm{\Phi}^{(t+1)} - \bm{\Phi}^{(t)}\|_F^2 \\[-0.5ex]
&\quad\qquad\qquad\qquad
    + \|\bm{C}^{(t+1)} - \bm{C}^{(t)}\|_F^2
  \Bigr)\,.
\end{split}
\end{equation}

with $\eta_t = \min\{\alpha_t,\beta_t\}$, establishing monotonic decrease of the augmented Lagrangian.

\subsection*{Stationary Point Convergence}

Because $\mathcal{F}_2$ is compact and the iterates are bounded by construction (via the penalty and dual terms, and any simple bounds in $\mathcal{F}_1$), the sequence $\{\mathcal{X}^{(t)}\}$ admits limit points. Let $\mathcal{X}^*=(\bm{W}^*,\bm{\Phi}^*,\bm{C}^*)$ be a limit point of a convergent subsequence $\{\mathcal{X}^{(t_k)}\}$. First-order optimality of each block subproblem implies the variational inequalities
\begin{IEEEeqnarray}{rCl}
\langle \nabla_{\bm{W}} \mathcal{L}(\mathcal{X}^*, \bm{\lambda}^*),\, \bm{W} - \bm{W}^* \rangle &\geq& 0, \quad \forall \bm{W} \in \mathcal{F}_1, \IEEEyessubnumber \\
\langle \nabla_{\bm{\Phi}} \mathcal{L}(\mathcal{X}^*, \bm{\lambda}^*),\, \bm{\Phi} - \bm{\Phi}^* \rangle &\geq& 0, \quad \forall \bm{\Phi} \in \mathcal{F}_2, \IEEEyessubnumber \\
\langle \nabla_{\bm{C}} \mathcal{L}(\mathcal{X}^*, \bm{\lambda}^*),\, \bm{C} - \bm{C}^* \rangle &\geq& 0, \quad \forall \bm{C} \in \mathcal{F}_1, \IEEEyessubnumber
\end{IEEEeqnarray}
so $\mathcal{X}^*$ is a stationary point of the constrained augmented Lagrangian problem.

\subsection*{Dual Variable Convergence}

Dual variables are updated by projected gradient ascent using the most recent primal iterates:
\begin{IEEEeqnarray}{rCl}
\lambda_1^{(t+1)} &=& \left[ \lambda_1^{(t)} + \rho \left( \sum_{e=1}^{K_B} P_e^{\mathrm{(sig)}}(\mathcal{X}^{(t+1)}) - \Gamma_{\mathrm{leak}} \right) \right]_+, \IEEEyessubnumber \\
\lambda_2^{(t+1)} &=& \left[ \lambda_2^{(t)} + \rho \left( \|\bm{W}^{(t+1)}\|_F^2 - P_{\mathrm{max}} \right) \right]_+. \IEEEyessubnumber
\end{IEEEeqnarray}
Since $\mathcal{L}^{(t)}$ decreases and constraint violations are penalized quadratically, the violations diminish, and complementary slackness holds asymptotically:
\begin{equation}
\lim_{t \to \infty}
\left\| \begin{bmatrix}
\sum_{e=1}^{K_B} P_e^{\mathrm{(sig)}}(\mathcal{X}^{(t)}) - \Gamma_{\mathrm{leak}} \\
\|\bm{W}^{(t)}\|_F^2 - P_{\mathrm{max}}
\end{bmatrix} \right\| = 0.
\end{equation}
This establishes asymptotic consistency of the dual variables and completes the proof.
\qed

\section*{Appendix C: Proof of Convergence of Bayesian Algorithm}
\label{appendix:bo_proof}

Let $\mu_{t-1}(\boldsymbol{z})$ and $\sigma_{t-1}(\boldsymbol{z})$ denote the GP posterior mean and standard deviation at step $t{-}1$. By Theorem~2 of~\cite{lang2022geometric}, with probability at least $1-\delta$ the confidence bound holds uniformly over $\mathcal{Z}$:
\begin{equation}
\big|f(\boldsymbol{z}) - \mu_{t-1}(\boldsymbol{z})\big|
\le \sqrt{\beta_t}\, \sigma_{t-1}(\boldsymbol{z}), \qquad \forall\,\boldsymbol{z}\in\mathcal{Z}.
\end{equation}

Define the instantaneous regret $r_t := f(\boldsymbol{z}_t) - f(\boldsymbol{z}^*)$, where $\boldsymbol{z}^*$ is a best feasible design in the sense of Section~\ref{sec:bayesian_convergence}. Then
\begin{equation}
\begin{split}
r_t
&= \bigl(f(\boldsymbol{z}_t)-\mu_{t-1}(\boldsymbol{z}_t)\bigr)
   + \bigl(\mu_{t-1}(\boldsymbol{z}_t)-\mu_{t-1}(\boldsymbol{z}^*)\bigr)\\
&\quad+ \bigl(\mu_{t-1}(\boldsymbol{z}^*)-f(\boldsymbol{z}^*)\bigr)\\
&\le \sqrt{\beta_t}\,\sigma_{t-1}(\boldsymbol{z}_t)
    + \Delta_t
    + \sqrt{\beta_t}\,\sigma_{t-1}(\boldsymbol{z}^*)\,.
\end{split}
\end{equation}

where $\Delta_t := \mu_{t-1}(\boldsymbol{z}_t)-\mu_{t-1}(\boldsymbol{z}^*)$.

Under acquisition based on Expected Improvement with Constraints (EIC)—implemented by maximizing an improvement score multiplied by the posterior feasibility probability—one has the standard control
\begin{equation}
\Delta_t \le \sqrt{\beta_t}\,\big(\sigma_{t-1}(\boldsymbol{z}^*) - \sigma_{t-1}(\boldsymbol{z}_t)\big),
\end{equation}
which yields
\begin{equation}
r_t \le 2\sqrt{\beta_t}\,\sigma_{t-1}(\boldsymbol{z}_t).
\end{equation}

Summing over $t=1,\ldots,T$ and using Cauchy–Schwarz,
\begin{equation}
\begin{aligned}
R_T 
&= \sum_{t=1}^T r_t 
  \;\le\; 2 \sqrt{\beta_T}\sum_{t=1}^T \sigma_{t-1}(\boldsymbol{z}_t) \\[0.5ex]
&\le\; 2 \sqrt{\beta_T\,T}\Bigl(\sum_{t=1}^T \sigma_{t-1}^2(\boldsymbol{z}_t)\Bigr)^{\!1/2}
  \;\le\; \sqrt{C_1\,T\,\beta_T\,\gamma_T}\,.
\end{aligned}
\label{eq:regret_bound}
\end{equation}
where $C_1>0$ is a constant depending on the kernel and noise level, and $\gamma_T$ is the maximum information gain,
\begin{equation}
\gamma_T \;=\; \max_{\{\boldsymbol{z}_t\}_{t=1}^T}\;
\frac{1}{2}\log\det\!\left(\boldsymbol{I}_T + \frac{1}{\sigma^2}\,\boldsymbol{K}_T\right),
\end{equation}
with $\boldsymbol{K}_T$ the $T\times T$ kernel matrix $[\boldsymbol{K}_T]_{ij}=k(\boldsymbol{z}_i,\boldsymbol{z}_j)$.

Feasibility control follows from the probabilistic constraint handling in the acquisition. If the selection step restricts candidates to those satisfying a probability-of-feasibility filter
\begin{equation}
\mathbb{P}\big(g_j(\boldsymbol{z}_t)\le 0\big) \;\ge\; 1-\delta, \qquad \forall j,
\end{equation}
then the violation probability decays exponentially under sub-Gaussian noise and calibrated GP posteriors:
\begin{equation}
\mathbb{P}\!\left(g_j(\boldsymbol{z}_t) > 0\right) \;\le\; \exp\!\left(-\frac{C_3 t}{\sigma^2}\right),
\end{equation}
for some $C_3>0$, which implies via the Borel–Cantelli lemma~\cite{BeresnevichVelani2023} that constraint violations occur only finitely often almost surely. This completes the proof.
\qed}

\end{document}